\DeclareMathAlphabet{\pazocal}{OMS}{zplm}{m}{n}
\SetMathAlphabet\pazocal{bold}{OMS}{zplm}{bx}{n}
\DeclareMathAlphabet{\mathdutchcal}{U}{dutchcal}{m}{n}
\DeclareMathOperator{\Var}{Var}
\newcommand{\mb}{\mathbf}
\newcommand{\mc}{\mathcal}
\newcommand{\bsb}{\boldsymbol}
\newcommand{\op}[1]{o_{P}({#1})}
\newcommand{\bbR}{\mathbbm{R}}
\newcommand{\Exp}{\mathbb{E}}
\newcommand{\Prob}{\mathbb{P}}
\newcommand{\argmax}{\operatornamewithlimits{\arg\max}}
\newcommand{\mathph}[1]{\mathrel{\phantom{#1}}}
\newcommand{\pto}{\xrightarrow{P}}
\newcounter{counter_m}
\newcommand{\balpha}{\bm{\alpha}}
\newcommand{\bb}{\bm{b}}
\newcommand{\bc}{\bm{c}}
\newcommand{\bbbeta}{\bm{\beta}}
\newcommand{\bphi}{\bm{\phi}}
\newcommand{\bgamma}{\bm{\gamma}}
\newcommand{\bcK}{\bm{\mathcal{K}}}
\newcommand{\bSigma}{\bm{\Sigma}}
\newcommand{\btheta}{\bm{\theta}}
\newcommand{\bV}{\bm{V}}
\newcommand{\bx}{\mathbf{x}}
\newcommand{\bZ}{\bm{Z}}
\newcommand{\bz}{\mathbf{z}}
\newcommand{\tbz}{\tilde{\mathbf{z}}}
\newcommand{\cD}{\mathcal{D}}
\newcommand{\cd}{\mathcal{d}}
\newcommand{\cK}{\mathcal{K}}
\newcommand{\cN}{\mathcal{N}}
\newcommand{\cn}{\mathcal{n}}
\newcommand{\cS}{\mathcal{S}}
\newcommand{\bkappa}{\bm{\kappa}}
\newcommand{\hbkappa}{\hat{\bm{\kappa}}}
\newcommand{\bzeta}{\bm{\zeta}}
\newcommand{\hbzeta}{\hat{\bm{\zeta}}}
\newcommand{\hM}{\hat{M}}
\numberwithin{equation}{section}
\newtheorem{Assu}{Assumption}[section]
\Crefname{Assu}{Assumption}{Assumptions}
\Crefname{Cond}{Condition}{Conditions}
\Crefname{Exam}{Example}{Examples}
\newtheorem{Theo}{Theorem}[section]
\Crefname{Theo}{Theorem}{Theorems}
\newtheorem{Prop}{Proposition}[section]
\Crefname{Prop}{Proposition}{Propositions}
\newtheorem{Lemm}{Lemma}[section]
\Crefname{Lemm}{Lemma}{Lemmas}
\Crefname{Coro}{Corollary}{Corollaries}
\Crefname{Conj}{Conjecture}{Conjectures}
\Crefname{Algo}{Algorithm}{Algorithms}
\newtheorem{Defi}{Definition}[section]
\Crefname{Defi}{Definition}{Definitions}
\Crefname{Hypo}{Hypothesis}{Hypotheses}
\theoremstyle{remark}
\Crefname{Ques}{Question}{Questions}
\newtheorem{Rema}{Remark}[section]
\Crefname{Rema}{Remark}{Remarks}
\newcommand{\bbb}{\bm{b}}
\newcommand{\ep}{\varepsilon}
\begin{document}

\title{On a penalised likelihood approach for joint modelling of longitudinal covariates and partly interval-censored 
data -- 
an application to the Anti-PD1 brain collaboration trial }
\author[1,2]{Annabel Webb \footnote{To whom correspondence should be addressed: annabel.webb@mq.edu.au}}
\author[1]{Nan Zou}
\author[3,4]{Serigne N Lo}
\author[1]{Jun Ma}
\affil[1]{School of Mathematical and Physical Sciences, Macquarie University, Australia}
\affil[2]{Cerebral Palsy Alliance Research Institute, Faculty of Medicine and Health, The University of Sydney, Australia}
\affil[3]{Melanoma Institute Australia, Sydney, Australia}
\affil[4]{Faculty of Medicine and Health, The University of Sydney, Australia}
\date{}

\maketitle
\thispagestyle{empty}

\begin{abstract}
This article considers the joint modeling of longitudinal covariates and partly-interval censored time-to-event data. Longitudinal time-varying covariates play a crucial role in obtaining accurate clinically relevant predictions using a survival regression model. However, these covariates are often measured at limited time points and may be subject to measurement error. Further methodological challenges arise from the fact that, in many clinical studies, the event times of interest are interval-censored. A model that simultaneously  accounts for all these factors is expected to improve the accuracy of survival model estimations and predictions. In this article, we consider joint models that combine longitudinal time-varying covariates with the Cox model for time-to-event data which is subject to interval censoring. The proposed model employs a novel penalised likelihood approach for estimating all parameters, including the random effects. The covariance matrix of the estimated parameters can be obtained from the penalised log-likelihood. The performance of the model is compared to an existing method under various scenarios. The simulation results demonstrated that our new method can provide reliable inferences when dealing with interval-censored data. Data from the Anti-PD1 brain collaboration clinical trial in advanced melanoma is used to illustrate the application of the new method.
\end{abstract}
\thispagestyle{empty}

\newpage
\setcounter{page}{1}
\section{Introduction}

The simultaneous observation of longitudinal covariates and time-to-event data has been increasingly common in medical research. For example, new immunotherapies have revolutionized cancer treatment in breast and melanoma cancers. However, not every patient benefits from these new drugs. Additionally, they are expensive and can cause significant toxicity. Therefore, it is crucial for advanced cancer patients receiving immunotherapy to be closely monitored to assess their response or resistance to treatment. During these follow-up visits, radiologic imaging and blood samples are collected and considered crucial for patient management. For instance, in stage IV melanoma patients, certain baseline tumor markers and blood serum levels have been identified as independent factors associated with treatment outcomes. Throughout treatment, physicians intuitively adjust their prognosis predictions at each clinical visit based on changes in clinical status, tumor markers, and blood serum levels \citep{Gide_2019_abbrv, Owen_2020_abbrv}. Fortunately, novel statistical methods exist that combine longitudinal biomarker data and time-to-event outcomes to dynamically estimate individual patients' risk during treatment. These models, known as joint models, have received significant attention in the statistical literature (e.g., \cite{tsiatis_2004_JM_Overview,Rizo_2012}). When implemented in clinical settings, such models can guide treatment decisions and facilitate the early identification of patients who are resistant to existing treatments, thereby determining their eligibility for evaluating new agents.

A typical approach to joint modelling involves defining a joint likelihood for the simultaneous estimation of parameters from a time-to-event model and a model for the longitudinal observations, conditional on a random effects distribution. The random effects are used in the longitudinal model to capture individual variations in the time-varying covariate. To make statistical inferences on this type of model, a common approach is maximum likelihood estimation (MLE), usually via an expectation-maximisation (EM) algorithm e.g., \cite{Rizo_2012, song2002semiparametric, zeng2005asymptotic}. Bayesian approaches have also been considered (\cite{brown2003bayesian, rizo_2011_BayesianSP_JM} among others), as have other approaches such as h-likelihood \citep{ha_2017_hlikJM} and exact likelihood \citep{barrett_2015_JM_exactLikelihood}.

In the context of monitoring long-term diseases such as advanced melanoma, where patients are followed-up at intermittent time points, it is natural that the event of interest may be subject to partly-interval censoring. 
 Partly interval-censored datasets \citep{Kim03} 
refer to datasets where some of the event times may be known exactly, while other event times are known only to lie within a certain time interval. Where the event is known to fall between two observation points for a subject, this subject is considered interval censored \citep{sun2006statistical}. Partly-interval censored datasets may also include right-censored subjects, where the event has not occurred up to the last observation time, or left-censored subjects, where the event occurred before the start of the study. To our knowledge, in the joint modelling literature only a small number of studies have considered interval-censored event times, including \cite{chen_2018_TJM_IC, yi_2020_IC_TVC, wu_2020_multivariate_IC_TVC}.

Previously, penalised likelihood methods have been proposed for fitting a Cox model to partly-interval censored data \citep{MaCoHeIa21}. This penalised likelihood approach incorporates spline approximation of the baseline hazard function, with a roughness penalty in the likelihood function to regularise this approximation, and the non-negativity of the baseline hazard guaranteed by using a Newton multiplicative-iterative (MI) algorithm \citep{ChMa12}. This approach has been applied in the context of time-varying covariates \citep{WeMa23}, but not in joint modelling. In this article, we develop a maximum penalised likelihood estimation procedure for fitting joint models to partly-interval censored survival data, extending the approach in \cite{MaCoHeIa21}. Model parameters are estimated in two stages. Firstly, non-variance component parameters, including regression coefficients and the random effects, are estimated using a full joint likelihood. Then, 
secondly, all the variance components, including the smoothing parameter for the penalty term, are estimated using an approximated marginal likelihood \citep{Wood_REML_2011}.

This paper is organised as follows. Section \ref{sec:model} explains the joint model under partly-interval censoring and introduces the notation used throughout this paper. Section \ref{sec:MPLesti} develops the maximum penalised likelihood method for estimating the model parameters, including the estimation of the baseline hazard, which requires a non-negativity constraint. Section \ref{sec:asymp} provides a large sample distribution results and gives details of the variance estimation.
Section \ref{sec:simu} describes a simulation study for evaluation of our proposed method and reports 
the simulation results. As an illustration, the proposed method is applied to real data from a clinical trial investigating the effectiveness of immunotherapy in advanced melanoma patients, as detailed in Section \ref{sec:app}, and Section \ref{sec:concl} presents some concluding remarks and future directions for this research.

\section{Model setup  
} \label{sec:model}
 
We consider a sample of $n$ individuals, who are followed intermittently until the event-of-interest is observed. Let $Y_i$ denote the event time of interest for individual $i$, $i=1,\ldots, n$. Due to censoring, $Y_i$ may not be observed exactly. In this article, we assume $Y_i$ is partly interval-censored, meaning it can be exactly observed or left-, right- or interval-censored \citep{Kim03}. 
To accommodate interval censoring, we denote, for individual $i$, a pair of random time points $T_i^L$ and $T_i^R$ such that $Y_i \in [T_i^L, T_i^R]$. We denote a pair of observed $T_i^L, T_i^R$ by $t_i^L, t_i^R$. When $t_i^L=t_i^R=t_i$ then $Y_i=t_i$ so that $t_i$ is the observed event time. If $t_i^L=0$ and $t_i^R$ is finite then $Y_i$ is left-censored at $t_i^R$; if $t_i^L\neq 0$ and $t_i^R=\infty$ then $Y_i$ is right-censored at $t_i^L$; otherwise, $Y_i$ is interval-censored between $t_i^L$ and $t_i^R$.
We assume independent interval censoring given the covariates. A definition of independent interval censoring can be found in, for example, Chapter 1 of \cite{sun2006statistical}. We assume every $i$ possesses both time-fixed and time-varying covariates, and their values are denoted by, respectively, $\mb{x}_i=(x_{i1}, \ldots, x_{ip})^\top$ and $\mb{z}_i(t)=(z_{i1}(t), \ldots, z_{iq}(t))^\top$. The time-fixed covariate values are usually collected at the baseline time. 

First, we introduce a 
Cox model containing time-fixed and time-varying covariates. Based on the time-varying covariates vector $\mb{z}_i(t)$, the corresponding history $\mc{Z}_i(t)$ is given by 
$
\mc{Z}_i(t) = \{\mb{z}_i(s): 0\leq s \leq t \}
$
and the Cox model is given by
\begin{equation}\label{eq1}
h_i(t | \mb{x}_i, \mc{Z}_i(t)) = h_0(t) e^{\mathbf{x}_i^\top \boldsymbol{\beta} + \mathbf{z}_i(t)^\top \boldsymbol{\gamma}},
\end{equation}
where $h_0(t)~ (\geq 0)$ is the non-parametric baseline hazard function. 
Therefore, this is a semi-parametric model. We 
consider $h_0(t)$ as a smooth function throughout this paper.

Instead of observing the entire trajectory $\mb{z}_i(t)$ for the time-varying covariates for individual $i$, in practice we may only observe its measurement error contaminated values, denoted by $\tilde{\mb{z}}_{i}(t_{i1}),\dots, \tilde{\mb{z}}_{i}(t_{in_i})$, at the intermittent time points $t_{i1}, \dots, t_{in_i}$. We assume the following model for these error contaminated  measurements:
\begin{equation}\label{Equa:dataLongitude}
    \tilde{\mb{z}}_{i}(t_{ia}) = \mb{z}_{i}(t_{ia}) + \bsb{\varepsilon}_i(t_{ia}),
\end{equation}
for $a = 1,\dots,n_i$, where $n_{i}$ can differ for different $i$ and $\bsb{\varepsilon}_i(t_{ia})$ denotes the $q\times 1$ vector for measurement errors. At a given $t_{ia}$, we assume for each $i$ and $a$ that $\bsb{\varepsilon}_i(t_{ia})$ is multivariate normal 
$\mathcal{N}(0, \sigma_{\varepsilon}^2\mb{I}_q)$, where $\mb{I}_q$ denotes an identity matrix of the dimension $q\times q$. For different $i$ and different $a$, these $\bsb{\varepsilon}_i(t_{ia})$'s are assumed independent.

The unobserved time-varying covariate values $z_{ir}(t)$ need to be estimated as they are demanded in the Cox model (\ref{eq1}). We consider each $z_{ir}(t)$ to be a smooth function which represents an individual trajectory perturbed from an overall mean trajectory.
Based on these assumptions, we can model  $z_{ir}(t)$ as:
\begin{equation}\label{eq:longit}
    z_{ir}(t) = \mu_r(t) + v_{ir}(t).
\end{equation}   
This represents a combination of a mean trajectory $\mu_r(t)$, common to all individuals, and an individualized random function $v_{ir}(t)$ which allows $z_{ir}(t)$ to deviate from the mean trajectory. 
Here, both $\mu_r(t)$ and $v_{ir}(t)$ are approximated using flexible basis functions, such as polynomials or splines, so that: 

$$ \mu_r(t) = \sum_{l' = 1}^{b} \alpha_{l'r}\phi_{l'r}(t);~ 
    v_{ir}(t) = \sum_{l = 1}^{c}  \kappa_{ilr}\xi_{lr}(t),
$$
where we assume $\alpha_{l'r}$ are fixed coefficients and $\kappa_{ilr}$ are random coefficients. Specifically, we assume, for different $i$, $\kappa_{ilr}$ 
are independently distributed with $\kappa_{ilr} \sim N(0, \sigma_{lr}^2)$. Clearly, the random coefficients $\kappa_{ilr}$ introduce within cluster $r$ dependence for different $i$.
We can express (\ref{eq:longit}) using matrices by 
\begin{equation} \label{Equa:tvcmodel}
 z_{ir}(t) = \bsb\phi_r(t)^\top\bsb\alpha_r+\bsb\xi_r(t)^\top\bsb\kappa_{ir},
\end{equation} 
where $\bsb\phi_r(t)=(\phi_{1r}(t), \ldots, \phi_{br}(t))^\top$, $\bsb\xi_r(t)=(\xi_{1r}(t), \ldots, \xi_{cr}(t))^\top$, $\bsb\alpha_r=(\alpha_{1r}, \ldots, \alpha_{br})^\top$ and $\bsb\kappa_{ir}=(\kappa_{i1r}, \ldots, \kappa_{icr})^\top$. 
The distributional assumption made about $\kappa_{ilr}$ can be written collectively as 
$\bsb\kappa_{ir} \sim \mathcal{N}(\mb{0}_{c\times 1}, \bsb\Sigma_{r})$, where $\bsb\Sigma_r = \text{diag}(\sigma_{1r}^2, \ldots, \sigma_{cr}^2)$.
Let $\bsb\kappa_i = (\bsb\kappa_{i1}^\top, \ldots, \bsb\kappa_{iq}^\top)^\top$ and $\bsb\alpha=(\bsb\alpha_1^\top, \ldots, \bsb\alpha_q^\top)^\top$.  
We comment that it is possible to also include baseline covariates (and interactions between baseline covariates and functions of time) in the model (\ref{Equa:dataLongitude}) (e.g., \cite{tsiatis_2004_JM_Overview}). However, as inclusion of baseline covariates will not complicate computations, we do not explicitly consider this option here 
to simplify discussions. 

Since a non-parametric baseline hazard $h_0(t)$ is an infinite dimensional parameter, direct estimation of this function from a finite number of observations is an ill-posed problem. We avoid this issue by restricting $h_0(t)$ to a finite dimensional functional space spanned by non-negative basis functions, where the dimension of this space is allowed to grow slowly with the sample size $n$. Thus, an approximated $h_0(t)$ is:
\begin{equation}\label{eq:approxBase}
    h_0(t) = \sum_{u=1}^m \theta_u \psi_u(t),
\end{equation}
where $\psi_u(t) \geq 0$ ($u=1, \ldots, m$) are typically determined using knots. Common choice of $m$ includes $m \approx n_0^{1/3}$, where $n_0$ is the non-right censored sample size. Common basis functions include M-spline, B-spline, Gaussian density function or even indicator functions; see \cite{MaCoHeIa21}. 
Let $\bsb{\theta} = (\theta_1, ..., \theta_m)^{\top}$, represent a vector of basis coefficients.  

Let $\tilde{\mb{z}}_i = (\tilde{\mb{z}}_i(t_{i1}), \ldots, \tilde{\mb{z}}_i(t_{in_i}))^\top$. This  represents the noise contaminated longitudinal covariates values for $i$, and it has the dimension of $n_i \times q$.
Then, from partly interval-censored event time observations $[t_i^L, t_i^R]$, and the associated covariates vectors $\mb{x}_i$ and $\tilde{\mb{z}}_i$, the data for individual $i$ can be expressed as $(t_i^L, t_i^R, \delta_i, \delta_i^L, \delta_i^R, \delta_i^I, \mb{x}_i, \tilde{\mb{z}}_i)$, where $\delta_i$ is the indicator for event times and $\delta_i^L, \delta_i^R, \delta_i^I$ are respectively indicators for a left, right or interval censoring times. 

In this article, we will develop maximum penalised likelihood estimates for the regression coefficient vectors $\bsb\beta$ and $\bsb\gamma$, as well as the baseline hazard coefficient vector $\bsb\theta$, the longitudinal fixed effect vector $\bsb\alpha$ and the longitudinal random effects vector $\bsb\kappa=(\bsb\kappa_{1}^\top, \ldots, \bsb\kappa_{n}^\top)^\top$, using data from all the individuals. Our approach differs from the conventional joint modelling methods, such as \cite{tsiatis_2004_JM_Overview} and \cite{Rizo_2012}, where an integration-out of random effects or an expectation-maximization (EM) algorithm are adopted. For the EM algorithm, the random effects are typically treated as missing values and therefore are not directly estimated.

\section{Penalised likelihood estimation 
} \label{sec:MPLesti}

\subsection{Penalised likelihood}

Following \cite{WeMa23}, we first re-write the Cox model in (\ref{eq1}) as 
\begin{equation}
 h_i(t | \mb{x}_i, \mc{Z}_i(t)) = h_0^*(t | \mc{Z}_i(t)) e^{\mb{x}_i^\top \bsb\beta},
\end{equation}
where $h_0^*(t | \mc{Z}_i(t)) = h_0(t) e^{\mb{z}_i(t)\bsb\gamma}$. The corresponding cumulative hazard is: 
\begin{equation}
 H_i(t | \mb{x}_i, \mc{Z}_i(t)) = H_0^*(t | \mc{Z}_i(t)) e^{\mb{x}_i^\top \bsb\beta} 
\end{equation}
with $H_0^*(t | \mc{Z}_i(t)) = \int_0^t h_0(s) e^{\mb{z}_i(s)^\top \bsb\gamma} ds$, which depends on the history of the time-varying covariates vector $\mb{z}_i(t)$ up to time $t$. In the following discussions, we will simply denote $H_0^*(t | \mc{Z}_i(t))$, $H_i(t | \mb{x}_i, \mc{Z}_i(t))$ and $S_i(t | \mb{x}_i, \mc{Z}_i(t))$ by $H_0^*(t)$, $H_i(t)$ and $S_i(t)$ respectively. Since $H_0^*(t)$ depends on $\bsb\alpha$ and $\bsb\kappa$ through $\mb{z}_i(t)$, so will $H_i(t)$ and $S_i(t) = \exp(-H_i(t))$.

We wish to estimate simultaneously the parameters of the Cox regression model \eqref{eq1} 
and the longitudinal model \eqref{Equa:dataLongitude}, including the fixed and random effects parameters. Assuming that the random interval $[T_i^L, T_i^R]$ and $\tilde{\mb{z}}_i(t)$ are conditionally independent given the random effects vector $\bsb\kappa$, an assumption usually required for joint modelling \citep{Rizo_2012}, we can write the joint log-likelihood from the observed partly interval-censored survival data and the longitudinal data as: 
\begin{align}
     l (\bsb\beta, \bsb\gamma, \bsb\theta, \bsb{\alpha}, \bsb\kappa      ) = & \sum_{i=1}^{n} l_i(\bsb\beta, \bsb\gamma, \bsb\theta, \bsb\alpha \,|\, \bsb\kappa)  
     -\frac{1}{2\sigma_{\varepsilon}^2} \sum_{i=1}^n\sum_{a=1}^{n_i}\|\tilde{\mb{z}}_{i}(t_{ia}) - \mb{z}_i(t_{ia})\|^2- \frac{N}{2} \ln \sigma_{\varepsilon}^2   \nonumber \\
     &  - \frac{n}{2}\sum_{r=1}^q \sum_{l=1}^c \ln \sigma_{lr}^2 - \frac{1}{2}\sum_{i=1}^n\sum_{r=1}^q \bsb\kappa_{ir}^\top\bsb\Sigma_r^{-1}\bsb\kappa_{ir}  , \label{Equa:full_loglik}
\end{align}
where $\mb{z}_i(t_{ia})$ 
is a time-varying covariates vector for $i$ at time $t_{ia}$ and 
$N=\sum_i n_i$. In (\ref{Equa:full_loglik}), $l_i(\boldsymbol{\beta}, \boldsymbol{\gamma}, \boldsymbol{\alpha}, \boldsymbol{\theta} ,|, \boldsymbol{\kappa})$ represents the conditional log-likelihood of the survival data given $\boldsymbol{\kappa}$ for individual $i$. This log-likelihood is given by
\begin{align*}
 l_i(\bsb\beta, \bsb\gamma,  \bsb\theta, \bsb\alpha \,|\, \bsb\kappa)=  &\delta_i \big(\ln h_0(t_i) + \mb{x}_i^{\top} \bsb{\beta} + \mb{z}_i(t_i)^{\top} \bsb{\gamma} - H(t_i)\big) - \delta_i^R H(t_i^{L}) \nonumber \\
 &+\delta_i^L \ln(1 - S(t_i^{R}))
     + \delta_i^I \ln (S(t_i^L) - S(t_i^R)). \label{iloglik}
\end{align*}

Since $h_0(t)$ is assumed to be a smooth function, we can enforce this smoothness by either selecting a small number of bases or choosing a moderate number of bases accompanied by a roughness penalty function. The benefit of a roughness penalty is that it reduces the impact of the number and location of knots. 
In this paper we use the penalty
given by $J_1(\bsb\theta)=\int h_0''(t)^2 dt = \bsb\theta^\top \mb{R}_{\bsb\theta} \bsb\theta$, where the $(u, v)$-th element of matrix $\mb{R}_{\bsb\theta}$ is $\int \psi_u''(t)\psi_v''(t)dt$. Similarly, penalty functions can also be used for estimating $\boldsymbol{\alpha}$ when, for example, the smoothness assumption is imposed on
the $\mu_r(t)$'s, where roughness penalties are 
given by $\sum_{r=1}^q\bsb\alpha_r^\top \mb{R}_{\bsb\alpha, r} \bsb\alpha_r$, and matrices 
$\mb{R}_{\bsb\alpha, r}$ have similar structure as $\mb{R}_{\bsb\theta}$.

Using these penalty functions, we define 
the following penalised log-likelihood function:
$$
[ \Phi(\bsb\beta, \bsb\gamma, \bsb\theta, \bsb{\alpha}, \bsb\kappa) = l(\bsb\beta, \bsb\gamma, \bsb\theta, \bsb{\alpha}, \bsb\kappa) - \lambda_{\bsb\theta} \bsb\theta^\top \mb{R}_{\bsb\theta} \bsb\theta - \sum_{r=1}^q\lambda_{\bsb\alpha, r} \bsb\alpha_r^\top \mb{R}_{\bsb\alpha, r} \bsb\alpha_r ],
$$
with $\lambda_{\bsb\theta}$ and $\lambda_{\bsb\alpha, 1}, \ldots, \lambda_{\bsb\alpha, q}$ being smoothing parameters which are all non-negative.
We will estimate the model parameters by solving the following constrained optimisation problem: 
\begin{equation} \label{Equa:mplcrit}
 (\widehat{\bsb\beta}, \widehat{\bsb\gamma}, \widehat{\bsb\theta}, \widehat{\bsb\alpha}, \widehat{\bsb\kappa}) = \argmax_{\bsb\beta, \bsb\gamma, \bsb\theta, \bsb{\alpha}, \bsb\kappa}\Phi(\bsb\beta, \bsb\gamma, \bsb\theta, \bsb{\alpha}, \bsb\kappa),
\end{equation}
subject to $\bsb\theta\geq 0$, where the inequalities are interpreted element-wise. The extent of penalisation will depend on the smoothing parameter values. This means that in addition to the estimation of the model parameters $\bsb\beta, \bsb\gamma, \bsb\theta, \bsb{\alpha}, \text{ and }\bsb\kappa$ and the estimation of the variance components $\sigma_{\varepsilon}^2$ and $\sigma_{lr}^2$ for $l = 1, \ldots, c$ and $r=1, \ldots, q$, we also need to estimate the optimal smoothing values for $\lambda_{\bsb\theta}$ and $\lambda_{\bsb\alpha, 1}, \ldots, \lambda_{\bsb\alpha, q}$.
In \ref{sec:smth} we will develop a marginal likelihood-based method to estimate $\lambda_{\bsb\theta}$ and $\lambda_{\bsb\alpha, 1}, \ldots, \lambda_{\bsb\alpha, q}$, which requires associating $\bsb\theta$ and
$\bsb\alpha_r$ quadratic penalties with normal prior distributions.

\subsection{Constrained optimisation }\label{constr_opt}

For given values of $\sigma_\varepsilon^2$, $\sigma_{lr}^2$, $\lambda_{\bsb\theta}$, and $\lambda_{\bsb\alpha, 1}, \ldots, \lambda_{\bsb\alpha, q}$, we propose to estimate the parameters $\bsb\beta, \bsb\gamma, \bsb\theta, \bsb\alpha, \bsb\kappa$ 
by maximising the penalised log-likelihood $\Phi(\bsb\beta, \bsb\gamma, \bsb{\alpha}, \bsb\theta, \bsb\kappa)$. The algorithm we adopt is an alternating iterative algorithm, similar to the Newton-MI algorithm described in \cite{WeMa23}. This algorithm ensures that the non-negativity constraint on $\bsb\theta$ is enforced. 

For a parameter vector $\mb{a}$, we let $\mb{a}^{(k)}$ denote the estimate of the parameters at iteration $k$. The values at iteration $k+1$ are then obtained through the following steps. 
Iteration $k+1$ of our algorithm updates all the parameter vectors in an alternating process. It computes $\bsb\beta^{(k+1)}$, $\boldsymbol{\gamma}^{(k+1)}$, $\boldsymbol{\gamma}^{(k+1)}$, $\boldsymbol{\theta}^{(k+1)}$, $\bsb{\alpha}^{(k+1)}$ and $\bsb{\kappa}^{(k+1)}$ in turn, and when updating each parameter vector, the previously updated parameters are used immediately, and also the penalised log-likelihood $\Phi$ is increased after updating a parameter vector.
These incremental penalised log-likelihood requirements are achieved through adopting a line search step size when updating each parameter vector. These conditions ensure that the penalised likelihood $\Phi(\bsb\beta, \bsb\gamma, \bsb\theta, \bsb{\alpha}, \bsb\kappa)$ is maximised at the end of the iterative process \citep{ChMa12}.

Specifically, similar to \cite{WeMa23}, we obtain $\bsb\beta^{(k+1)}$ according to
\begin{align} 
    \bsb\beta^{(k+1)} &= \bsb\beta^{(k)} + \omega_1^{(k)}(\mb{X}^\top \mb{A}^{(k)} \mb{X})^{-1}\mb{X}^\top(\bsb\delta - \mathbf{C}^{(k)}\mathbf{1}_n), \label{betaupdat}
\end{align}
where, $\mb{X}$ denotes the model matrix from the time-fixed covariates of the Cox model, $\bsb\delta$ is a vector of length $n$ with its elements equal to 1 for event times and 0 for others, and $\mb{1}_n$ is a vector of $1$'s with length $n$ and both $\mb{C}$ and $\mb{A}$ are $n \times n$ diagonal matrices with diagonal vectors $\mb{c}$ and $\mb{a}$ respectively, and details of $\mb{c}$ and $\mb{a}$ are available in the Supplementary Material of this paper. 

The update for $\bsb{\gamma}$ is given by
\begin{align}
     \bsb{\gamma}^{(k+1)} &= \bsb{\gamma}^{(k)} - \omega_2^{(k)} \left[ \mb{H}_{\bsb\gamma}^{(k)} \right]^{-1} \mb{g}_{\bsb\gamma}^{(k)}, \label{gam_update}
\end{align}
where $\mb{H}_{\bsb\gamma}^{(k)}$ and $\mb{g}_{\bsb\gamma}^{(k)}$ are respectively the modified Hessian matrix and gradient vector for $\bsb\gamma$, both evaluated at $(\bsb\beta^{(k+1)}, \bsb\gamma^{(k)}, \bsb\theta^{(k)}, \bsb\alpha^{(k)}, \bsb\kappa^{(k)})$. 

We compute the update $\bsb\theta^{(k+1)}$ using an MI algorithm so that the estimate of $\bsb\theta$ is guaranteed to be non-negative. This algorithm can also be expressed as a gradient algorithm \citep{ChMa12}. From a current estimate $\bsb\theta^{(k)}$, we update $\boldsymbol{\theta}$ using
\begin{align} 
    \bsb\theta^{(k + 1)} = \bsb\theta^{(k)} + \omega_3^{(k)}[\mb{S}_{\bsb\theta}^{(k)}]^{-1} \mb{g}_{\bsb\theta}^{(k)},\label{theta_update} 
\end{align}
where $\mb{g}_{\bsb\theta}$ represents the gradient of $\Phi$ for $\bsb\theta$ and $\mb{S}_{\bsb\theta}$ is a diagonal matrix involving only the first derivative of $\Phi$, and these are evaluated at $(\bsb\beta^{(k+1)}, \bsb\gamma^{(k+1)}, \bsb\theta^{(k)}, \bsb\alpha^{(k)}, \bsb\kappa^{(k)})$, with details given in the Supplementary Material. 

Finally, the updates for $\bsb\alpha$ and $\bsb\kappa$ are given by modified Newton algorithms:
\begin{align}
    \bsb{\alpha}^{(k+1)} &= \bsb{\alpha}^{(k)} - \omega_4^{(k)} \left[ \mb{H}_{\bsb\alpha}^{(k)} \right]^{-1} \mb{g}_{\bsb\alpha}^{(k)}, \label{alp_update}\\
    \bsb\kappa^{(k+1)} &= \bsb\kappa^{(k)} - \omega_5^{(k)} \left[\mb{H}_{\kappa}^{(k)}\right]^{-1} \mb{g}_{\kappa}^{(k)}, \label{kap_update}
\end{align}
where, in (\ref{alp_update}), $\mb{H}_{\bsb\alpha}^{(k)}$ and $\mb{g}_{\bsb\alpha}^{(k)}$ are respectively the modified Hessian matrix and gradient vector for $\bsb\alpha$, both evaluated at $(\bsb\beta^{(k+1)}, \bsb\gamma^{(k+1)}, \bsb\theta^{(k+1)}, \bsb\alpha^{(k)}, \bsb\kappa^{(k)})$, and in (\ref{kap_update}), $\mb{H}_{\bsb\kappa}^{(k)}$ and $\mb{g}_{\bsb\kappa}^{(k)}$ are respectively the modified Hessian and gradient for $\bsb\kappa$,  evaluated at $(\bsb\beta^{(k+1)}, \bsb\gamma^{(k+1)},  \bsb\theta^{(k+1)}, \bsb\alpha^{(k+1)}, \bsb\kappa^{(k)})$. These steps are repeated until the differences of all the parameter estimates at two consecutive iterations are smaller than some tolerance level, such as $10^{-5}$. 

In the above, $\omega_s$ for $s = 1, \dots, 5$ are line search step sizes used to ensure that the likelihood does not decrease at any iteration. There exist different ways to do the line search, and we recommend the Armijo backtracking line search method (e.g. \cite{LuenbergerYe}). 

\subsection{Estimation of smoothing parameters and variance components} \label{sec:smth}

The estimation of the model parameters vector $\bsb\eta = (\bsb\beta^\top, \bsb\gamma^\top, \bsb\theta^\top, \bsb\alpha^\top, \bsb\kappa^\top)^\top$, discussed in the last section, is made with given values for the variance components $\sigma_\varepsilon^2$ and $\sigma_{lr}^2$ (for $l = 1, \ldots, c;~ r = 1, \ldots, q$) and the smoothing parameters $\lambda_{\bsb\theta}$ and $\lambda_{\bsb\alpha, 1}, \ldots, \lambda_{\bsb\alpha, q}$. Following convergence of the Newton-MI algorithm described in Section \ref{constr_opt}, another set of iterations will be conducted to update the estimates of $\sigma_{\varepsilon}^2$, $\sigma_{lr}^2$ (for all $l$ and $r$), $\lambda_{\bsb\theta}$, and $\lambda_{\bsb\alpha, 1}, \ldots, \lambda_{\bsb\alpha, q}$. This is achieved by maximising an approximate marginal log-likelihood, adopting the Laplace approximation. 

For the penalised log-likelihood $\Phi(\bsb\eta)$ given in (\ref{Equa:mplcrit}), due to the fact that penalties on $\bsb\theta$ and all $\bsb\alpha_r$ are quadratic, $\Phi(\bsb\eta)$ can be viewed as a log-posterior with normal priors: $\bsb\theta \sim N(\mb{0}, \sigma_{\bsb\theta}^2\mb{R}_{\bsb\theta}^{-1})$ and $\bsb\alpha_r \sim N(\mb{0}, \sigma_{\bsb\alpha, r}^2\mb{R}_{\bsb\alpha, r}^{-1})$ for $r=1, \ldots, q$, where $\lambda_{\bsb\theta} = 1/2 \sigma_{\bsb\theta}^2$ and $\lambda_{\bsb\alpha, r} = 1/2 \sigma_{\bsb\alpha, r}^2$. 
Let 
$\bsb\sigma_{\bsb\alpha}^2=(\sigma_{\bsb\alpha, 1}^2, \ldots, \sigma_{\bsb\alpha, q}^2)^\top$ be a vector for all the variance components of
$\bsb\alpha$. 
Using the prior distributions for $\bsb\theta$ and $\bsb\alpha_r$'s, function $\Phi(\bsb\eta)$, which will be denoted as $\Phi(\bsb\eta; \sigma_{\varepsilon}^2, \sigma_{\bsb\theta}^2, \bsb\sigma_{\bsb\alpha}^2, \bsb\sigma_{\bsb\kappa}^2)$ in the following discussions to reflect its dependence on all the variance components, can be re-expressed as: 
\begin{align}
 \Phi(\bsb\eta; \sigma_{\varepsilon}^2, \sigma_{\bsb\theta}^2, \bsb\sigma_{\bsb\alpha}^2, \bsb\sigma_{\bsb\kappa}^2) = & l(\bsb\eta; \sigma_{\varepsilon}^2, \sigma_{\bsb\theta}^2, \bsb\sigma_{\bsb\alpha}^2, \bsb\sigma_{\bsb\kappa}^2) \nonumber - \frac{m}{2}\ln \sigma_{\bsb\theta}^2 - \frac{1}{2\sigma_{\bsb\theta}^2}\bsb\theta^\top \mb{R}_{\bsb\theta} \bsb\theta \\
 & + \sum_{r=1}^q\Big(-\frac{b}{2}\ln \sigma_{\bsb\alpha, r}^2 - \frac{1}{2\sigma_{\bsb\alpha, r}^2}\bsb\alpha_r^\top \mb{R}_{\bsb\alpha, r} \bsb\alpha_r\Big),
\end{align}
where $l(\bsb\eta; \sigma_{\varepsilon}^2, \sigma_{\bsb\theta}^2, \bsb\sigma_{\bsb\alpha}^2, \bsb\sigma_{\bsb\kappa}^2)$ is the log-likelihood given by (\ref{Equa:full_loglik}). We estimate the variance components $(\sigma_{\varepsilon}^2, \sigma_{\bsb\theta}^2, \bsb\sigma_{\bsb\alpha}^2, \bsb\sigma_{\bsb\kappa}^2)$ by maximising their marginal log-likelihood, obtained via Laplace approximation. The Laplace approximation gives the marginal log-likelihood
\begin{align}
    l_m(\sigma_{\varepsilon}^2, \sigma_{\bsb\theta}^2, \bsb\sigma_{\bsb\alpha}^2, \bsb\sigma_{\bsb\kappa}^2) \approx  
    \Phi(\widehat{\bsb\eta}; \sigma_{\varepsilon}^2, \sigma_{\bsb\theta}^2, \bsb\sigma_{\bsb\alpha}^2, \bsb\sigma_{\bsb\kappa}^2) -\frac{1}{2} \ln \big|\widehat{\mb{F}}_{\bsb\eta}\big| \label{marlike}
\end{align}
where 
$\widehat{\mb{F}}_{\bsb\eta} = -\partial^2 \Phi(\widehat{\bsb\eta}; \sigma_{\varepsilon}^2, \sigma_{\bsb\theta}^2, \bsb\sigma_{\bsb\alpha}^2, \bsb\sigma_{\bsb\kappa}^2)/\partial \bsb\eta\bsb\eta^\top$ is the negative Hessian of $\Phi$ with respect to $\bsb\eta$ and evaluated at the MPL estimate $\widehat{\bsb\eta}$ (where variance components are fixed at their current values). Details of all components of the matrix ${\mb{F}}_{\bsb\eta}$ can be found in the Supplementary Material.

From this marginal likelihood we can obtain approximate solutions for each of the variance components. 
These approximate solutions are:
 \begin{align}
    \widehat{\sigma}_{\varepsilon}^2 = \sum_{i = 1}^n \sum_{a = 1}^{n_i} \|\tilde{\mb{z}}_i(t_{ia}) - \widehat{\mb{z}}_i(t_{ia}) \|^2/(N - \nu_{\varepsilon}), \label{eq11} \\
    \widehat{\sigma}_{\bsb\theta}^2 = \widehat{\bsb{\theta}}^{\top} \mb{R}_{\bsb\theta} \widehat{\bsb{\theta}}/(m - \nu_{\bsb\theta}), \label{eq12} \\
    \widehat{\sigma}_{\bsb\alpha, r}^2 = \widehat{\bsb{\alpha}}_r^{\top} \mathbf{R}_{\bsb\alpha, r} \widehat{\bsb\alpha}_r/(b - \nu_{\bsb\alpha,r}), \label{eq13} \\
    \widehat{\sigma}_{lr}^2 = \sum_{i=1}^n \kappa_{ilr}^2/(n - \nu_{\bsb\kappa, lr}), \label{eq14} 
\end{align}
where 
$\nu_{\varepsilon} = \text{tr}(\widehat{\mb{F}}_{\bsb\eta}^{-1} \widehat{\mb{Q}}_{\varepsilon})$, $\nu_{\bsb\theta} = \text{tr}(\widehat{\mb{F}}_{\bsb\eta}^{-1} \widehat{\mb{Q}}_{\bsb\theta})$, $\nu_{\bsb\alpha, r} = (\widehat{\mb{F}}_{\bsb\eta}^{-1} \widehat{\mb{Q}}_{\bsb\alpha, r})$ and $\nu_{\bsb\kappa, lr} = \text{tr}(\widehat{\mb{F}}_{\bsb\eta}^{-1} \widehat{\mb{Q}}_{\bsb\kappa, lr})$; the matrices $\widehat{\mb{Q}}_{\varepsilon}$, $\widehat{\mb{Q}}_{\bsb\theta}$, $\widehat{\mb{Q}}_{\bsb\alpha, r}$, and $\widehat{\mb{Q}}_{\bsb\kappa}$ are components of $\widehat{\mb{F}}_{\bsb\eta}$, detailed in the Supplementary material.

The structure of these variance component updating formulae
suggest an iterative structure with the penalised likelihood estimation of $\boldsymbol{\eta}$ and marginal likelihood estimation of variance components forming inner and outer iterations respectively. In practice, after the Newton-MI algorithm has converged, we update each variance component in the model using the formulae above. We then alternate between the Newton-MI algorithm and the variance component updating steps until both have been stabilised.

\section{Large sample variance estimation} \label{sec:asymp}

In this section, we focus on some details of estimating the large sample covariance matrix for the fixed parameters. In the Supplementary Material, we provide additional detail on some asymptotic results for the maximum penalised likelihood estimates of the parameters of the joint model, including a large sample distribution result for the estimated parameters.

Let 
$\bsb\zeta = (\bsb\beta^\top, \bsb\gamma^\top, \bsb\theta^\top, \bsb\alpha^\top)^\top$ be 
the collection of all the 
parameters except $\bsb\kappa$. In practice, estimation of the large sample covariance matrix for $\widehat{\bsb\zeta}$ 
should allow for scenarios where 
the smoothing parameters $\lambda_{\theta}$ and $\lambda_{\alpha, r}$ (for all $r$)
are not zero, and also for scenarios where there are active constraints in the estimation of $\boldsymbol{\theta}$, i.e. where some $\widehat{\theta}_u = 0$ at convergence. 

Here, we describe a similar sandwich formula for a large sample 
covariance matrix as \cite{MaCoHeIa21}. Recall $\boldsymbol{\eta} = (\boldsymbol{\zeta}^\top, \boldsymbol{\kappa}^\top)^\top$ and
the negative Hessian of $\Phi$ with respect to $\bsb\eta$ is denoted by $\mb{F}_{\bsb\eta}$. $\widehat{\mb{F}}_{\bsb\eta}$ represents matrix $\mb{F}_{\bsb\eta}$ with $\bsb\eta=\widehat{\bsb\eta}$. Assuming at the end of the iterations we have identified the active constraints from $\theta_u\geq 0$, and we denote their index set by $\mathcal{B}$. 
Since the asymptotic distribution result described above is for $\widehat{\bsb\zeta}$ with a fixed $\bsb\kappa$, the covariance matrix of $\widehat{\bsb\zeta}$ can be obtained by the inverse of the negative Hessian of the profile penalized log-likelihood where $\bsb\kappa=\widehat{\bsb\kappa}$. This can be computed by taking the portion corresponding to $\bsb\zeta$ from the inverse of $\widehat{\mb{F}}_{\bsb\eta}$. 
Active constraints will make this process more complicated as we need to first remove rows and columns, corresponding to ${\cal B}$, from $\widehat{\mb{F}}_{\bsb\eta}$. Denote the resulted $\widehat{\mb{F}}_{\bsb\eta}$ by $\widehat{\mb{F}}^*_{\bsb\eta}$. Then we invert $\widehat{\mb{F}}^*_{\bsb\eta}$ and add zeros back to this inverse matrix; see \cite{MaCoHeIa21}. Mathematically, this process can be expressed using a matrix $\mb{U}$ (see \cite{MaCoHeIa21} for how to define $\mb{U}$), which is basically a combination of a zero (corresponding to $\mathcal{B}$) and an identity matrix. Let $c$ be the number of active constraints and $d$ the size of $\bsb\eta$. Then, this matrix has the dimension of $(d-c)\times d$. Use these notations, the pseudo inverse, denoted by $\widehat{\mathbf{A}}^{-1}$, of the negative Hessian matrix becomes:
$
\widehat{\mathbf{A}}^{-1} = \mathbf{U} \left( \mathbf{U}^\top \widehat{\mb{F}}_{\bsb\eta}\mathbf{U} \right)^{-1} \mathbf{U}^\top.
$
The sandwich formula for an approximte covariance of $\widehat{\bsb\eta}$ is then  
\begin{align} \label{full_sandwich}
    \widehat{\mb{H}}_{\bsb\eta}^{-1} = \widehat{\mathbf{A}}^{-1}\frac{\partial^2 l(\widehat{\bsb\eta})}{\partial \bsb\eta \partial \bsb\eta^\top} \widehat{\mathbf{A}}^{-1}, 
\end{align}
where $l(\bsb\eta)$ is the log-likelihood given in (\ref{Equa:full_loglik}). Let $\widehat{\mathcal{I}}_{\bsb\zeta\bsb\zeta}$ be the portion of $\widehat{\mb{H}}_{\bsb\eta}^{-1}$ corresponding to $\bsb\zeta$, then 
an estimate the covariance matrix of $\bsb\zeta$ is:  
$ \widehat{\text{Cov}}(\widehat{\boldsymbol{\eta}}) = \widehat{\mathcal{I}}_{\bsb\zeta\bsb\zeta}$.

To use the above formulae, we require a method for identifying active constraints in practice. The method proposed here closely follows that of \cite{WeMa23}. Active constraints can be identified by inspecting both the value of $\hat{\theta}_u$ and the corresponding gradient for each $u$. After the Newton-MI algorithm has reached convergence, some $\hat{\theta}_u$ may be exactly zero with negative gradients, and thus  clearly they are active constraints. Furthermore, there may be some $\hat{\theta}_u$ that are very close to, but not exactly, zero. For these $\hat{\theta}_u$, a corresponding negative gradient value is indicative that they are also subject to an active constraint. In practice, active constraints are defined where, for a given $u$, $\hat{\theta}_u < 10^{-2}$ and the corresponding gradient is less than $-\varepsilon$ where $\varepsilon$ is a positive threshold value such as $10^{-2}$.

The simulation results reported in Section \ref{sec:simu} reveal that the standard error estimates obtained 
using this approach are very close to those obtained from the Monte Carlo method.

\section{Simulation studies}
\label{sec:simu}

In this section we evaluate the performance of our proposed MPL method via two simulation studies. We investigate the bias, standard error estimation (by comparing estimates from the large sample formula with Monte Carlo estimates) and coverage probabilities for the fixed parameters, $\bsb{\beta}$, $\bsb{\gamma}$, and $\bsb{\alpha}$, and evaluate the performance of the baseline hazard function spline approximation visually and by computing the mean integrated square error (MISE). We also investigate the bias in the estimates of the variance components of the measurement error and random effects distributions.

In Study 1, we compare the performance of our MPL method to the method implemented in the \texttt{JM R} package when applied to data subject to right censoring only (no left or interval censoring). In Study 2, we consider (partly-)interval censored data. In Study 2a, we demonstrate the performance of our MPL method (which specifically accommodates partly-interval censoring) under partly-interval censoring, compared to fitting the model using midpoint imputation in the \texttt{JM R} package (which otherwise does not accommodate partly-interval censored data). In Study 2b, we compare the performance of our proposed MPL approach with the method proposed by \cite{chen_2018_TJM_IC} under interval censoring. In Study 2c, detailed in the Supplementary Material, we further investigate the performance of our proposed MPL approach when a spline model is used to approximate the longitudinal trajectory.

\subsection{Data generation strategy}

For Study 1, event times $y_i$ for $i = 1, ..., n$ were generated from \eqref{eq1} using a Weibull baseline hazard $h_0(t) = 3t^2$, with true coefficient values $\beta = -0.5$ and $\gamma = 0.5$, and  time-fixed covariate values for the Cox model from $x_i \sim \text{Bern}(0.5)$. Values for the time-varying covariate were generated from the model
\begin{equation}\label{long_generate}
    \tilde{z}_i(t) = \alpha_0 + \alpha_1 t +\alpha_2t^2 +\alpha_3t^3+\kappa_{i1} + \kappa_{i2} t 
 + \varepsilon_i(t)
\end{equation}
where the true coefficient values were $\alpha_0 = 0.5$, $\alpha_1 = -0.5$, $\alpha_2 = 1$ and $\alpha_3 = -0.5$ and $\kappa_{i1} \sim \mathcal{N}(0, 0.5^2)$, and $\kappa_{i2} \sim \mathcal{N}(0, 0.8^2)$ were the random effects. We considered two scenarios, one with the random measurement error $\varepsilon_i(t) \sim \mathcal{N}(0, 0.05^2)$ and another with $\varepsilon_i(t) \sim \mathcal{N}(0, 0.2^2)$. The censoring times $c_i$, for $i = 1, ..., n$, were drawn from a uniform distribution $\text{Unif}[\tau_1, \tau_2]$ where the values of $\tau_1$ and $\tau_2$ could be adjusted to control the proportion of observations that were right censored. The observed times $t_i$ were obtained by taking $t_i = \min(y_i, c_i)$.

Observed values of the time-varying covariate $\tilde{z}_i(t)$ were obtained at random observation times $t_{ia}$, where $a = 1, ..., n_i$. The number $n_i$ and timing of $t_{ia}$ were allowed to vary across individuals. For each $i$, the corresponding $t_{ia}$'s 
were generated firstly by drawing a random number $n_i \sim \text{Poiss}(\nu_1)$.
Then the gap times between neighbouring $t_{ia}$ were drawn independently from $\text{Unif}[0, \nu_2]$. The value of each $t_{ia}$ was then found by taking the cumulative sum of these gap times up to the value of $t_i$ (i.e. any generated $t_{ia}$ greater than the event or censoring time $t_i$ was discarded). We considered two scenarios for $n_i$: in one the mean $n_i = 5$, and in the other the mean $n_i = 20$. 

In Study 2a, we now generated interval-censored data, and compared the performance of our proposed MPL approach to fitting a joint model using midpoint imputation. True event times $y_i$ were generated with a log-logistic baseline hazard function $h_0(t) = 4t^3/ (1 + t^4)$, with Cox model covariates $\beta_1 = -0.5$, $\beta_2 = 0.5$ and $\gamma = 0.25$, and time-fixed covariates in the Cox model drawn from $x_{i1} \sim \text{Unif}[-1, 1]$ and $x_{i2} \sim \text{Bern}(0.5)$. We now generated longitudinal trajectories from a model including an additional time-fixed covariate and a covariate $\times$ time interaction, such that
\begin{align*}
    \tilde{z}_i(t) = \alpha_0 + \alpha_1 x_{i3} + \alpha_2 t + \alpha_3 x_{i3} t + \alpha_4 t^2 + \alpha_5 t^3 + \kappa_{0i} + \kappa_{1i}t + \varepsilon_i(t)
\end{align*}
where the true coefficient values were $\alpha_0 = \alpha_1 = \alpha_2 = \alpha_3 = 0.5$, $\alpha_4 = -0.8$ and $\alpha_5 = 0.2$, with time-fixed covariate $x_{i3} \sim \mathcal{N}(0, 1)$; the random intercept was $\kappa_{0i} \sim \mathcal{N}(0, 0.2^2)$, the random slope for time was $\kappa_{1i} \sim \mathcal{N}(0, 0.3^2)$, and the measurement error was $\varepsilon_i(t) \sim \mathcal{N}(0, 0.1^2)$.

The observed times $t_i^L$ and $t_i^R$ were found by taking 
\begin{align*}
    t_i^L = y_i^{I(U_i^E < \pi^E)} (\tau_L U_i^L)^{I(\pi^E < U_i^E, \tau_L U_i^L \leq y_i \leq \tau_R U_i^R)}(\tau_R U_i^R)^{I (\pi^E \leq U_i^E, \tau_R U_i^R < y_i)} 0^{I(\pi^E \leq U_i^E, y_i < \tau_L U_i^L)} \\
    t_i^R = y_i^{I(U_i^E < \pi^E)}(\tau_L U_i^L)^{I(\pi^E \leq U_i^E, y_i < \tau_L U_i^L)} (\tau_R U_i^R)^{I(\pi^E \leq U_i^E, \tau_L U_i^L \leq y_i \leq \tau_R U_i^R)} \infty^{I(\pi^E \leq U_i^E, \tau_R U_i^R < y_i)}
\end{align*}
where $\pi^E$ denotes the desired event proportion, $U_i^E$ and $U_i^L$ denote independent standard uniform variables, $U_i^R \sim \text{Unif}[U_i^L, 1]$, and $\tau_L$ and $\tau_R$ are scalars that define the width of the censoring intervals and therefore they control the proportions of left, right and interval censoring. Function $I(\cdot)$ in the above expressions represents an indicator function. The proportion of right, left and interval censoring resulting from this generation process are given in Table \ref{int_cens} below.

In Study 2b, we compare our proposed penalised likelihood approach with the method for fitting a joint model with interval censored event times proposed in \cite{chen_2018_TJM_IC}. The \texttt{R} code available in the Supplementary Material of \cite{chen_2018_TJM_IC} allows for a simple linear mixed model with one time-fixed covariate in the longitudinal model, with one longitudinal and one time-fixed covariate in the Cox model, and hence for ease of comparison in this study we generate time-varying covariate values from the model
\begin{align}
    \tilde{z}_i(t) = \alpha_0 + \alpha_1 t + \alpha_2 x_{i1} + \kappa_{i0} + \kappa_{i1}t + \varepsilon_i(t)
\end{align}
where $\alpha_0 = \alpha_1 = -0.1$ and $\alpha_2 = -0.3$, with the time-fixed covariate $x_{i1} \sim \text{Bern}(0.5)$, the random effects $\kappa_{i0} \sim \mathcal{N}(0, 0.2^2)$ and $\kappa_{i1} \sim \mathcal{N}(0, 0.4^2)$, and the measurement error $\varepsilon_i(t) \sim \mathcal{N}(0, 0.1^2)$. We use a Gompertz baseline hazard function $h_0(t) = 0.5 e^{2t}$, with the time-fixed covariate in the Cox model being $x_{i2} \sim \mathcal{N}(0,1)$ and true Cox regression coefficient values of $\beta = -1$ and $\gamma = -0.3$.

The method proposed by \cite{chen_2018_TJM_IC} estimates the cumulative baseline hazard function as a series of non-negative probability masses at the (finite) ends of each individual censoring interval, meaning that the number of quantities estimated increases with each unique censoring interval. To reduce computational burden, we therefore assume an observation scheme where the longitudinal covariate and event status of each individual is observed at a series of common pre-specified time points; we considered a series of observation times $\boldsymbol{\tau} = \{ 0.25, 0.5, \dots \}$. This observation sequence was truncated by an independently generated ``drop-out" time $c_i \sim \text{Unif}[0.5, 3]$ to obtain $\boldsymbol{\tau}_i$, and hence an individual was considered right censored if their true event time $t_i$ was greater than $\max(\boldsymbol{\tau}_i)$, left censored if $t_i < \min(\boldsymbol{\tau}_i)$, and interval censored otherwise.

Details of the data generation strategy for Study 2c are available in the Supplementary Material in Section G.

\subsection{Study 1 (right censored) simulation results}

The simulation 
results from Study 1, including the bias, asymptotic and Monte Carlo standard error estimates and the asymptotic and Monte Carlo coverage probabilities for the regression coefficients in the survival and longitudinal models, are summarised in Tables \ref{right_cens1} and \ref{right_cens2}. For the regression coefficients in the survival model, $\beta$ and $\gamma$, the MPL and JM methods have generally comparable performance in terms of bias and standard error estimation. The proposed estimate of the standard errors of $\beta$ and $\gamma$ performs well, with coverage probabilities close to the nominal value of 0.95 regardless of sample size, censoring proportion or the magnitude of the measurement error. 

The bias in the estimates of $\boldsymbol{\alpha}$ are again comparable between both the MPL and JM methods, with both methods having small biases in these estimates. However, our results indicate that in these simulation scenarios the MPL estimates of the standard errors for $\boldsymbol{\alpha}$ are superior to those from the \texttt{JM} package. For the MPL method, the estimated asymptotic standard errors for $\boldsymbol{\alpha}$ are close to the Monte Carlo values across all scenarios and all MPL coverage probabilities for $\boldsymbol{\alpha}$ are close to the nominal value. In contrast, the asymptotic standard errors for $\boldsymbol{\alpha}$ from the JM method are consistently much smaller than the Monte Carlo estimates, and coverage probabilities consequently fall below the nominal value. 
We hypothesise that this difference in the standard error estimation between the MPL and \text{JM} methods arises from the different likelihoods used; the \text{JM} standard error estimates are found from the second derivative of a likelihood that marginalises out the random effects in the longitudinal model, while the random effects are not marginalised in the likelihood used in our proposed MPL method.

Figure \ref{fig:right_cens_h0t} shows the performance of the baseline hazard function estimates from the MPL method. In all scenarios in Study 1, the bias in the estimate of the baseline hazard function was small and the true function was contained in the asymptotic $95\%$ point-wise confidence interval 
area. Table \ref{tab:var_comp_stdy1} also compares the mean integrated square error (MISE) of the baseline hazard function estimates from the MPL and JM methods across all scenarios in Study 1. In the scenarios with $70\%$ event proportions, the MPL and JM estimates have comparable MISEs. When censoring is increased, the MISEs of the MPL method are substantially lower than those from the JM method, suggesting that the MPL baseline hazard approximation is adequate.

\begin{table}[h!]
    \centering
    \tiny
    \begin{tabular}{ll | cccccc | cccccc}
         \hline
         && \multicolumn{6}{c}{$n = 200$, $\Bar{n}_i = 5$, $\pi^E = 0.7$} & \multicolumn{6}{c}{$n = 200$, $\Bar{n}_i = 20$, $\pi^E = 0.7$}\\
         \hline
         && $\beta$ & $\gamma$ & $\alpha_0$ & $\alpha_1$ & $\alpha_2$ & $\alpha_3$ & $\beta$ & $\gamma$ & $\alpha_0$ & $\alpha_1$ & $\alpha_2$ & $\alpha_3$ \\
         \hline
         Bias & MPL & 0.012 & 0.002 & 0.003 & 0.002 & -0.002 & 0.001 & -0.006 & 0.016 & 0.002 & 0.003 & -0.001 & 0.001 \\
         & JM & 0.025 & -0.015 & 0.003 & -0.006 & -0.001 & -0.001 & 0.011 & 0.012 & -0.001 & 0.002 & 0.017 & -0.008 \\
         SE & MPL & 0.145 & 0.086 & 0.035 & 0.063 & 0.064 & 0.032 & 0.143 & 0.086 & 0.035 & 0.058 & 0.028 & 0.013 \\
         & & (0.146) & (0.098) & (0.037) & (0.063) & (0.065) & (0.033)& (0.154) & (0.099) & (0.035) & (0.064) & (0.027) & (0.012) \\
         & JM & 0.148 & 0.102 & 0.014 & 0.072 & 0.124 & 0.056 & 0.147 & 0.098 & 0.009 & 0.046 & 0.072 & 0.031 \\
         & & (0.150) & (0.108) & (0.091) & (0.138) & (0.232) & (0.130) & (0.162) & (0.105) & (0.107) & (0.182) & (0.202) & (0.109) \\
         CP & MPL & 0.96 & 0.94 & 0.93 & 0.95 & 0.95 & 0.95 & 0.94 & 0.92 & 0.97 & 0.94 & 0.95 & 0.98 \\
         & & (0.96) & (0.97) & (0.96) & (0.95) & (0.93) & (0.94)& (0.96) & (0.96) & (0.96) & (0.95) & (0.95) & (0.96) \\
         & JM & 0.95 & 0.93 & 0.26 & 0.66 & 0.73 & 0.63 & 0.94 & 0.93 & 0.13 & 0.35 & 0.51 & 0.40 \\
         & & (0.96) & (0.97) & (0.96) & (0.95) & (0.93) & (0.94) & (0.95) & (0.95) & (0.97) & (0.95) & (0.95) & (0.95) \\
         \hline
         && \multicolumn{6}{c}{$n = 1000$, $\Bar{n}_i = 5$, $\pi^E = 0.7$} & \multicolumn{6}{c}{$n = 1000$, $\Bar{n}_i = 20$, $\pi^E = 0.7$}\\
         \hline
         && $\beta$ & $\gamma$ & $\alpha_0$ & $\alpha_1$ & $\alpha_2$ & $\alpha_3$ & $\beta$ & $\gamma$ & $\alpha_0$ & $\alpha_1$ & $\alpha_2$ & $\alpha_3$ \\
         \hline
         Bias & MPL & 0.001 & -0.005 & 0.001 & 0.002 & 0.003 & 0.001 & 0.012 & 0.001 & -0.002 & 0.003 & -0.001 & 0.001 \\
         & JM & -0.004 & 0.004 & -0.001 & -0.011 & -0.018 & -0.014 & 0.006 & 0.017 & -0.001 & 0.009 & 0.009 & -0.004 \\
         SE & MPL & 0.064 & 0.037 & 0.016 & 0.028 & 0.027 & 0.013 & 0.064 & 0.037 & 0.016 & 0.026 & 0.012 & 0.005 \\
         & & (0.063) & (0.041) & (0.016) & (0.029) & (0.026) & (0.013) & (0.062) & (0.037) & (0.015) & (0.030) & (0.013) & (0.005) \\
         & JM & 0.065 & 0.044 & 0.010 & 0.041 & 0.069 & 0.031 & 0.064 & 0.043 & 0.005 & 0.020 & 0.030 & 0.013 \\
         & & (0.062) & (0.043) & (0.076) & (0.092) & (0.158) & (0.084) & (0.062) & (0.035) & (0.084) & (0.129) & (0.102) & (0.053) \\
         CP & MPL & 0.95 & 0.95 & 0.93 & 0.94 & 0.97 & 0.95 & 0.97 & 0.96 & 0.94 & 0.96 & 0.94 & 0.94 \\
         & & (0.95) & (0.95) & (0.95) & (0.95) & (0.95) & (0.96)& (0.97) & (0.96) & (0.94) & (0.96) & (0.96) & (0.94) \\
         & JM & 0.95 & 0.95 & 0.23 & 0.61 & 0.55 & 0.50 & 0.96 & 0.99 & 0.11 & 0.24 & 0.43 & 0.35 \\
         & & (0.94) & (0.95) & (0.95) & (0.95) & (0.95) & (0.96) & (0.96) & (0.97) & (0.94) & (0.94) & (0.97) & (0.94) \\
         \hline
         && \multicolumn{6}{c}{$n = 200$, $\Bar{n}_i = 5$, $\pi^E = 0.3$} & \multicolumn{6}{c}{$n = 200$, $\Bar{n}_i = 20$, $\pi^E = 0.3$}\\
         \hline
         && $\beta$ & $\gamma$ & $\alpha_0$ & $\alpha_1$ & $\alpha_2$ & $\alpha_3$ & $\beta$ & $\gamma$ & $\alpha_0$ & $\alpha_1$ & $\alpha_2$ & $\alpha_3$ \\
         \hline
         Bias & MPL & 0.035 & -0.015 & -0.001 & -0.002 & -0.023 & 0.017 & 0.009 & -0.005 & -0.001 & 0.005 & -0.001 & 0.001 \\
         & JM & 0.034 & -0.028 & -0.008 & 0.008 & 0.021 & -0.018 & 0.009 & -0.001 & 0.008 & 0.006 & 0.021 & 0.011 \\
         SE & MPL & 0.239 & 0.184 & 0.035 & 0.071 & 0.148 & 0.119 & 0.236 & 0.178 & 0.035 & 0.061 & 0.065 & 0.048 \\
         & & (0.234) & (0.189) & (0.037) & (0.074) & (0.151) & (0.120) & (0.235) & (0.199) & (0.034) & (0.063) & (0.068) & (0.051) \\
         & JM & 0.238 & 0.195 & 0.013 & 0.097 & 0.266 & 0.199 & 0.235 & 0.187 & 0.008 & 0.060 & 0.147 & 0.102 \\
         & & (0.245) & (0.191) & (0.088) & (0.156) & (0.434) & (0.389) & (0.238) & (0.197) & (0.097) & (0.211) & (0.361) & (0.322) \\
         CP & MPL & 0.95 & 0.94 & 0.94 & 0.95 & 0.95 & 0.96 & 0.94 & 0.94 & 0.97 & 0.93 & 0.96 & 0.96 \\
         & & (0.95) & (0.94) & (0.95) & (0.96) & (0.95) & (0.94) & (0.93) & (0.95) & (0.96) & (0.95) & (0.97) & (0.97) \\
         & JM & 0.96 & 0.96 & 0.27 & 0.75 & 0.76 & 0.67 & 0.93 & 0.94 & 0.12 & 0.47 & 0.57 & 0.45 \\
         & & (0.96) & (0.95) & (0.94) & (0.95) & (0.94) & (0.94) & (0.94) & (0.94) & (0.95) & (0.95) & (0.96) & (0.95) \\
         \hline
         && \multicolumn{6}{c}{$n = 1000$, $\Bar{n}_i = 5$, $\pi^E = 0.3$} & \multicolumn{6}{c}{$n = 1000$, $\Bar{n}_i = 20$, $\pi^E = 0.3$}\\
         \hline
         && $\beta$ & $\gamma$ & $\alpha_0$ & $\alpha_1$ & $\alpha_2$ & $\alpha_3$ & $\beta$ & $\gamma$ & $\alpha_0$ & $\alpha_1$ & $\alpha_2$ & $\alpha_3$ \\
         \hline
         Bias & MPL & 0.004 & -0.006 & -0.002 & -0.002 & 0.007 & -0.009 & 0.004 & -0.003 & -0.001 & 0.001 & -0.002 & 0.003 \\
         & JM & 0.009 & -0.011 & 0.001 & -0.001 & 0.034 & -0.054 & 0.006 & 0.010 & -0.005 & 0.004 & -0.001 & 0.004 \\
         SE & MPL & 0.103 & 0.076 & 0.016 & 0.031 & 0.060 & 0.045 & 0.103 & 0.076 & 0.016 & 0.027 & 0.027 & 0.018\\
         & & (0.105) & (0.077) & (0.017) & (0.034) & (0.056) & (0.041) & (0.094) & (0.075) & (0.016) & (0.025) & (0.028) & (0.019) \\
         & JM & 0.103 & 0.083 & 0.009 & 0.053 & 0.139 & 0.096 & 0.103 & 0.080 & 0.004 & 0.027 & 0.060 & 0.040 \\
         & & (0.108) & (0.083) & (0.066) & (0.080) & (0.279) & (0.228) & (0.095) & (0.077) & (0.085) & (0.150) & (0.185) & (0.150) \\
         CP & MPL & 0.94 & 0.95 & 0.89 & 0.94 & 0.98 & 0.99 & 0.96 & 0.95 & 0.95 & 0.97 & 0.93 & 0.94\\
         & & (0.95) & (0.96) & (0.91) & (0.94) & (0.96) & (0.96) & (0.93) & (0.94) & (0.95) & (0.96) & (0.95) & (0.94) \\
         & JM & 0.92 & 0.96 & 0.20 & 0.78 & 0.61 & 0.58 & 0.96 & 0.96 & 0.08 & 0.32 & 0.53 & 0.42 \\
         & & (0.93) & (0.95) & (0.98) & (0.95) & (0.96) & (0.96) & (0.93) & (0.96) & (0.95) & (0.93) & (0.96) & (0.96) \\
         \hline
    \end{tabular}
    \caption{Study 1 (right censoring) regression parameter simulation results; $n$ refers to the total sample size, $\bar{n}_i$ refers to the average number of longitudinal observations, $\pi^E$ refers to the proportion of non-censored observations. For all scenarios summarised in this table, the true value of $\sigma_{\varepsilon}^2 = 0.05^2$.}
    \label{right_cens1}
\end{table}

\begin{figure}
    \centering
    \includegraphics[width=0.40\textwidth]{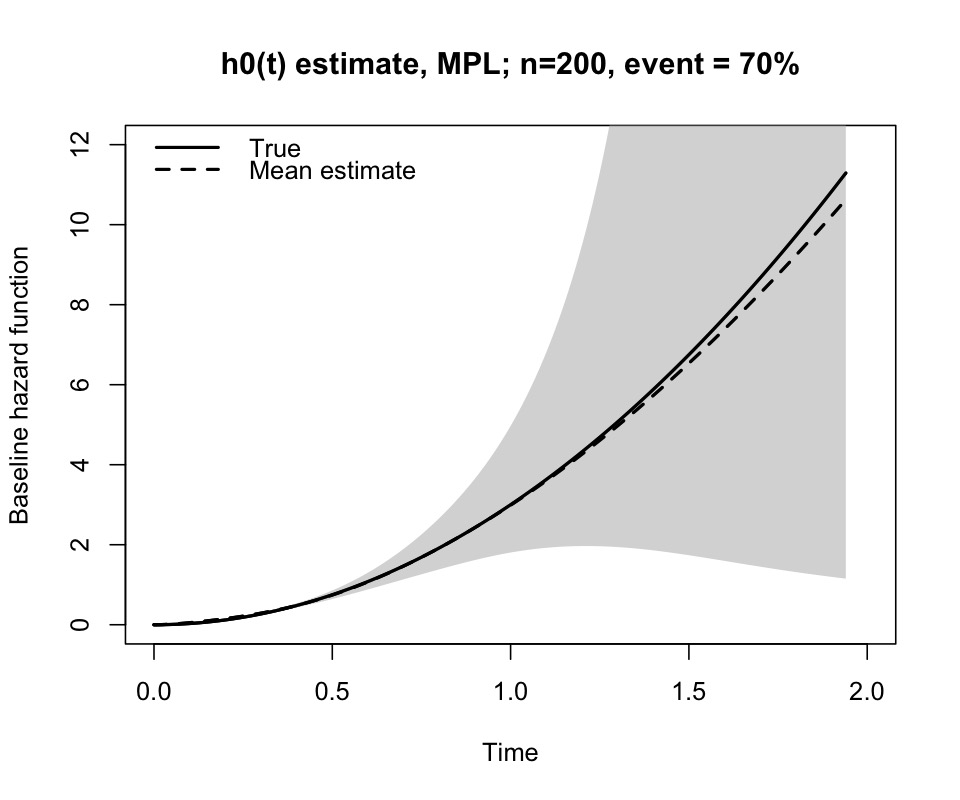}
    \includegraphics[width=0.40\textwidth]{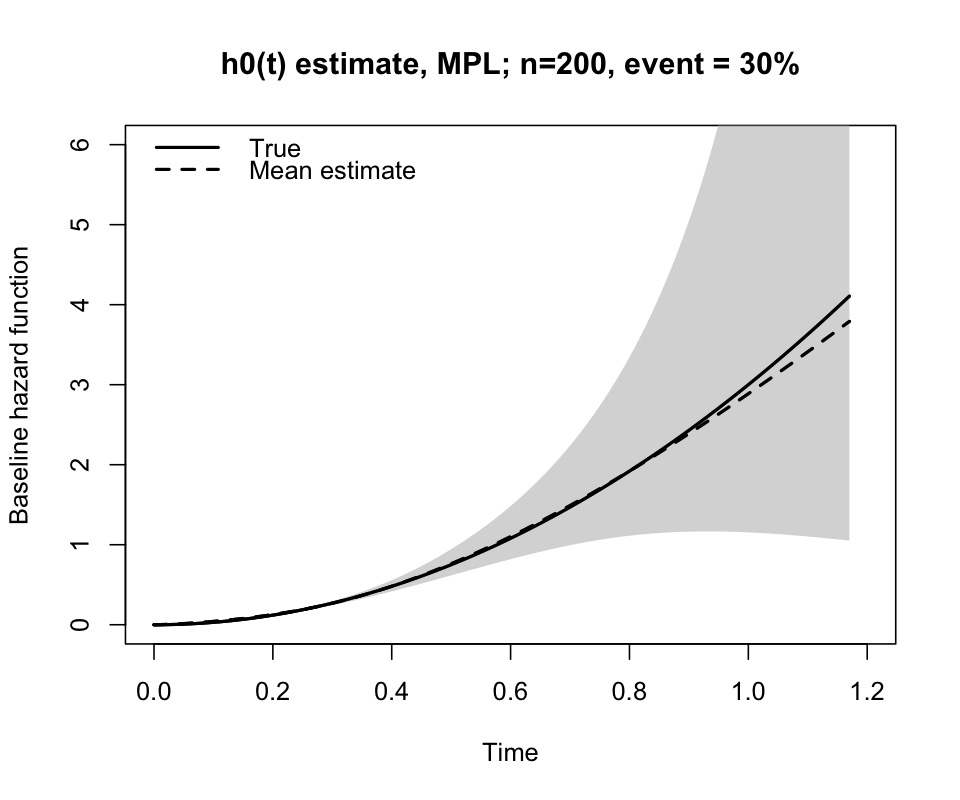}
    \includegraphics[width=0.40\textwidth]{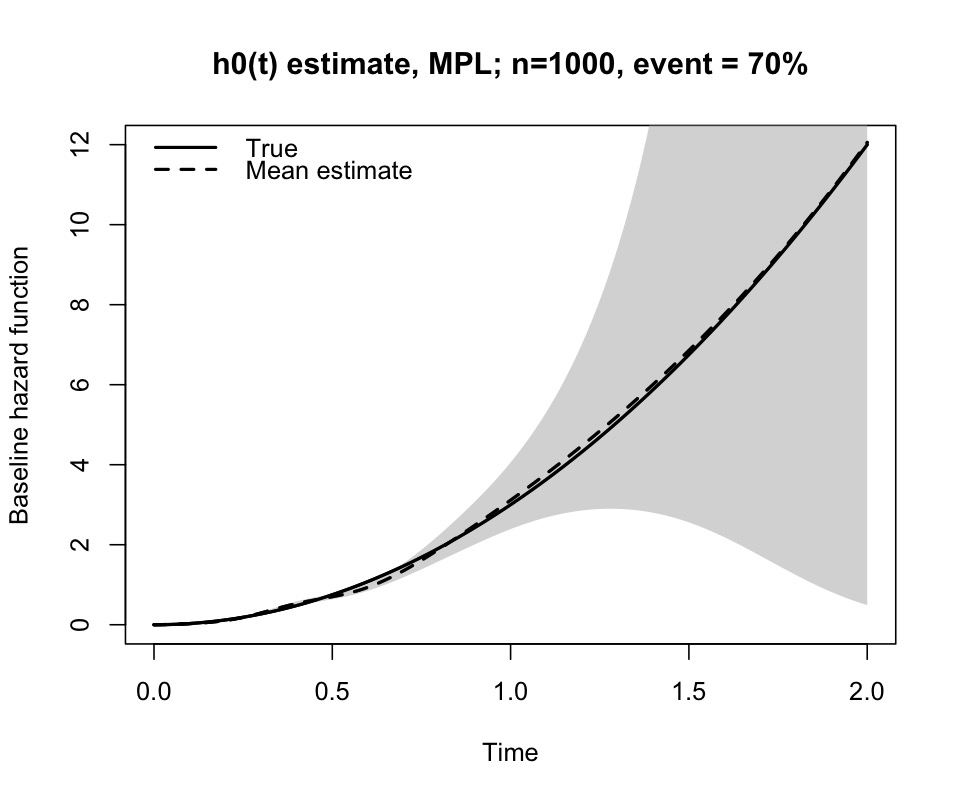}
    \includegraphics[width=0.40\textwidth]{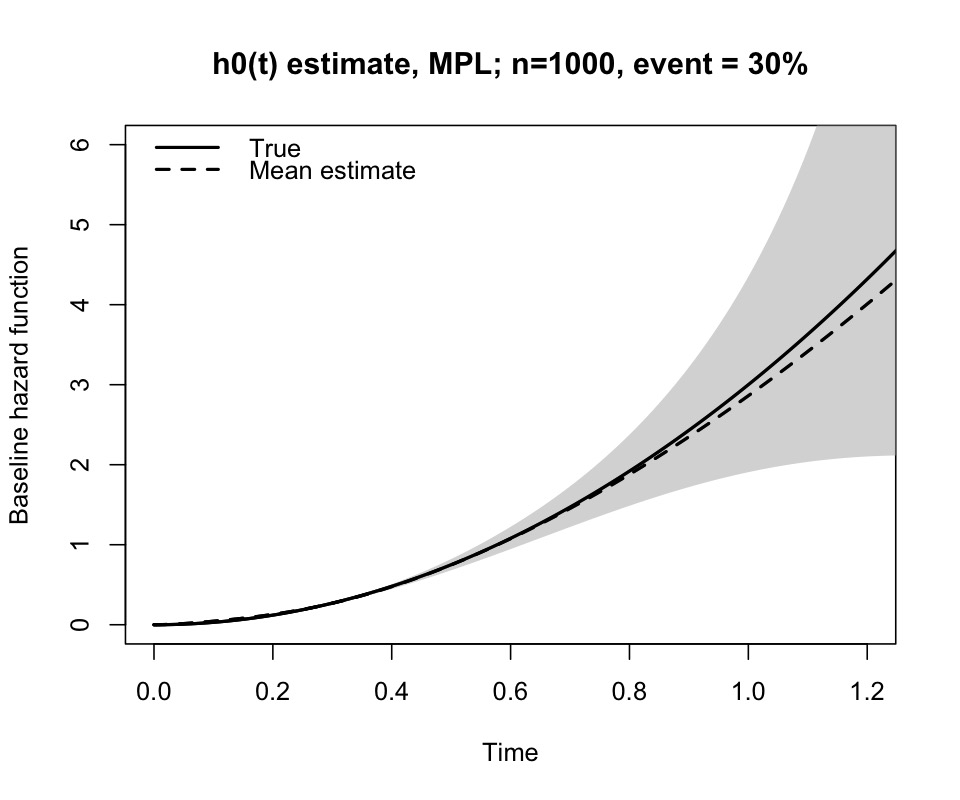}
    \caption{Estimates of the baseline hazard function for $\sigma_{\varepsilon} = 0.05$ and $\bar{n}_i = 20$. The solid line is the true baseline hazard function, the dashed line is the mean estimate and the grey area represents the asymptotic $95\%$ coverage probability.}
    \label{fig:right_cens_h0t}
\end{figure}

\subsection{Study 2a (interval censored) simulation results}

The simulation study results from Study 2a, including the bias, asymptotic and Monte Carlo standard error estimates and the asymptotic and Monte Carlo coverage probabilities for the regression coefficients in the survival and longitudinal models, are summarised in Table \ref{int_cens}. The biases of the MPL estimates of $\beta_1$ and $\beta_2$ are small across the sample sizes and the number of longitudinal observations considered in this study. The MPL asymptotic standard errors for the $\beta$ are reasonably close to the Monte Carlo estimates and the coverage probabilities are reasonable, especially as the sample size and/or number of longitudinal observations increases. In comparison, midpoint imputation produces estimates of $\beta_1$ and $\beta_2$ with large biases and therefore the coverage probabilities are low. The MPL and midpoint imputation estimates of $\gamma$ have comparable biases and coverage probabilities, though the MPL estimates of $\gamma$ have decreasing bias and more reasonable coverage probabilities with larger sample sizes. 

The MPL and midpoint imputation estimates of $\boldsymbol{\alpha}$ have comparable biases, suggesting the midpoint imputation has little effect on the longitudinal part of the joint model. In Study 2a, the longitudinal model included a baseline covariate and an interaction term between this covariate and a function of time, and there is no notably increased error in the MPL estimates of the longitudinal model parameters as a result of this more complex model. Like in Study 1, the MPL asymptotic standard error estimates for $\boldsymbol{\alpha}$ are very close to the Monte Carlo estimates across all scenarios and the coverage probabilities are good, while the midpoint imputation results (fitted using the JM package as in Study 1) give asymptotic standard error estimates for $\boldsymbol{\alpha}$ that are too small and result in coverage probabilities that are consistently below the nominal value of 0.95. 

Figure \ref{fig:int_cens_h0t} shows the spline approximation to the log-logistic baseline hazard function from the MPL method along with the point-wise 95\% confidence intervals 
area based on the asymptotic standard error estimates. Like in Study 1, the spline approximation can estimate the baseline hazard function reasonably well. Compared to the midpoint imputation method, the MPL estimate of the baseline hazard function consistently has lower MISE values, indicating the success of including a penalty term to smooth the baseline hazard function and reduce sensitivity to the choice of knots. The penalty function improves the baseline hazard function estimate (Supplementary Table 2).

\begin{table}
    \centering
    \scriptsize
    \begin{tabular}{ll | ccccccccc }
         \hline
         && \multicolumn{9}{c}{$n = 200$, $\Bar{n}_i = 5$, $\pi^R = 0.3$, $\pi^L = 0.1$, $\pi^I = 0.6$} \\
         \hline
         && $\beta_1$ & $\beta_2$ & $\gamma$ & $\alpha_0$ & $\alpha_1$ & $\alpha_2$ & $\alpha_3$ & $\alpha_4$ & $\alpha_5$ \\
         \hline
         Bias & MPL & 0.030 & -0.001 & -0.037 & -0.002 & -0.001 & 0.005 & -0.001 & 0.005 & -0.011 \\
         & JM-mid &  0.116 & -0.094 & -0.067 & -0.006 & -0.016 & -0.015 & 0.007 & 0.011 & -0.001 \\
         SE & MPL & 0.105 & 0.105 & 0.166 & 0.020 & 0.071 & 0.093 & 0.041 & 0.028 & 0.070 \\
         & & (0.108) & (0.104) & (0.206) & (0.023) & (0.073) & (0.101) & (0.044) & (0.035) & (0.072) \\
         & JM-mid & 0.090 & 0.090 & 0.175 & 0.018 & 0.072 & 0.127 & 0.059 & 0.025 & 0.038 \\
         & & (0.090) & (0.083) & (0.193) & (0.033) & (0.101) & (0.161) & (0.081) & (0.049) & (0.124) \\
         CP & MPL & 0.88 & 0.89 & 0.80 & 0.87 & 0.93 & 0.92 & 0.92 & 0.84 & 0.93 \\
         & & (0.95) & (0.91) & (0.93) & (0.97) & (0.96) & (0.97) & (0.96) & (0.97) & (0.96) \\
         & JM-mid & 0.76 & 0.87 & 0.92 & 0.95 & 0.93 & 0.95 & 0.97 & 0.96 & 0.95 \\
         & & (0.76) & (0.87) & (0.92) & (0.95) & (0.93) & (0.95) & (0.97) & (0.96) & (0.95) \\
         \hline
         && \multicolumn{9}{c}{$n = 200$, $\Bar{n}_i = 20$, $\pi^R = 0.3$, $\pi^L = 0.1$, $\pi^I = 0.6$}\\
         \hline
         && $\beta_1$ & $\beta_2$ & $\gamma$ & $\alpha_0$ & $\alpha_1$ & $\alpha_2$ & $\alpha_3$ & $\alpha_4$ & $\alpha_5$ \\
         \hline
         Bias & MPL & 0.006 & 0.005 & -0.015 & -0.001 & 0.001 & -0.001 & 0.001 & 0.002 & 0.003 \\
         & JM-mid & 0.089 & -0.088 & -0.021 & 0.004 & 0.016 & -0.015 & 0.006 & -0.009 & 0.014 \\
         SE & MPL &  0.107 & 0.108 & 0.171 & 0.020 & 0.056 & 0.041 & 0.017 & 0.028 & 0.070 \\
         && (0.087) & (0.110) & (0.168) & (0.019) & (0.052) & (0.044) & (0.018) & (0.030) & (0.073) \\
         & JM-mid & 0.090 & 0.091 & 0.165 & 0.011 & 0.042 & 0.063 & 0.027 & 0.013 & 0.017 \\
         && (0.078) & (0.099) & (0.168) & (0.061) & (0.152) & (0.106) & (0.054) & (0.083) & (0.146) \\
         CP & MPL & 0.96 & 0.92 & 0.74 & 0.96 & 0.96 & 0.93 & 0.94 & 0.92 & 0.91 \\
         && (0.95) & (0.95) & (0.94) & (0.95) & (0.95) & (0.95) & (0.96) & (0.95) & (0.93) \\
         & JM-mid & 0.85 & 0.77 & 0.92 & 0.31 & 0.60 & 0.70 & 0.64 & 0.26 & 0.22 \\
         && (0.81) & (0.83) & (0.94) & (0.97) & (0.99) & (0.97) & (0.94) & (0.95) & (0.94) \\
         \hline
         && \multicolumn{9}{c}{$n = 1000$, $\Bar{n}_i = 5$, $\pi^R = 0.3$, $\pi^L = 0.1$, $\pi^I = 0.6$}\\
         \hline
         && $\beta_1$ & $\beta_2$ & $\gamma$ & $\alpha_0$ & $\alpha_1$ & $\alpha_2$ & $\alpha_3$ & $\alpha_4$ & $\alpha_5$ \\
         \hline
         Bias & MPL & -0.009 & 0.006 & 0.009 & -0.001 & 0.003 & -0.007 & 0.004 & 0.001 & 0.004 \\
         & JM-mid & 0.108 & -0.115 & -0.061 & 0.001 & 0.006 & -0.005 & -0.001 & 0.001 & -0.008  \\
         SE & MPL & 0.048 & 0.043 & 0.078 & 0.009 & 0.031 & 0.037 & 0.015 & 0.013 & 0.031 \\
         & & (0.048) & (0.054) & (0.068) & (0.011) & (0.034) & (0.038) & (0.015) & (0.013) & (0.032) \\
         & JM-mid & 0.039 & 0.039 & 0.073 & 0.009 & 0.035 & 0.057 & 0.025 & 0.013 & 0.021  \\
         & & (0.042) & (0.038) & (0.071) & (0.018) & (0.067) & (0.079) & (0.040) & (0.026) & (0.088)  \\
         CP & MPL & 0.81 & 0.78 & 0.83 & 0.92 & 0.93 & 0.88 & 0.90 & 0.97 & 0.97  \\
         & & (0.95) & (0.93) & (0.93) & (0.95) & (0.97) & (0.92) & (0.90) & (0.98) & (0.97)  \\
         & JM-mid & 0.25 & 0.18 & 0.85 & 0.68 & 0.70 & 0.87 & 0.78 & 0.68 & 0.40 \\
         & & (0.25) & (0.17) & (0.78) & (0.98) & (0.95) & (0.97) & (0.97) & (0.95) & (0.95) \\
         \hline
         && \multicolumn{9}{c}{$n = 1000$, $\Bar{n}_i = 20$, $\pi^R = 0.3$, $\pi^L = 0.1$, $\pi^I = 0.6$} \\
         \hline
         && $\beta_1$ & $\beta_2$ & $\gamma$ & $\alpha_0$ & $\alpha_1$ & $\alpha_2$ & $\alpha_3$ & $\alpha_4$ & $\alpha_5$ \\
         \hline
         Bias & MPL & -0.022 & 0.009 & 0.004 & -0.001 & -0.001 & 0.001 & 0.001 & 0.002 & 0.004 \\
         & JM-mid & 0.097 & -0.110 & -0.058 & 0.007 & -0.014 & -0.008 & 0.004 & -0.011 & 0.016 \\
         SE & MPL & 0.044 & 0.042 & 0.178 & 0.009 & 0.025 & 0.017 & 0.006 & 0.013 & 0.032 \\
         & & (0.050) & (0.047) & (0.075) & (0.011) & (0.025) & (0.014) & (0.005) & (0.012) & (0.031) \\
         & JM-mid & 0.039 & 0.039 & 0.069 & 0.006 & 0.019 & 0.027 & 0.012 & 0.008 & 0.010 \\
         & & (0.042) & (0.039) & (0.071) & (0.041) & (0.073) & (0.048) & (0.024) & (0.053) & (0.111) \\
         CP & MPL & 0.77 & 0.71 & 1.00 & 0.90 & 0.97 & 0.98 & 1.00 & 0.94 & 0.95 \\
         & & (0.95) & (0.92) & (0.97) & (0.93) & (0.97) & (0.97) & (0.97) & (0.92) & (0.95) \\
         & JM-mid & 0.29 & 0.21 & 0.89 & 0.19 & 0.35 & 0.94 & 0.60 & 0.31 & 0.19 \\
         & & (0.35) & (0.21) & (0.89) & (0.95) & (0.95) & (0.95) & (0.98) & (0.92) & (0.94) \\
    \end{tabular}
    \caption{Study 2 (interval censoring) regression parameter simulation results; $n$ refers to the total sample size, $\bar{n}_i$ refers to the average number of longitudinal observations, $\pi^R$ refers to the proportion of right censored observations. For all scenarios summarised in this table, the true value of $\sigma_{\varepsilon}^2 = 0.1^2$. JM-mid: mid-point imputation (using the \texttt{JM} package).}
    \label{int_cens}
\end{table}


\begin{figure}
    \centering
    \includegraphics[width=0.40\textwidth]{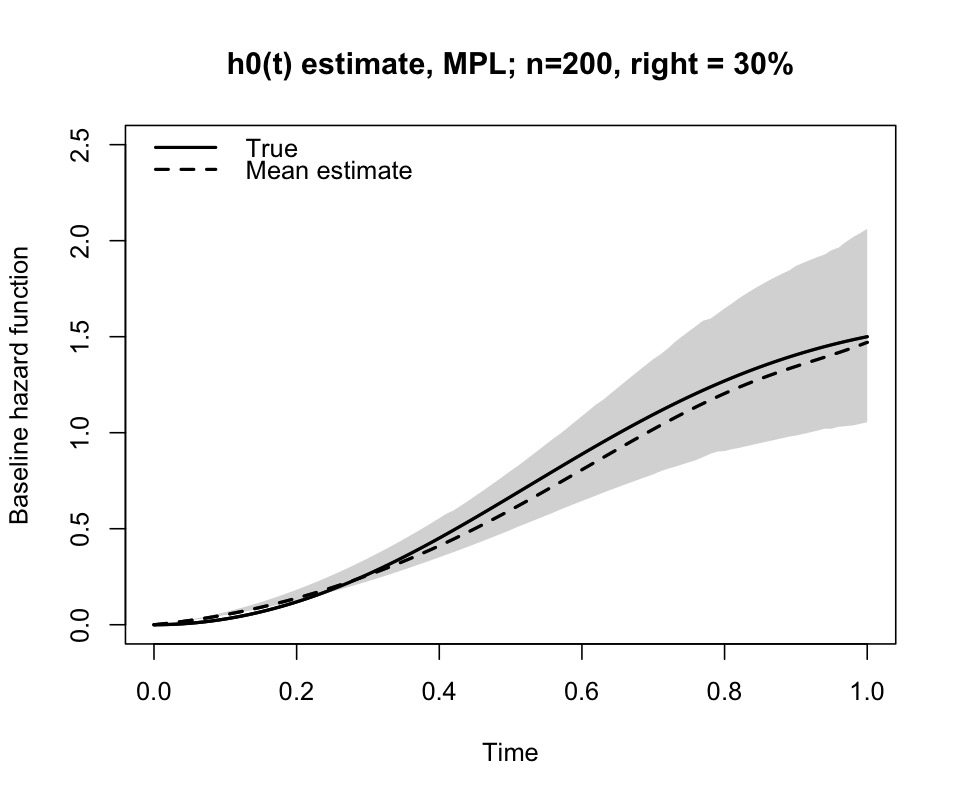}
    \includegraphics[width=0.40\textwidth]{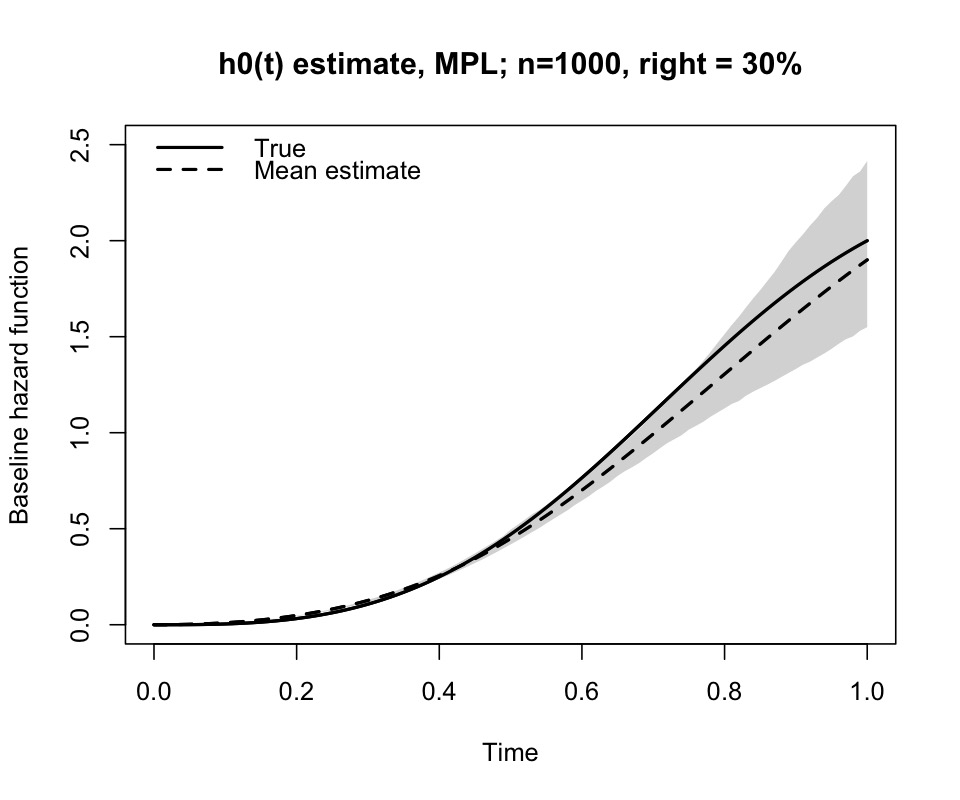}
    \caption{Estimates of the baseline hazard function for $\bar{n}_i = 20$. The solid line is the true baseline hazard function, the dashed line is the mean estimate and the grey area represents the asymptotic $95\%$ coverage probability. }
    \label{fig:int_cens_h0t}
\end{figure}

\subsection{Study 2b simulation results}

The results of Study 2b, comparing our proposed MPL method with the existing method of \cite{chen_2018_TJM_IC}, are presented in Table \ref{chen_mpl_comparison} and Figure \ref{fig:Chen_cumHaz}. Table \ref{chen_mpl_comparison} shows that the bias in the Cox regression model parameters $\beta$ and $\gamma$ is reduced in our MPL approach compared to the approach of \cite{chen_2018_TJM_IC} for both smaller and larger sample sizes. The bias in the longitudinal model parameters $\alpha_0$, $\alpha_1$ and $\alpha_2$ is comparable between the two methods, even with the relatively small number of average longitudinal observations used in this part of the simulation study. Our proposed MPL approach produces asymptotic standard error estimates for Cox and longitudinal regression parameters that are close to the empirical Monte Carlo results from both the MPL and \cite{chen_2018_TJM_IC} methods, without the need to implement a computationally intensive bootstrap as proposed by \cite{chen_2018_TJM_IC}. Figure \ref{fig:Chen_cumHaz} shows that our proposed spline approximation to the baseline (cumulative) hazard function works well in both large and small samples and produces a smooth estimate over the whole follow-up period, compared to the baseline cumulative hazard estimates produced by \cite{chen_2018_TJM_IC}'s method, where estimates are only available at observed censoring times, and which appears to be biased at later timepoints when the sample is smaller.

Results for Study 2c are available in the Supplementary Material (Table 5). These results show that the bias in the Cox regression parameter estimates is reasonable for interval censored data even when the longitudinal model is more complex e.g. using smoothing splines. Standard error estimates and coverage probabilities for the $\boldsymbol{\beta}$ parameters are reasonable, though the performance of the standard error estimate for $\gamma$, the parameter associating the longitudinal covariate with the survival times, is somewhat poorer with this more complex model than in other simulation scenarios we have considered. However, even with this more complex model, our MPL approach is able to obtain good estimates of the mean and individual longitudinal trajectories, indicated by the small mean integrated square error values in Supplementary Table 5.

\begin{table}[h!]
    \centering
    \begin{tabular}{ll | ccccc  }
         \hline
         && \multicolumn{5}{c}{$n = 200$, $\Bar{n}_i = 5$, $\pi^R = 0.25$} \\
         \hline
         && $\beta$ & $\gamma$ & $\alpha_0$ & $\alpha_1$ & $\alpha_2$ \\
         \hline
         Bias & MPL & 0.006 & -0.008 & 0.002 & -0.003 & 0.001  \\
         & Ch &  -0.012 & -0.066 & 0.002 & -0.004 & 0.003  \\
         SE & MPL & 0.108 & 0.272 & 0.029 & 0.028 & 0.072   \\
         & & (0.094) & (0.269) & (0.019) & (0.030) & (0.028)  \\
         & Ch & (0.099) & (0.310) & (0.019) & (0.030) & (0.030) \\
         CP & MPL & 0.97 & 0.95 & 0.99 & 0.95 & 0.99  \\
         & & (0.94) & (0.95) & (0.93) & (0.96) & (0.93) \\
         & Ch & (0.96) & (0.95) & (0.95) & (0.96) & (0.94) \\
         \hline
         && \multicolumn{5}{c}{$n = 1000$, $\Bar{n}_i = 5$, $\pi^R = 0.25$} \\
         \hline
        && $\beta$ & $\gamma$ & $\alpha_0$ & $\alpha_1$ & $\alpha_2$ \\
         \hline
         Bias & MPL & -0.007 & -0.016 & -0.001 & -0.003 & 0.004 \\
         & Ch & -0.027 & -0.064 & -0.004 & -0.002 & 0.010\\
         SE & MPL & 0.046 & 0.106 & 0.009 & 0.013 & 0.013 \\
         & & (0.050) & (0.117) & (0.010) & (0.015) & (0.015) \\
         & Ch & (0.049) & (0.125) & (0.010) & (0.014) & (0.015) \\
         CP & MPL & 0.97 & 0.93 & 0.93 & 0.85 & 0.88 \\
         & & (0.97) & (0.97) & (0.95) & (0.95) & (0.93) \\
         & Ch & (0.92) & (0.92) & (0.92) & (0.97) & (0.89) \\
         \hline
         \hline
    \end{tabular}
    \caption{Study 2b regression parameter simulation results; $n$ refers to the total sample size, $\bar{n}_i$ refers to the average number of longitudinal observations, $\pi^R$ refers to the proportion of right censored observations. For all scenarios summarised in this table, the true value of $\sigma_{\varepsilon}^2 = 0.1^2$.}
    \label{chen_mpl_comparison}
\end{table}

\begin{figure}
    \centering
    \includegraphics[width=0.40\textwidth]{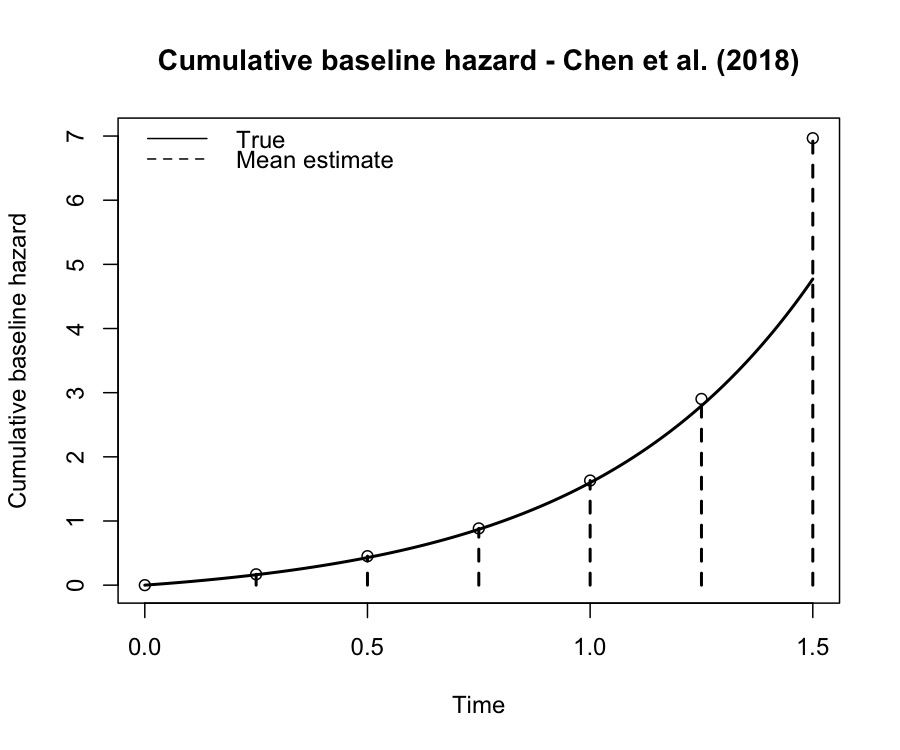}
    \includegraphics[width=0.40\textwidth]{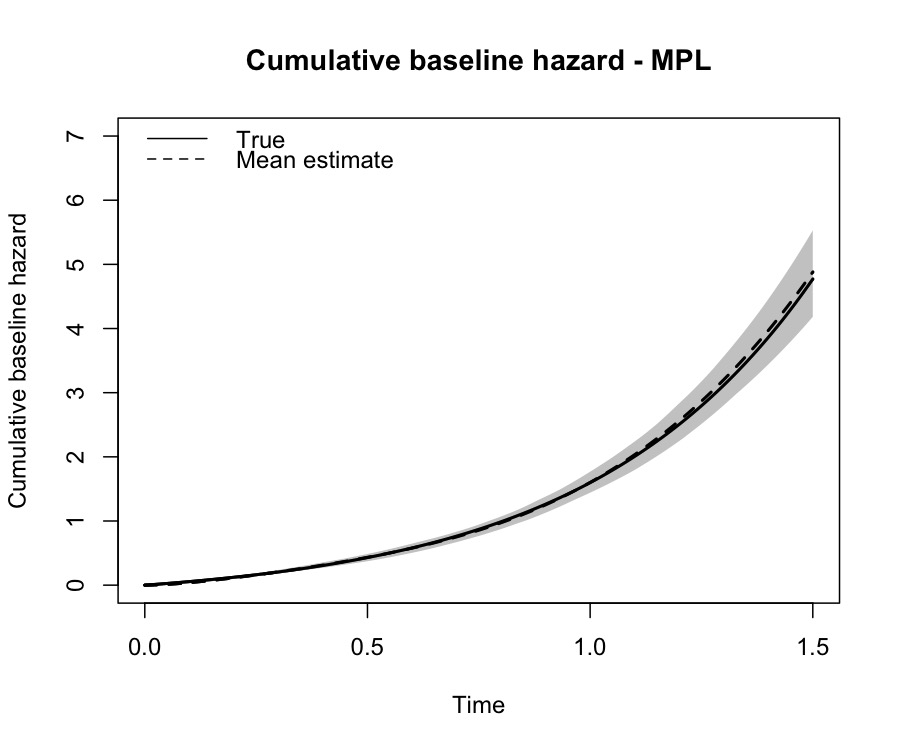}
    \includegraphics[width=0.40\textwidth]{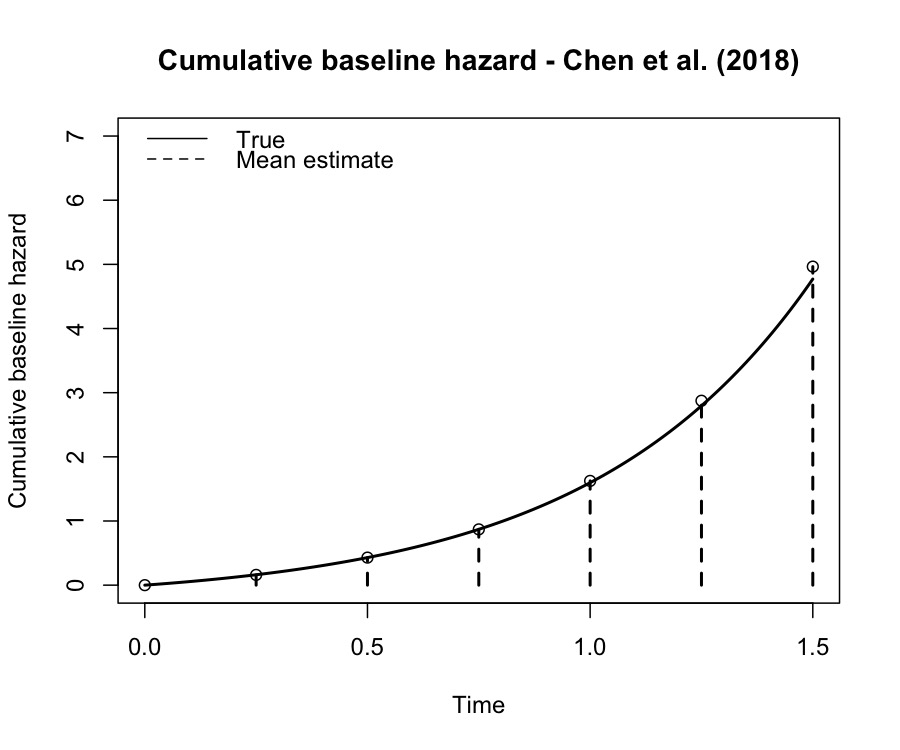}
    \includegraphics[width=0.40\textwidth]{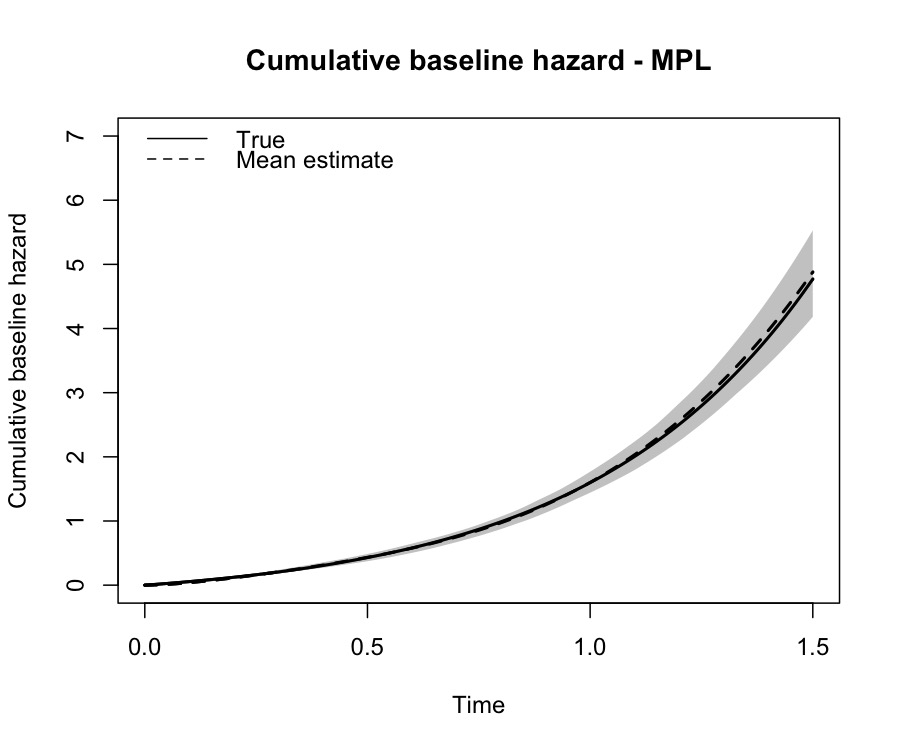}
    \caption{Estimates of the baseline cumulative hazard function for $n=200$ (top row) and $n=1000$ (bottom row) for \cite{chen_2018_TJM_IC} and our proposed MPL method.}
    \label{fig:Chen_cumHaz}
\end{figure}

\section{An application to melanoma clinical trial data}
\label{sec:app}
This section demonstrates the application of the MPL method for fitting a joint model with a partly-interval censored survival outcome, using real data from a melanoma clinical trial. The Anti-PD1 brain collaboration (ABC) trial was the first randomised study that demonstrated the efficacy of immunotherapy in melanoma brain metastases patients.  The study enrolled $n=79$ patients in Australia between 2014 and 2017, investigated treatment of melanoma brain metastases across three cohorts: (A) nivolumab plus ipilimumab on patients without previous local brain therapy ($n=36$), (B) nivolumab monotherapy on patients without previous local brain therapy ($n=27$), and (C) nivolumab monotherapy on patients with a history of local brain therapy failure ($n=16$) \citep{Long_ABC_2018}. 

We proposed to re-analyse the primary outcome of the trial, time to intracranial disease progression 
considering its partly-interval censored nature. All individuals in the trial had this event time either interval-, left-, or right-censored. The time interval in which disease progression occurred 
was denoted as $[t_f, t_p]$, where $t_f$ was the last known progression-free time point (which was 0 for left-censored individual and $>0$ for interval-censored individuals) and $t_p$ was the time point of the scan upon which a diagnosis of disease progression was made. Some individuals did not have a record of $t_p$ but did have a date of death due to intracranial disease progression, $t_d$, and these individuals where also left- or interval-censored in the interval $[t_f, t_d]$. Right-censored individuals had only a record of the last known progression-free time point $t_f$, and hence had an event interval $[t_f, \infty)$.

The longitudinal covariate of interest in this study was the repeated assessments via MRI of the sum of the diameters of intracranial lesions (called lesion size). The average number of longitudinal observations included in the analysis was $\bar{n}_i = 3.4$ (range $1-7$). The trajectories of lesion size of each patient during the study follow-up are displayed in Figure  \ref{fig:observed_lesions}. Baseline (time-fixed) covariates of interest in this analysis in addition to treatment group (A; B; C) included sex (male; female), age at diagnosis ($<50$ years; $\geq 50$ years), previous combined BRAF and MEK inhibitor therapy (yes; no), and lactate dehydrogenase (LDH) blood serum level (normal; high).

\begin{figure}
    \centering
    \includegraphics[width=0.65\textwidth]{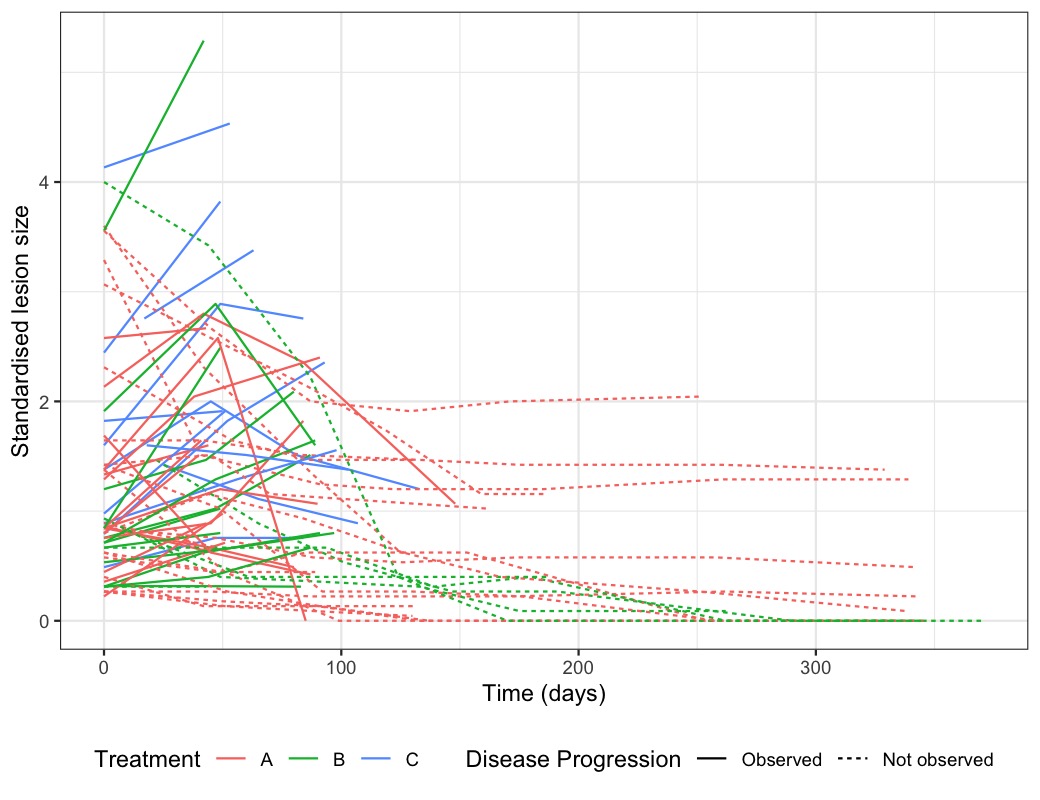}
    \caption{Individual trajectories of lesion sizes (standardised) during the trial follow-up. Line colour indicates treatment group (A, B or C). Line type indicates whether disease progression was observed during the follow-up period or not.}
    \label{fig:observed_lesions}
\end{figure}

To improve model performance, we first standardised the longitudinal variable lesion size by dividing the recorded values by the sample standard deviation $s = 22.5$; hence the value of $z_i(t)$ in the models below should be thought of as standardised 
lesion size, rather than lesion size directly. We considered the following joint model for 
the standardised lesion size measurement and 
time to disease progression:
\begin{align*}
    h_i(t) &= h_0(t) \exp\{\beta_1 x_{i1} + \beta_2 x_{i2} + \beta_3 x_{i3} + \beta_4 x_{i4} + \beta_5 x_{i5} + \beta_6 x_{i6} + \gamma z_i(t)\} \\
    z_i(t) &= \sum_{r = 1}^5 \big[\alpha_r \phi_r(t) \big] + 
    \sum_{r = 1}^3 \big[ \alpha_{r+5} x_{i1}\phi_r(t) \big] + 
    \sum_{r = 1}^4 \big[\kappa_{ir} \phi_r(t) \big]
\end{align*}
where
    $x_{i1} = 1$ if  $i$ as in treatment group A, $0$ otherwise;  
    $x_{i2} = 1$  if $i$ was in treatment group C, 0 otherwise; 
    $x_{i3} = 1$  if $i$ had previous BRAF treatment, 0 otherwise; 
    $x_{i4} = 1$ if $i$ was male, 0  otherwise; 
    $x_{i5} = 1$ if $i$ was diagnosed before the age of 50, 0 otherwise; 
    and $x_{i6} = 1$ if $i$ had a high LDH serum level, 0 otherwise.
%
Each $\phi_r(t)$ for $r = 1, \dots 5$ was a B-spline basis function, with external knots located at the minimum and maximum of the longitudinal follow-up times and one internal knot placed at the $50th$ percentile of the longitudinal follow-up times. An interaction term between $x_{i1}$ and $\phi_r(t)$ was included for the first three basis functions so that the mean trajectory of lesion size for participants who received nivolumab plus ipilimumab was allowed to vary from that of participants who received nivolumab alone (with or without a history of local brain therapy failure); and each $\kappa_{ir}$ was a normally distributed random effect. 

Results from the fitted joint model are shown in Table \ref{tab:mel_results}. According to the MPL model, individuals who received nivolumab plus ipilimumab (Group A) had significantly lower risk of disease progression compared to those who received nivolumab alone and were local brain therapy naive (Group B) (HR: $0.366$, $p = 0.003$). Individuals who had a history of local brain therapy failure and received nivolumab alone (Group C) had an increased, but not significantly different, risk of disease progression compared to those who received nivolumab alone and were local brain therapy naive (Group B) ($p = 0.255$). Larger lesion sizes were significantly associated with increased risk of disease progression (HR for a standard deviation unit increase: 1.895, $p < 0.001$). Previous BRAF treatment, having a high LDH serum level and being male were all non-significantly associated with increased risk of disease progression, while a younger age at diagnosis was non-significantly associated with a lower risk of disease progression. Results from the midpoint imputation model give comparable results for the time-fixed covariates, but substantially different results for the association between lesion size and risk of disease progression, with the midpoint imputation model results suggesting that there is no significant association between larger lesions and higher risk of disease progression. 

\begin{table}[]
    \scriptsize
    \centering
    \begin{tabular}{ll | ccc | ccc}
         \hline
         && \multicolumn{3}{c}{Penalised likelihood model} & \multicolumn{3}{c}{Midpoint imputation model} \\ 
         \hline
         Covariates & Level/Comparison & HR & 95\% CI & p-value & HR & 95\% CI & p-value \\
         \hline
          Trt Group & A vs B & 0.366 & 0.178, 0.753 & 0.006 & 0.366 & 0.180, 0.743 & 0.005\\
          & C vs B & 1.850 & 0.641, 5.336 & 0.255 & 1.890 & 0.739, 4.835 & 0.184\\
          Previous BRAF Trt & Yes vs No & 1.508 & 0.677, 3.361 & 0.315 & 1.426 & 0.690, 2.947 & 0.338\\
          Gender & Male vs Female & 1.236 & 0.642, 2.381 & 0.525 & 1.343 & 0.667, 2.708 & 0.409\\
          Age at Diagnosis & $<50$ vs $\geq 50$ & 0.743 & 0.324, 1.701 & 0.482 & 0.880 & 0.423, 1.829 & 0.731 \\
          LDH Level & High vs Normal & 1.672 & 0.848, 3.296 & 0.138 & 1.608 & 0.813, 3.177 & 0.172 \\
          Standardised Lesion Size & +1SD (+22.5cm) & 1.895 & 1.316, 2.728 & $<0.001$ & 1.035 & 0.752, 1.424 & 0.834 \\
          \hline
    \end{tabular}
    \caption{Results from the fitted joint model. Results from the penalised likelihood model fitted to the interval censored data are on the left, and results from a joint model fitted using midpoint imputation with the \texttt{JM} package are on the right.}
    \label{tab:mel_results}
\end{table}

\begin{figure}
    \centering
    \includegraphics[width=0.40\textwidth]{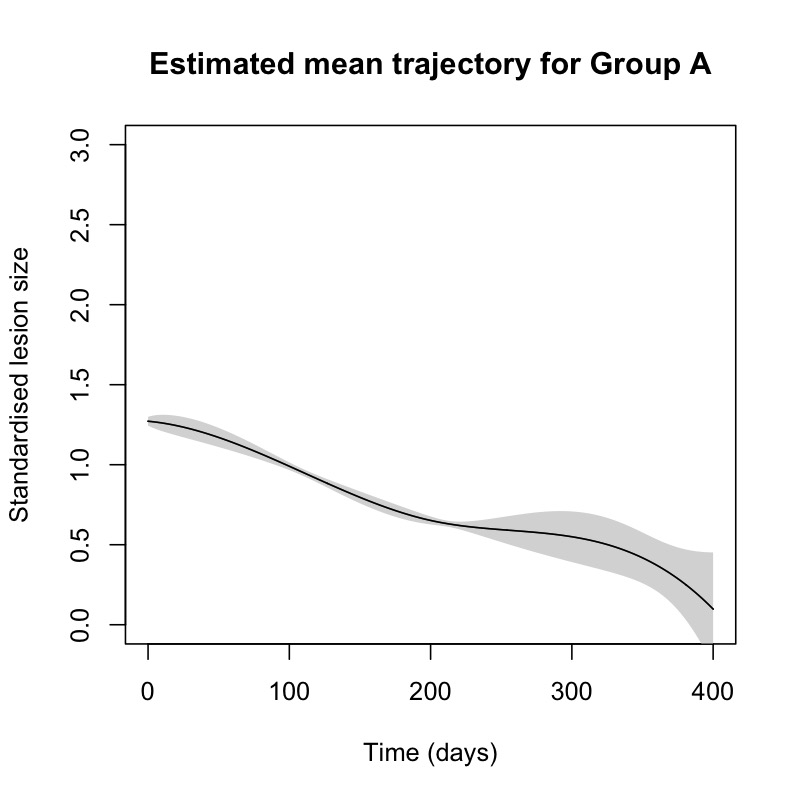}
    \includegraphics[width=0.40\textwidth]{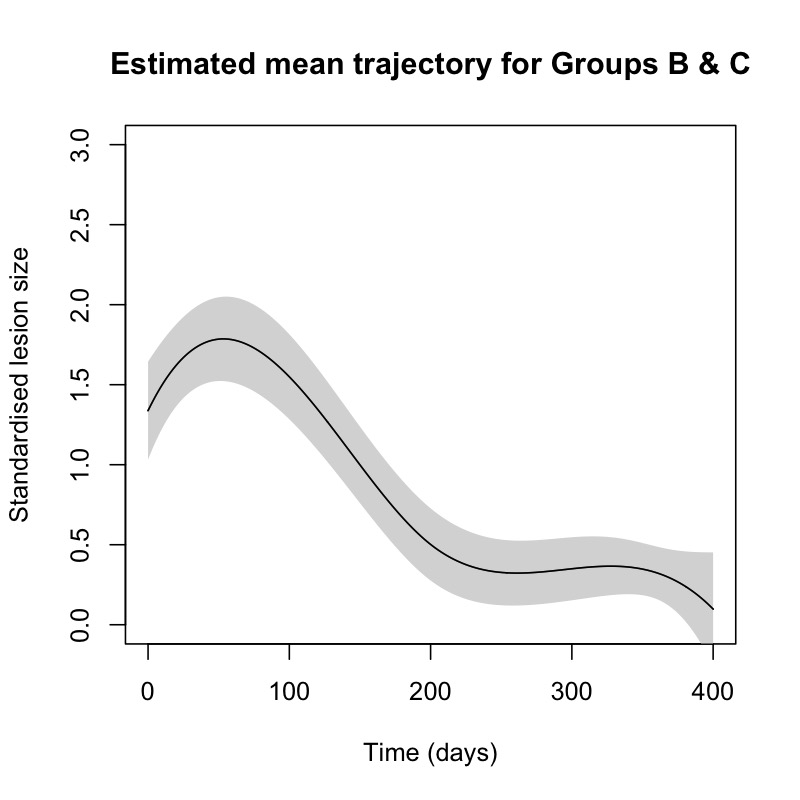}
    \caption{Estimated mean longitudinal trajectories of standardised lesion size for Treatment Group A (nivolumab plus ipilimumab) and Treatment Groups B and C (nivolumab alone) with point-wise 95\% confidence intervals.}
    \label{fig:mean_long_trajectory}
\end{figure}

Figure \ref{fig:mean_long_trajectory} shows the estimated mean longitudinal trajectory of standardised lesion size over the follow-up period for Group A (nivolumab plus ipilimumab) and Groups B and C (nivolumab alone) along with their point-wise 95\% confidence intervals. The estimated mean trajectories indicate that although these two groups started out with similar mean lesion sizes, those who received nivolumab plus ipilimumab experienced, on average, a steady decline in lesion size over the follow-up, while those who receieved nivolumab alone had, on average, an increase in their lesion size early in the trial period. Note that in the longitudinal model specified above, parameters $\alpha_6$, $\alpha_7$ and $\alpha_8$ represent the interaction between the B-spline term for time and the treatment group. A test of significance of these three parameters produced $p$-values of 0.7171, $<0.001$ and $<0.001$ respectively, indicating that the spline coefficients for the basis functions $\phi_2(t)$ and $\phi_3(t)$ significantly differ between the groups.

\begin{figure}
    \centering
    \includegraphics[width=0.40\textwidth]{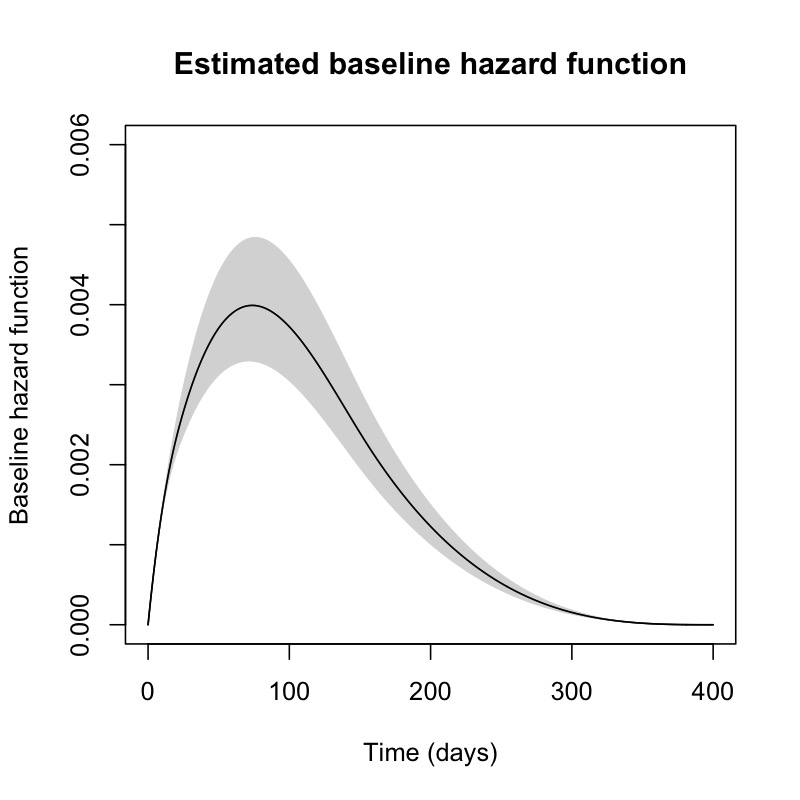}
    \includegraphics[width=0.40\textwidth]{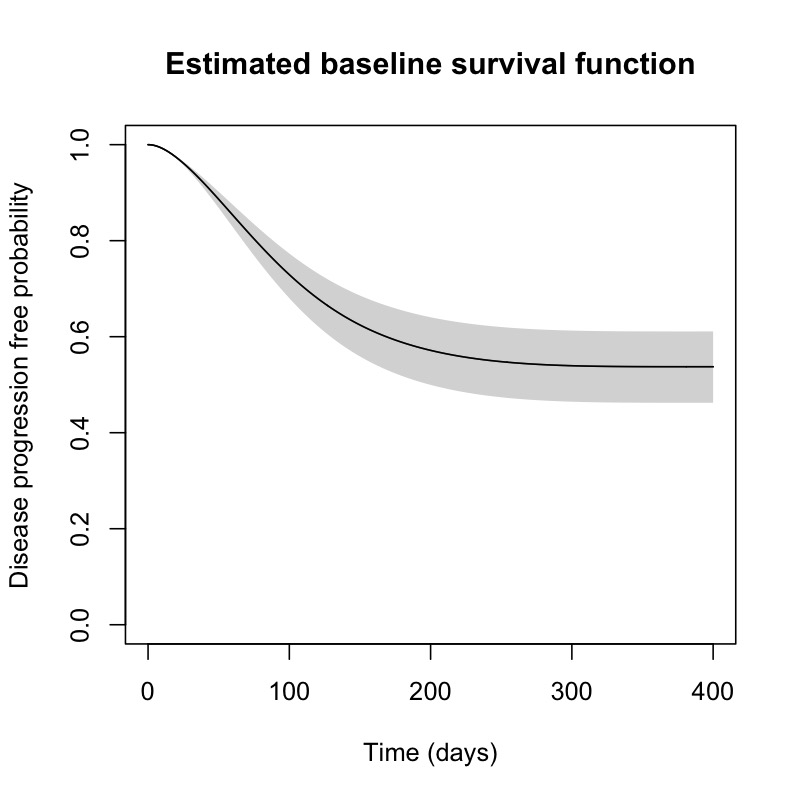}
    \caption{Estimated baseline hazard function (left) and baseline survival function (right) with 95\% point-wise confidence intervals in grey.}
    \label{fig:baseline_plots}
\end{figure}

Figure \ref{fig:baseline_plots} shows the estimated baseline hazard function and baseline survival function obtained from the fitted joint model using M-spline approximation. The grey area on both plots shows the point-wise 95\% confidence intervals. From the baseline hazard, it is clear that the risk of disease progression increases sharply over the first approximately 90 days, after which it decreases and after 300 days is very close to zero. Similarly, the baseline survival function shows that the sharpest decrease in probability of remaining free of disease progression is early in the follow-up period, with the risk of disease progression stabilising later in the follow-up. 

\begin{figure}
    \centering
    \includegraphics[width=0.65\textwidth]{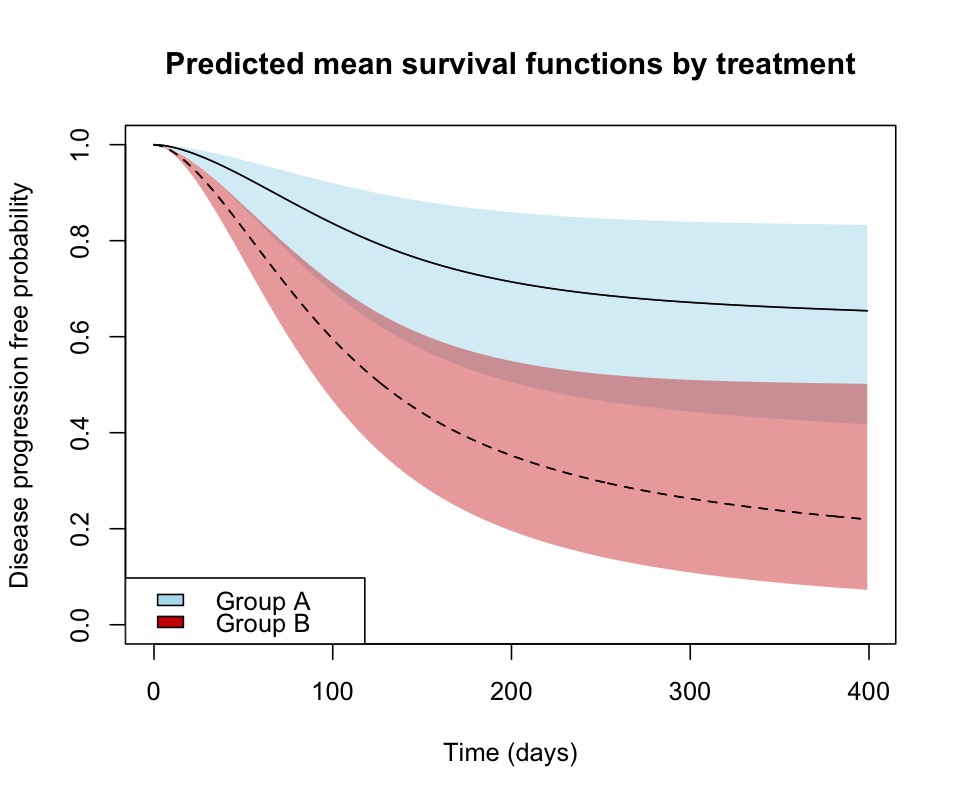}
    \caption{Predicted survival functions showing the probability of being free of disease progression over time for Group A (nivolumab plus ipilimumab, blue) and Group B (nivolumab alone, red), with their 95\% point-wise confidence intervals.}
    \label{fig:predicted_surv}
\end{figure}

The proposed MPL method for fitting the joint model allows for the prediction of survival probabilities for combinations of covariates, along with their 95\% point-wise confidence intervals. Figure \ref{fig:predicted_surv} compares the mean predicted survival functions for those who received nivolumab plus ipilimumab (Group A) and those who received nivolumab alone (Group B), with all other covariates set to zero (i.e. with no previous BRAF treatment, female gender, over the age of 50 at diagnosis and normal LDH levels). When computing these predicted survival functions, the lesion size trajectory for Group A was assumed to be the mean trajectory shown on the left hand plot in Figure \ref{fig:mean_long_trajectory}, and the lesion size trajectory for Group B was assumed to be the trajectory shown on the right hand plot in Figure \ref{fig:mean_long_trajectory}. 
The confidence intervals around these predicted survival functions account for uncertainty in the estimates of the fixed effects parameters in the model but do not account for the random effects, and hence should be interpreted as the confidence intervals for the mean or expected survival functions for these two populations. It is clear that individuals receiving nivolumab plus ipilimumab are expected to have a higher probability of being free of disease progression compared to those who received nivolumab alone. After one year of follow-up, the estimated probability of being free of disease progression for those who were treated with nivolumab plus ipilimumab was 0.66 (95\% CI: 0.43, 0.83), compared to 0.23 (95\% CI: 0.08, 0.50) for patients who received nivolumab alone. 

As our proposed MPL method includes direct estimation of the random effect values for individuals in the sample to which the joint model is fitted, it is also straightforward to make individual-level predictions about survival probabilities over time. To illustrate this, we consider two individuals from the ABC trial cohort. The first (refereed as individual A thereafter) was in Treatment Group C, had previous BRAF treatment, was female and over 50 years of age at the time of diagnosis, and had a normal LDH level. Individual A experienced a disease progression between $t = 65$ days and $t = 107$ days. The second (refereed as individual B) was in Treatment Group A, had not had previous BRAF treatment, was female, over 50 years old at the time of diagnosis, and had normal LDH levels. Individual B was right censored at $t = 127$ days. Figure \ref{fig:individual_predictions} shows the predicted longitudinal trajectories and progression-free survival probabilities for these two individuals across their respective follow-up times. 

\begin{figure}
    \centering
    \includegraphics[width=0.40\textwidth]{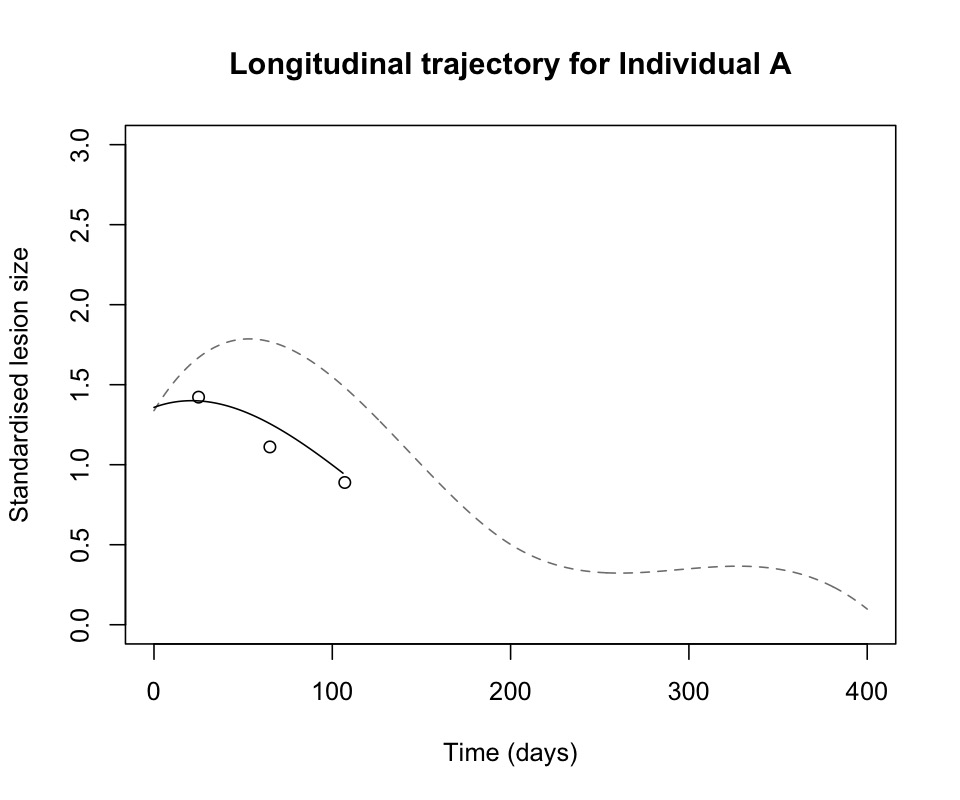}
    \includegraphics[width=0.40\textwidth]{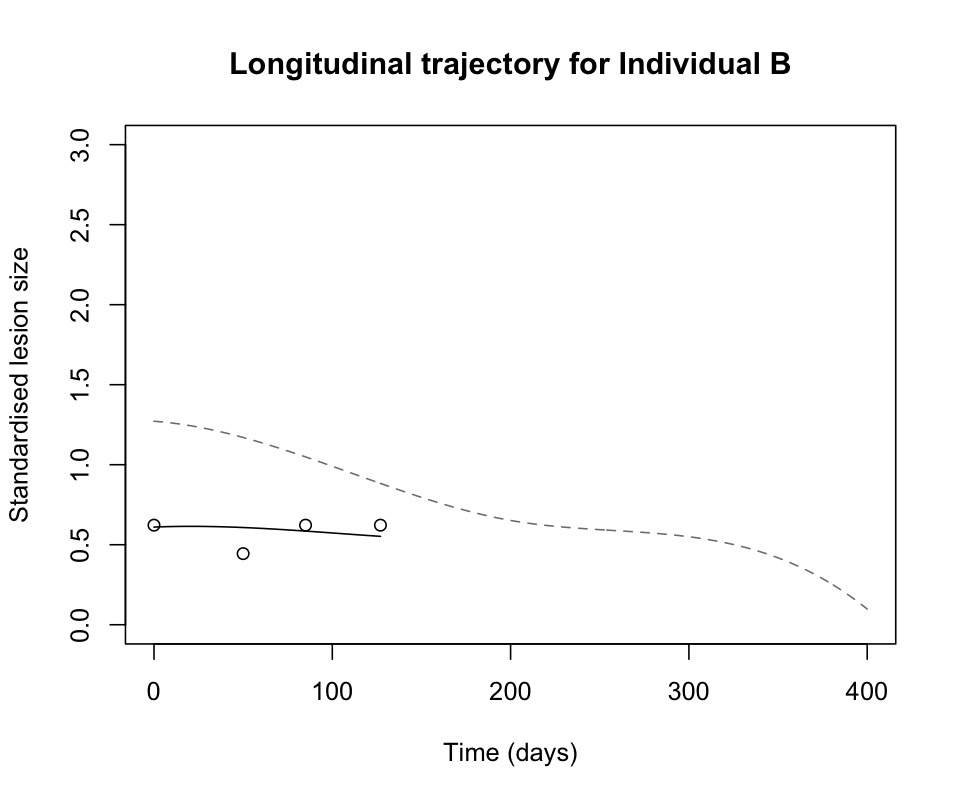}
    \includegraphics[width=0.40\textwidth]{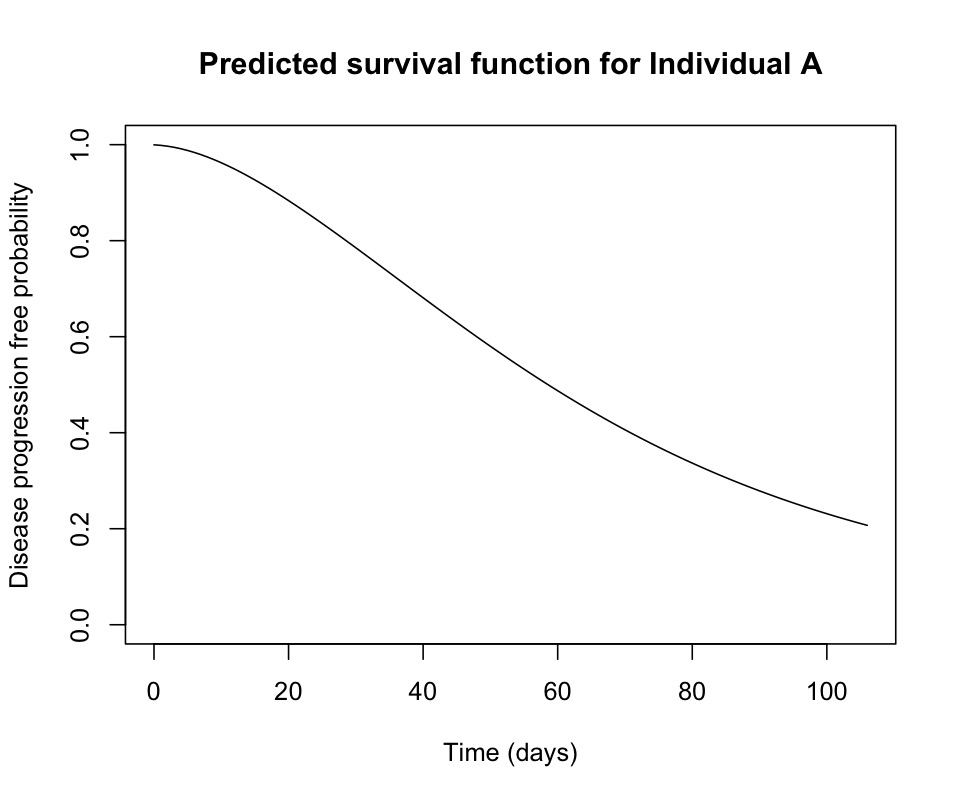}
    \includegraphics[width=0.40\textwidth]{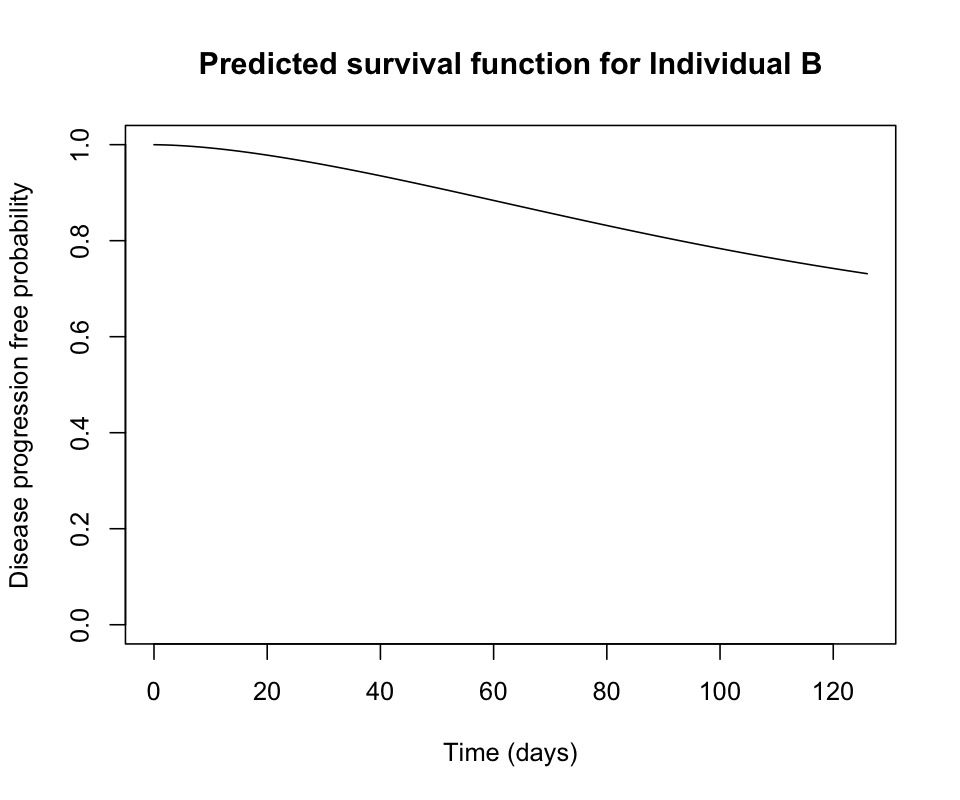}
    \caption{Predicted individual longitudinal trajectories and survival probabilities for Individual A (left) and Individual B (right). Plots of longitudinal trajectories show the predicted individual trajectories (solid line), the mean treatment group trajectories from which the individuals deviate (dashed line) and individual observations of standardised lesion sizes (open circles). }
    \label{fig:individual_predictions}
\end{figure}

For Individual B in particular, as they were right censored with no disease progression at $t = 127$ days of follow-up, a natural question might be what their probability of continuing to be free of disease progression is. For instance, it may be of interest to estimate the probability for individual B to remain disease-progression free after one year, given their observed covariates, predicted longitudinal trajectory, and their disease progression free survival to $t = 127$. This is equivalent to making a prediction of their conditional survival probability \citep{rizop_dynpred}, which can be expressed as $\pi_i(u | t) = S(u)/S(t)$, where $t$ is the latest known event-free time and $u$ is a future time point of interest. For Individual B, we might be interested in $\pi_B(365 | 127) = S(365)/S(127)$. From our fitted joint model, we predict that $\hat{\pi}_B(365 | 127) = 0.61/0.73 = 0.84$, i.e. the estimated probability of Individual B surviving free of disease progression to $t = 365$ days given their survival to $t = 127$ days is 84\%. \cite{rizop_dynpred} considers in detail the issues of making these types of dynamic predictions for individuals outside of the original joint model sample and obtaining standard error estimates and confidence intervals for these predictions in the case of joint models fitted to right censored data. In future research, we intend to extend these methods to joint models fitted to interval censored data using our proposed MPL method.

\section{Discussion and concluding remarks}
\label{sec:concl}

In this paper, we have developed a new penalized likelihood-based estimation method for joint modeling of partly interval-censored event times and error-contaminated longitudinal covariates. The latter model contains random effects. Our method differs from the typical approach of maximising a joint likelihood using an EM algorithm where random effects are treated as missing data (e.g. \cite{Rizo_2012}), in that we estimate all the parameters, including the random effects (treated as random parameters). We obtain a smooth approximation to the baseline hazard function through the inclusion of a penalty term in our joint likelihood function. Smoothing parameters and the variance components of the random effects and measurement error distributions are estimated using an REML approach, which differs from existing methods which obtain MLE estimates of the variance components.  

In our simulation studies, we compared our method with the R \texttt{JM} package, and found that for right-censored event times, our approach generally produced lower biases and better coverage probabilities. If the event times are partly interval-censored, our method generally performs better than using middle-point imputation, and also performs favourably compared to the method proposed by \cite{chen_2018_TJM_IC}. In particular, for study designs where there are a large number of unique event monitoring times (e.g. the ABC trial), our proposed spline approximation to the baseline hazard function using $m \ll n$ basis functions may be computationally advantageous compared to the step-wise estimate of the baseline cumulative hazard used by \cite{chen_2018_TJM_IC}, which requires an additional parameter to be estimated for each unique censoring time. Additionally, this paper proposes asymptotic variance estimators that do not rely on the bootstrap method and consequently significantly boost computational efficiency compared to existing methods for interval censored joint models \citep{chen_2018_TJM_IC, yi_2020_IC_TVC, wu_2020_multivariate_IC_TVC}.
An application of our proposed method to melanoma clinical trial data illustrates the utility of our method for making population and individual-level survival predictions using data sets with interval-censored event times. 

The method proposed in this paper builds on previous methods for including time-varying covariates in a Cox model fitted to partly-interval censored data \citep{WeMa23} by allowing for measurement error in the longitudinal observations and by incorporating individual trajectories via random effects. An important application of joint models is to make dynamic predictions, where individual-level survival predictions can be updated over time as observations of the longitudinal trajectory accrue \citep{proust_lima_2009, taylor_dynPred_2013, rizop_dynpred}. As such, a focus of our future research will be to develop dynamic prediction methods appropriate to partly-interval censored data using this proposed MPL method for fitting joint models, including the development of predictive discrimination and calibration methods. Predictive discrimination and calibration measures have been developed for interval-censored survival data (see \cite{Kim2022}), but to our knowledge there are no existing methods available which incorporate both time-varying covariates and interval-censored event times.

Our investigation of the proposed MPL approach in this paper has some limitations. Firstly, although extending our method to incorporate multiple longitudinal covariates simultaneously would be straightforward, we have limited our scope to a single longitudinal covariate and did not consider a scenario with multiple longitudinal covariates in our simulation studies. Several previous authors have considered extensions of a joint model to include multiple longitudinal covariates under right censoring, often adopting a Bayesian framework for estimation \citep{rizo_2011_BayesianSP_JM, andrin_dynpred_2014, tang_bayesian_JM_2014, baghfalaki_BayesianJM_2014}. In particular, with multiple longitudinal covariates there may be correlated random effects across the covariates, and in such a case our proposed marginal likelihood method for obtaining the variance components would need to be extended to account for this correlation. Existing methods for fitting a joint model to interval censored data \citep{chen_2018_TJM_IC, yi_2020_IC_TVC, wu_2020_multivariate_IC_TVC} have likewise not investigated model performance with more than one longitudinal covariate, although \cite{wu_2020_multivariate_IC_TVC} does consider multivariate event time outcomes. Furthermore, we have not investigated the robustness of our proposed MPL approach to the assumption of normality in the random effects or measurement error. Previous research has indicated that joint models with a single longitudinal covariate may be robust to mis-specification of the random effects distribution \citep{song2002semiparametric, rizo_RE_misspec_2008}. However it is unclear whether this holds in the case of partly-interval censoring, and furthermore this robustness may be reduced where multiple longitudinal covariates are included in the joint model \citep{hickey_JM_review_2016}. As such, future investigation of the performance of the proposed MPL approach under more complex scenarios, e.g. with multiple longitudinal covariates, correlated random effects and/or mis-specified distributions, may be warranted. 

Finally, we note that long-term follow-up data of participants in the ABC trial indicated that a proportion had not experienced disease progression after several years. In the context of this study, being ``cured" (i.e. being un-susceptible to disease progression) is biologically plausible, indicating that a mixture cure joint model may be appropriate for this data. Some authors have considered mixture cure joint models in the context of right censored data (e.g. \cite{Kim_Cure_JM_2013}, \cite{Barbieri_Cure_JM_2020}), but to our knowledge there are no methods available for fitting a joint model to interval-censored data with a cured fraction. However, the penalised likelihood approach used here has previously been applied to a mixture cure Cox model (with no time-varying covariates) \citep{WebbMa_Cure_2022}. In the future we intend to extend this joint modelling approach to accommodate partly-interval censored survival data with a cured fraction.

\newpage

\section*{Data availability statement}

Data analysed in this manuscript may be accessed upon reasonable request by contacting the corresponding author. 

\section*{Conflict of interest}

The authors have no conflict of interest to declare.

\newpage

\bibliographystyle{abbrvnat}

\bibliography{references,Otherref}

\newpage

\section{Supplementary Material}

\appendix

\section{Details of score vector and Hessian matrix elements for Newton-MI algorithm}

\subsection{Details for updating estimate of $\beta$}

The estimate of $\boldsymbol{\beta}$ is updated at each iteration using the formula given by \ref{betaupdat}. In this formula, both $\mathbf{C}$ and $\mathbf{A}$ are $n \times n$ diagonal matrices with diagonal vectors $\mathbf{c}$ and $\mathbf{a}$ respectively. For vector $\mathbf{a}$, the $i$-th element is given by
$$
a_i = (\delta_i + \delta_i^R) H(t_i) - \delta_i^L \frac{H(t_i) S(t_i)}{1 - S(t_i)} + \delta_i^I \frac{S(t_i^L)H(t_i^L) - S(t_i^R)H(t_i^R)}{S(t_i^L) - S(t_i^R)}
$$
and for vector $\mathbf{c}$, the $i$-th element is given by 
\begin{align*}
    c_i = & (\delta_i + \delta_i^R) H(t_i) - \delta_i^L  \frac{H(t_i) S(t_i) - H^2(t_i) S(t_i)}{1 - S(t_i)} +  \delta_i^L  \frac{(H(t_i) S(t_i) )^2}{(1 - S(t_i))^2} \\
    & + \delta_i^I \frac{S(t_i^L)H(t_i^L) - S(t_i^R)H(t_i^R)}{S(t_i^L) - S(t_i^R)} + \delta_i^I \frac{S(t_i^L)S(t_i^R)(H(t_i^L) - H(t_i^R)^2}{(S(t_i^L) - S(t_i^R))^2}
\end{align*}

\subsection{Details for updating estimate of $\gamma$}

The estimate of $\boldsymbol{\gamma}$ is updated at each iteration using the formula given in \ref{gam_update}. In this formula, the vector $\mathbf{g}_{\boldsymbol{\gamma}}$ is of length $q$, with $r$-th element
\begin{align*}
    \mathbf{g}_{\boldsymbol{\gamma}, r} = & \sum_{i = 1}^n \delta_i z_{ir}(t_i) - (\delta_i + \delta_i^R) e^{\mathbf{x}_i ^{T}\boldsymbol{\beta}
    } A_{ir}(t_i) + \delta_i^L e^{\mathbf{x}_i ^{T}\boldsymbol{\beta}} \frac{A_{ir}(t_i) S(t_i)}{1-S(t_i)} \\
    &- \delta_i^I e^{\mathbf{x}_i ^{T}\boldsymbol{\beta}} \frac{S(t_i^L) A_{ir}(t_i^L) - S(t_i^R)A_{ir}(t_i^R)}{S(t_i^L) - S(t_i^R)}
\end{align*}
where $A_{ir}(t) = \int_0^{t} h_0(s) z_{ir}(s) \exp(\mathbf{z}_i(s)^{\top}\boldsymbol{\gamma}) ds$. The matrix $\mathbf{H}_{\gamma}$ has dimensions $q \times q$ with the $(r,s)$-th element equal to
\begin{align*}
    \mathbf{H}_{\gamma, r, s} = & \sum_{i = 1}^n e^{\mathbf{x}_i^{\top} \boldsymbol{\beta}} \bigg\{ - (\delta_i + \delta_i^R)  A_{irs}(t) + \delta_i^L \frac{S(t)A_{irs}(t)}{1 - S(t)} - \delta_i^L e^{\mathbf{x}_i^{\top} \boldsymbol{\beta}} \frac{A_{ir}(t)A_{is}(t)  S(t)}{(1-S(t))^2} \\
    & - \delta_i^I \frac{S(t_i^L)A_{irs}(t_i^L) - S(t_i^R)A_{irs}(t_i^R)}{S(t_i^L) - S(t_i^R)} \\
    & - \delta_i^I e^{\mathbf{x}_i^{\top} \boldsymbol{\beta}} \frac{S(t_i^L)S(t_i^R)(A_{ir}(t_i^L) - A_{ir}(t_i^R)) (A_{is}(t_i^L) - A_{is}(t_i^R))}{(S(t_i^L) - S(t_i^R))^2} \bigg\}
\end{align*}
where $A_{irs}(t) = \int_0^{t} h_0(s) z_{ir}(s) z_{is}(s) \exp(\mathbf{z}_i(s)^{\top}\boldsymbol{\gamma}) ds$.

\subsection{Details for updating estimate of $\theta$}

The estimate of $\boldsymbol{\theta}$ is updated at each iteration using the formula given in \ref{theta_update}. In this formula, $\mathbf{g_{\boldsymbol{\theta}}}$ is a vector of length $m$, with the $u$-th element equal to
\begin{align*}
    \mathbf{g}_{\boldsymbol{\theta}, u} = & \sum_{i = 1}^n \bigg( \delta_i \psi_u(t_i) - (\delta_i + \delta_i^R) e^{\mathbf{x}_i ^{T}\boldsymbol{\beta}} \Psi_{ui}^*(t_i) + \delta_i^L e^{\mathbf{x}_i ^{T}\boldsymbol{\beta}} \frac{ \Psi_{ui}^*(t_i) S(t_i)}{1-S(t_i)} \\
    &  - \delta_i^I e^{\mathbf{x}_i ^{T}\boldsymbol{\beta}} \frac{\Psi_{ui}^*(t_i^L) S(t_i^L) - \Psi_{ui}^*(t_i^R) S(t_i^R)}{S(t_i^L) - S(t_i^R)}\bigg) - 2 \lambda_{\theta} [\mathbf{G}_{\theta} \boldsymbol{\theta}]_u.
\end{align*}
The diagonal elements of $\mb{S}_{\bsb\theta}$ in Equation \ref{theta_update} are given by $\theta_u^{(k)}/d_u^{(k)}$ for $u = 1, ..., m$, where 
$ 
    d_u^{(k)} = \Big[\mb{g}^{(k)}_{\bsb\theta, u}\Big]^{-} + \upsilon.
$ 
Here, $\mb{g}^{(k)}_{\bsb\theta, u}$ denotes the $u$-th element of $\mb{g}_{\bsb\theta}^{(k)}$ and the notations $[a(x)]^-$ and $[a(x)]^{+}$ represent the positive and negative part of function $a(x)$, such that $a(x) = [a(x)]^+ - [a(x)]^-$. Additionally, $\upsilon$ is a small positive constant, such as $10^{-3}$ used to avoid the possible numerical issue of a zero denominator in the calculation of $d_u^{(k)}$, and does not have any impact on the final solution for $\bsb\theta$. Each $d_u$ is equal to
\begin{align*}
    d_u = & \sum_{i = 1}^n \bigg( (\delta_i + \delta_i^R) e^{\mathbf{x}_i ^{T}\boldsymbol{\beta}} \Psi_{ui}^*(t_i) + \delta_i^I e^{\mathbf{x}_i ^{T}\boldsymbol{\beta}} \frac{\Psi_{ui}^*(t_i^L) S(t_i^L)}{S(t_i^L) - S(t_i^R)}\bigg) + \big[2 \lambda_{\theta} [\mathbf{G}_{\theta} \boldsymbol{\theta}]_u \big]^{+}
\end{align*}
where $\Psi_{ui}^*(t_i) = \int_0^{t_i} \psi_u(s) \exp(\mathbf{z}_i(s)^{\top}\boldsymbol{\gamma}) ds$.

\subsection{Details for updating estimate of $\alpha$}

The estimate of $\boldsymbol{\alpha}$ is updated at each iteration using the formula given in \ref{alp_update}. In this formula, the vector $\mathbf{g}_{\boldsymbol\alpha}$ is of length $l'$, with $b$-th element
\begin{align*}
    \mathbf{g}_{\boldsymbol{\alpha}, b} = &  \sum_{i = 1}^n \bigg( \delta_i \gamma \phi_{b} (t_i) - (\delta_i + \delta_i^R)  e^{\mathbf{x}_i ^{T}\boldsymbol{\beta}} \gamma D_{i b}(t_i) + \delta_i^L \gamma e^{\mathbf{x}_i ^{T}\boldsymbol{\beta}} \frac{S(t_i) D_{ib}(t_i)}{1 - S(t_i)}  \\
    & - \delta_i^I \gamma e^{\mathbf{x}_i ^{T}\boldsymbol{\beta}} \frac{S(t_i^L) D_{ib}(t_i^L) - S(t_i^R) D_{ib}(t_i^R)}{S(t_i^L) - S(t_i^R)} \\
    & + \frac{1}{\sigma_{\varepsilon}^2} \sum_{a = 1}^{n_i} \phi_{b}(t_{ia})(\tilde{z}_{i}(t_{ia}) - z_{i}(t_{ia}))  \bigg) - 2 \lambda_{\alpha} [ \mathbf{G}_{\alpha} \boldsymbol{\alpha} ]_b
\end{align*}
where $D_{ib}(t) = \int_0^{t_i} h_0(s) \phi_{b}(s) \exp(\mathbf{z}_i(s)^{\top}\boldsymbol{\gamma}) ds$. The matrix $\mathbf{H}_{\alpha}$ has dimensions $l' \times l'$, and the $(b, d)$-th element (for $b = 1, \dots, l'$ and $d = 1, \dots, l'$) is equal to
\begin{align*}
    \mathbf{H}_{\boldsymbol{\alpha}, b, d} = & \sum_{i=1}^n e^{\mathbf{x}_i^{\top} \boldsymbol{\beta}} \gamma^2 \bigg\{ - (\delta_i + \delta_i^R) D_{i bd}(t) + \delta_i^L \frac{S(t) D_{i bd}(t)}{1 - S(t)} - e^{\mathbf{x}_i^{\top} \boldsymbol{\beta}} \delta_i^L \frac{S(t) D_{i b}(t) D_{id}(t)}{(1 - S(t))^2}\\
    & - \delta_i^I \frac{S(t_i^L)D_{i bd}(t_i^L) - S(t_i^R)D_{i bd}(t_i^R)}{S(t_i^L) - S(t_i^R)} \\
    & - \delta_i^I e^{\mathbf{x}_i^{\top} \boldsymbol{\beta}} 
    \frac{S(t_i^L)S(t_i^R)(D_{i b}(t_i^L) - D_{i b}(t_i^R))(D_{id}(t_i^L) - D_{id}(t_i^R))}{(S(t_i^L) - S(t_i^R))^2} \\
    & - \frac{1}{\sigma^2_{\varepsilon}} \sum_{a = 1}^{n_i} \phi_{b}(t_{ia}) \phi_d(t_{ia}) \bigg\} - 2 \lambda_{\alpha} [\mathbf{G}_{\alpha}]_{b,d}
\end{align*}
where $D_{i bd}(t) = \int_0^t h_0(s) \phi_{b}(s) \phi_d(s) \exp(\mathbf{z}_i(s)^{\top}\boldsymbol{\gamma}) ds$.

\subsection{Details for updating estimate of $\kappa$}

The estimate of $\boldsymbol{\kappa}$ is updated at each iteration using the formula given in \ref{kap_update}. In this formula, the vector $\mathbf{g}_{\boldsymbol{\kappa}}$ is of length $l \times n$. The element corresponding to the $c$-th basis function and the $i$-th individual is
\begin{align*}
    \mathbf{g}_{\boldsymbol{\kappa}, ci} & = \delta_i \gamma \xi_{c} (t_i) - (\delta_i + \delta_i^R)  e^{\mathbf{x}_i ^{T}\boldsymbol{\beta}} \gamma D_{i c}(t_i) + \delta_i^L \gamma e^{\mathbf{x}_i ^{T}\boldsymbol{\beta}} \frac{S(t_i) D_{ic}(t_i)}{1 - S(t_i)}  \\
    & - \delta_i^I \gamma e^{\mathbf{x}_i ^{T}\boldsymbol{\beta}} \frac{S(t_i^L) D_{ic}(t_i^L) - S(t_i^R) D_{ic}(t_i^R)}{S(t_i^L) - S(t_i^R)}  + \frac{1}{\sigma_{\varepsilon}^2} \sum_{a = 1}^{n_i} \xi_{c}(t_{ia})(\tilde{z}_{i}(t_{ia}) - z_{i}(t_{ia})) - \frac{\kappa_{ci}}{\sigma^2_{\kappa_c}}
\end{align*}
where $D_{ic}(t) = \int_0^{t_i} h_0(s) \xi_{c}(s) \exp(\mathbf{z}_i(s)^{\top}\boldsymbol{\gamma}) ds$. The matrix $\mathbf{H}_{\kappa}$ has $(l \times n) \times (l \times n)$ dimensions, and the element corresponding to the $c$-th and $f$-th bases (both for individual $i$) is equal to
\begin{align*}
    \mathbf{H}_{\kappa, ci, fi} & = e^{\mathbf{x}_i^{\top} \boldsymbol{\beta}} \gamma^2 \bigg\{ - (\delta_i + \delta_i^R) D_{icf}(t) + \delta_i^L \frac{S(t) D_{icf}(t)}{1 - S(t)} - \delta_i^L e^{\mathbf{x}_i^{\top} \boldsymbol{\beta}} \frac{S(t) D_{ic}(t) D_{if}(t)}{(1 - S(t))^2} \\
    & - \delta_i^I \frac{S(t_i^L)D_{icf}(t_i^L) - S(t_i^R)D_{icf}(t_i^R)}{S(t_i^L) - S(t_i^R)} \\
    & - \delta_i^I e^{\mathbf{x}_i^{\top} \boldsymbol{\beta}} \frac{S(t_i^L)S(t_i^R)(D_{ic}(t_i^L) - D_{ic}(t_i^R))(D_{if}(t_i^L) - D_{if}(t_i^R))}{(S(t_i^L) - S(t_i^R))^2} \\
    & - \frac{1}{\sigma^2_{\varepsilon}} \sum_{a = 1}^{n_i} \xi_{f}(t_{ia}) \xi_c(t_{ia}) \bigg\}
\end{align*}
where $D_{icf}(t) = \int_0^{t_i} h_0(s) \xi_{c}(s)\xi_{f}(s) \exp(\mathbf{z}_i(s)^{\top}\boldsymbol{\gamma}) ds$. Note however in the case where $c = f$, we have the expression
\begin{align*}
    \mathbf{H}_{\kappa, ci, ci} & = e^{\mathbf{x}_i^{\top} \boldsymbol{\beta}} \gamma^2 \bigg\{ - (\delta_i + \delta_i^R) D_{icc}(t) + \delta_i^L \frac{S(t) D_{icc}(t)}{1 - S(t)} - \delta_i^L e^{\mathbf{x}_i^{\top} \boldsymbol{\beta}} \frac{S(t) D_{ic}^2(t)}{(1 - S(t))^2} \\
    & - \delta_i^I \frac{S(t_i^L)D_{icc}(t_i^L) - S(t_i^R)D_{icc}(t_i^R)}{S(t_i^L) - S(t_i^R)} \\
    & - \delta_i^I e^{\mathbf{x}_i^{\top} \boldsymbol{\beta}} \frac{S(t_i^L)S(t_i^R)(D_{ic}(t_i^L) - D_{ic}(t_i^R))^2}{(S(t_i^L) - S(t_i^R))^2} \\
    & - \frac{1}{\sigma^2_{\varepsilon}} \sum_{a = 1}^{n_i} \xi_{c}(t_{ia}) \xi(t_{ia}) - \frac{1}{\sigma_{\kappa_c}^2} \bigg\}
\end{align*}
where $D_{icc}(t) = \int_0^{t_i} h_0(s) \xi_{c}^2(s) \exp(\mathbf{z}_i(s)^{\top}\boldsymbol{\gamma}) ds$. Additionally, note that elements of this matrix corresponding to different individuals are equal to 0.

\subsection{Additional Hessian matrix elements}

The additional second derivatives required for the Hessian matrix, in addition to the diagonal elements which are detailed in the previous sections, are given below.

\begin{align*}
    \frac{\partial \Phi ( \boldsymbol{\eta} )}{\partial \beta_j \partial \gamma_r^{\top}} = &  \sum_{i = 1}^n x_{ij} e^{\mathbf{x}_i^{\top} \boldsymbol{\beta}}  \bigg\{ -(\delta_i + \delta_i^R)  A_{ir}(t)  + \delta_i^L \frac{S(t)A_{ir}(t)}{1 - S(t)}  - \delta_i^L \frac{H(t)S(t)A_{ir}(t)}{ (1 - S(t))^2 }\\
    & - \delta_i^I  \frac{S(t_i^L)A_{ir}(t_i^L) - S(t_i^R)A_{ir}(t_i^R)}{S(t_i^L) - S(t_i^R)} - \delta_i^I  \frac{S(t_i^L)S(t_i^R)(H(t_i^L) - H(t_i^R))(A_{ir}(t_i^L) - A_{ir}(t_i^R))}{(S(t_i^L) - S(t_i^R))^2}\bigg\} 
\end{align*}
where $A_{ir}(t)$ is defined as above.

\begin{align*}
    \frac{\partial \Phi ( \boldsymbol{\eta} )}{\partial \beta_j \partial \theta_u^{\top}} = & \sum_{i = 1}^n 
    x_{ij} e^{\mathbf{x}_i^{\top} \boldsymbol{\beta}} \bigg\{ -(\delta_i + \delta_i^R) \Psi_{ui}^*(t) + \delta_i^L \frac{S(t) \Psi_{ui}^*(t)}{1 - S(t)} - \delta_i^L \frac{S(t)H(t) \Psi_{ui}^*(t)}{(1-S(t))^2} \\
    & - \delta_i^I \frac{S(t_i^L) \Psi_{ui}^*(t_i^L) - S(t_i^R) \Psi_{ui}^*(t_i^R)}{S(t_i^L) - S(t_i^R) } - \delta_i^I \frac{S(t_i^L)S(t_i^R)(H(t_i^L) - H(t_i^R))(\Psi_{ui}^*(t_i^L) - \Psi_{ui}^*(t_i^R))}{(S(t_i^L) - S(t_i^R))^2} \bigg\}
\end{align*}
where $\Psi_{ui}^*(t)$ is defined as above.

\begin{align*}
    \frac{\partial \Phi ( \boldsymbol{\eta} )}{\partial \beta_j \partial \alpha_{b}^{\top}} = & \sum_{i=1}^n x_{ij} e^{\mathbf{x}_i^{\top} \boldsymbol{\beta}} \gamma \bigg\{ -(\delta_i + \delta_i^R)D_{i b}(t) + \delta_i^L \frac{S(t)D_{i b}(t)}{1 - S(t)} - \delta_i^L  \frac{S(t) H(t) D_{i b}(t) }{(1-S(t))^2}\\
    & - \delta_i^I \frac{S(t_i^L)D_{i b}(t_i^L) - S(t_i^R)D_{i b}(t_i^R)}{S(t_i^L) - S(t_i^R)} - \delta_i^I \frac{S(t_i^L)S(t_i^R)(H(t_i^L) - H(t_i^L))(D_{i b}(t_i^L) - D_{i b}(t_i^R)}{(S(t_i^L) - S(t_i^R))^2} \bigg\}
\end{align*}
where $D_{i b}(t)$ is defined as above.

\begin{align*}
    \frac{\partial \Phi ( \boldsymbol{\eta} )}{\partial \beta_j \partial \kappa_{ic}^{\top}} = & x_{ij} e^{\mathbf{x}_i^{\top} \boldsymbol{\beta}} \gamma \bigg\{ -(\delta_i + \delta_i^R)D_{ic}(t) + \delta_i^L \frac{S(t)D_{i c}(t)}{1 - S(t)} - \delta_i^L  \frac{S(t) H(t) D_{ic}(t) }{(1-S(t))^2}\\
    & - \delta_i^I \frac{S(t_i^L)D_{ic}(t_i^L) - S(t_i^R)D_{i c}(t_i^R)}{S(t_i^L) - S(t_i^R)} - \delta_i^I \frac{S(t_i^L)S(t_i^R)(H(t_i^L) - H(t_i^L))(D_{ic}(t_i^L) - D_{ic}(t_i^R)}{(S(t_i^L) - S(t_i^R))^2} \bigg\}
\end{align*}
where $D_{i c}(t)$ is defined as above.

\begin{align*}
    \frac{\partial \Phi ( \boldsymbol{\eta} )}{\partial \gamma_r \partial \theta_u^{\top}} = & \sum_{i=1}^n e^{\mathbf{x}_i^{\top} \boldsymbol{\beta}} \bigg\{ -(\delta_i + \delta_i^R)P_{iru}(t) + \delta_i^L \frac{S(t)P_{iru}(t)}{1 - S(t)} - \delta_i^L e^{\mathbf{x}_i^{\top} \boldsymbol{\beta}} \frac{A_{ir}(t) S(t) \Psi_u^*(t)}{(1-S(t))^2}\\
    & - \delta_i^I \frac{S(t_i^L)P_{iru}(t_i^L) - S(t_i^R) P_{iru}(t_i^R)}{S(t_i^L) - S(t_i^R)} - \delta_i^I e^{\mathbf{x}_i^{\top} \boldsymbol{\beta}} \frac{S(t_i^L)S(t_i^R)(A_{ir}(t_i^L) - A_{ir}(t_i^R))(\Psi_u^*(t_i^L) - \Psi_u^*(t_i^R))}{(S(t_i^L) - S(t_i^R))^2} \bigg\}
\end{align*}
where $P_{iru}(t) = \int_0^{t} \psi_u(s) z_{ir}(s) \exp(\mathbf{z}_i(s)^{\top}\boldsymbol{\gamma}) ds$ and both $A_{ir}(t)$ and $\Psi_u^*(t)$ are defined as above.

\begin{align*}
    \frac{\partial \Phi ( \boldsymbol{\eta} )}{\partial \gamma_r \partial \alpha_{b}^{\top}} = & \sum_{i = 1}^n \bigg\{ \delta_i \phi_{b}(t) - (\delta_i + \delta_i^R) e^{\mathbf{x}_i^{\top} \boldsymbol{\beta}} E_{i b r}(t) \\
    & + \delta_i^L e^{\mathbf{x}_i^{\top} \boldsymbol{\beta}} \frac{S(t) E_{i br}(t)}{1 - S(t)} - \delta_i^L e^{2 \mathbf{x}_i^{\top} \boldsymbol{\beta}} \gamma \frac{S(t)A_{ir}(t)D_{i b}(t)}{(1-S(t)^2}\\
    & - \delta_i^I e^{\mathbf{x}_i^{\top} \boldsymbol{\beta}} \frac{S(t_i^L) E_{i br}(t_i^L) - S(t_i^R) E_{i br}(t_i^R)}{S(t_i^L) - S(t_i^R)} \\
    & - \delta_i^I e^{2 \mathbf{x}_i^{\top} \boldsymbol{\beta}} \gamma \frac{S(t_i^L)S(t_i^R)(D_{i b}(t_i^L) - D_{i b}(t_i^R))(A_{ir}(t_i^L) - A_{ir}(t_i^R))}{(S(t_i^L) - S(t_i^R))^2} \bigg\}
\end{align*}
where $E_{ibr}(t) = \int_0^t h_0(s)\phi_{b}(s)\exp(\mathbf{z}_i(s)^{\top}\boldsymbol{\gamma}) ds + \gamma_r \int_0^t h_0(s) z_{ir}(s) \phi_{b}(s) \exp(\mathbf{z}_i(s)^{\top}\boldsymbol{\gamma}) ds$, and both $A_{ir}(t)$ and $D_{i b}(t)$ are defined as above.

\begin{align*}
    \frac{\partial \Phi ( \boldsymbol{\eta} )}{\partial \gamma_r \partial \kappa_{ci}^{\top}} = & \sum_{i = 1}^n \bigg\{ \delta_i \xi_{c}(t) - (\delta_i + \delta_i^R) e^{\mathbf{x}_i^{\top} \boldsymbol{\beta}} E_{i cr}(t) \\
    & + \delta_i^L e^{\mathbf{x}_i^{\top} \boldsymbol{\beta}} \frac{S(t) E_{i cr}(t)}{1 - S(t)} - \delta_i^L e^{2 \mathbf{x}_i^{\top} \boldsymbol{\beta}} \gamma \frac{S(t)A_{ir}(t)D_{i c}(t)}{(1-S(t)^2}\\
    & - \delta_i^I e^{\mathbf{x}_i^{\top} \boldsymbol{\beta}} \frac{S(t_i^L) E_{i cr}(t_i^L) - S(t_i^R) E_{i cr}(t_i^R)}{S(t_i^L) - S(t_i^R)} \\
    & - \delta_i^I e^{2 \mathbf{x}_i^{\top} \boldsymbol{\beta}} \gamma \frac{S(t_i^L)S(t_i^R)(D_{i c}(t_i^L) - D_{i c}(t_i^R))(A_{ir}(t_i^L) - A_{ir}(t_i^R))}{(S(t_i^L) - S(t_i^R))^2} \bigg\}
\end{align*}
where $E_{icr}(t) = \int_0^t h_0(s)\xi_{c}(s)\exp(\mathbf{z}_i(s)^{\top}\boldsymbol{\gamma}) ds + \gamma_r \int_0^t h_0(s) z_{ir}(s) \xi_{c}(s) \exp(\mathbf{z}_i(s)^{\top}\boldsymbol{\gamma}) ds$ and both $A_{ir}(t)$ and $D_{i c}(t)$ are defined as above.

\begin{align*}
    \frac{\partial \Phi ( \boldsymbol{\eta} )}{\partial \theta_u \partial \theta_z^{\top}} = & \sum_{i = 1}^n \bigg\{ -\delta_i \frac{\psi_u(t) \psi_u(t)}{h^2_0(t)} - \delta_i^L e^{\mathbf{x}_i^{\top} \boldsymbol{\beta}} \frac{\Psi_{ui}^*(t) \Psi_{zi}^*(t)S(t)}{(1 - S(t))^2} \\
    & - \delta_i^I e^{\mathbf{x}_i^{\top} \boldsymbol{\beta}} \frac{S(t_i^L)S(t_i^R)(\Psi_{ui}^*(t_i^L) - \Psi_{ui}^*(t_i^R))(\Psi_{zi}^*(t_i^L) - \Psi_{zi}^*(t_i^L))}{(S(t_i^L) - S(t_i^R))^2} \bigg\} - 2 \lambda \mathbf{G}_{\theta}
\end{align*}
where $\Psi_{ui}^*(t)$ is defined as above.

\begin{align*}
    \frac{\partial \Phi ( \boldsymbol{\eta} )}{\partial \theta_u \partial \alpha_{b}^{\top}} = & \sum_{i=1}^n e^{\mathbf{x}_i^{\top} \boldsymbol{\beta}} \gamma \bigg\{ -(\delta_i + \delta_i^R) F_{i bu}(t) + \delta_i^L \frac{S(t) F_{i b u}(t) }{1 - S(t)} - \delta_i^L e^{\mathbf{x}_i^{\top} \boldsymbol{\beta}} \frac{\Psi_{ui}^*(t)S(t)D_{i b}(t)}{(1-S(t))^2}\\
    & - \delta_i^I \frac{S(t_i^L)F_{i \b u}(t_i^L) - S(t_i^R)F_{i bu}(t_i^R)}{S(t_i^L) - S(t_i^R)} \\
    & - \delta_i^I e^{\mathbf{x}_i^{\top} \boldsymbol{\beta}} \frac{S(t_i^L)S(t_i^R)(D_{i b}(t_i^L) - D_{i b}(t_i^R)(\Psi_{ui}^*(t_i^L) - \Psi_{ui}^*(t_i^R))}{(S(t_i^L) - S(t_i^R))^2} \bigg\}
\end{align*}
where $F_{i b u}(t) = \int_0^t \psi_u(s) \phi_{b}(s) \exp(\mathbf{z}_i(s)^{\top}\boldsymbol{\gamma}) ds$ and both $\Psi_{ui}^*(t)$ and $D_{i b}(t)$ are defined as above.

\begin{align*}
    \frac{\partial \Phi ( \boldsymbol{\eta} )}{\partial \theta_u \partial \kappa_{ci}^{\top}} = & e^{\mathbf{x}_i^{\top} \boldsymbol{\beta}} \gamma \bigg\{ -(\delta_i + \delta_i^R) F_{icu}(t) + \delta_i^L \frac{S(t) F_{icu}(t)}{1 - S(t)} - \delta_i^L e^{\mathbf{x}_i^{\top} \boldsymbol{\beta}} \frac{\Psi_{ui}^*(t)S(t)D_{ic}(t)}{(1-S(t))^2}\\
    & - \delta_i^I \frac{S(t_i^L)F_{icu}(t_i^L) - S(t_i^R)F_{icu}(t_i^R)}{S(t_i^L) - S(t_i^R)} \\
    & - \delta_i^I e^{\mathbf{x}_i^{\top} \boldsymbol{\beta}} \frac{S(t_i^L)S(t_i^R)(D_{ic}(t_i^L) - D_{ic}(t_i^R)(\Psi_{ui}^*(t_i^L) - \Psi_{ui}^*(t_i^R))}{(S(t_i^L) - S(t_i^R))^2} \bigg\}
\end{align*}
where $F_{icu}(t) = \int_0^t \psi_u(s) \xi_c(s) \exp(\mathbf{z}_i(s)^{\top}\boldsymbol{\gamma}) ds $ and both $\Psi_{ui}^*(t)$ and $D_{i c}(t)$ are defined as above.

\begin{align*}
    \frac{\partial \Phi ( \boldsymbol{\eta} )}{\partial \alpha_{b} \partial \kappa_{ci}^{\top}} = & e^{\mathbf{x}_i^{\top} \boldsymbol{\beta}} \gamma^2 \bigg\{ - (\delta_i + \delta_i^R) D_{i bc}(t) + \delta_i^L \frac{S(t) D_{i bc}(t)}{1 - S(t)} - \delta_i^L e^{\mathbf{x}_i^{\top} \boldsymbol{\beta}}
    \frac{S(t) D_{i b}(t) D_{ic}(t)}{(1-S(t))^2} \\
    & - \delta_i^I \frac{S(t_i^L)D_{i bc}(t_i^L) - S(t_i^R)D_{i bc}(t_i^R)}{S(t_i^L) - S(t_i^R)} \\
    & - \delta_i^I e^{\mathbf{x}_i^{\top} \boldsymbol{\beta}} \frac{S(t_i^L)S(t_i^R)(D_{i b}(t_i^L) - D_{i b}(t_i^R))(D_{ic}(t_i^L) - D_{ic}(t_i^R))}{(S(t_i^L) - S(t_i^R))^2} \\
    & - \frac{1}{\sigma^2_{\varepsilon}} \sum_{a = 1}^{n_i} \phi_{b}(t_{ia}) \xi_c(t_{ia}) \bigg\}
\end{align*}
where $D_{i bc}(t) = \int_0^t h_0(s) \phi_{b}(s) \xi_c(s) \exp(\mathbf{z}_i(s)^{\top}\boldsymbol{\gamma}) ds$ and both $D_{i b}(t)$ and $D_{i c}(t)$ are defined as above.

\section{Long formatted data}
We adopt a long-format data frame for longitudinal 
covariates as exhibited in Table \ref{tab:tab1}. The entries in the ``Status" column indicate if the event of interest may have occurred within the corresponding time interval specified by $[\text{Start, End}]$. For example, in this table, individual 1 has 1 sampling point and is right-censored. For individual 2, it has 3 sampling points: $t_{21} < t_{22} < t_2^L$, where $t_2^L$ denotes the left point of the censoring interval, and this survival time is interval-censored in $[t_2^L, t_2^R]$. 
	
\begin{table}[h!]
		\centering
		\begin{tabular}{llllccc}
			\hline
			Individual & Start  & End & Status & $z_{i1}(t)$ & \dots & $z_{iq}(t)$\\
			\hline
            1 & $0$ & $t_{11}$ & 0 & $\tilde{z}_{11}(0)$ & \dots & $\tilde{z}_{1q}(0)$ \\
            \phantom{1} & $t_{11}$ & $t_1^L$ & 0 & $\tilde{z}_{11}(t_{11})$ & \dots & $\tilde{z}_{1q}(t_{11})$ \\\\
            2 & $0$ & $t_{21}$ & 0 & $\tilde{z}_{21}(0)$ & \dots & $\tilde{z}_{2q}(0)$ \\
            \phantom{2} & $t_{21}$ & $t_{22}$ & 0 & $\tilde{z}_{21}(t_{21})$ & \dots & $\tilde{z}_{2q}(t_{21})$ \\
            \phantom{2} & $t_{22}$ & $t_2^L$ & 0 & $\tilde{z}_{21}(t_{22})$ & \dots & $\tilde{z}_{2q}(t_{22})$ \\
            \phantom{2} & $t_{2}^L$ & $t_2^R$ & 1 & $\tilde{z}_{21}(t_2^L)$ & \dots & $\tilde{z}_{2q}(t_2^L)$ \\
			\hline
\end{tabular}
  \caption{An example of a long-formatted time-varying data with different sampling points.} \label{tab:tab1}
\end{table}

\section{Variance component estimation}

For the penalised log-likelihood $\Phi(\bsb\eta)$ given in (\ref{Equa:mplcrit}), due to the fact that penalties on $\bsb\theta$ and all $\bsb\alpha_r$ are quadratic, $\Phi(\bsb\eta)$ can be viewed as a log-posterior with normal priors: $\bsb\theta \sim N(\mb{0}, \sigma_{\bsb\theta}^2\mb{R}_{\bsb\theta}^{-1})$ and $\bsb\alpha_r \sim N(\mb{0}, \sigma_{\bsb\alpha, r}^2\mb{R}_{\bsb\alpha, r}^{-1})$ for $r=1, \ldots, q$, where $\lambda_{\bsb\theta} = 1/2 \sigma_{\bsb\theta}^2$ and $\lambda_{\bsb\alpha, r} = 1/2 \sigma_{\bsb\alpha, r}^2$. 

Let 
$\bsb\sigma_{\bsb\alpha}^2=(\sigma_{\bsb\alpha, 1}^2, \ldots, \sigma_{\bsb\alpha, q}^2)^\top$ be a vector for all the variance components of
$\bsb\alpha$. 
Using the prior distributions for $\bsb\theta$ and $\bsb\alpha_r$'s, function $\Phi(\bsb\eta)$, which will be denoted as $\Phi(\bsb\eta; \sigma_{\varepsilon}^2, \sigma_{\bsb\theta}^2, \bsb\sigma_{\bsb\alpha}^2, \bsb\sigma_{\bsb\kappa}^2)$ in the following discussions to reflect its dependence on all the variance components, can be re-expressed as: 
\begin{align}
 \Phi(\bsb\eta; \sigma_{\varepsilon}^2, \sigma_{\bsb\theta}^2, \bsb\sigma_{\bsb\alpha}^2, \bsb\sigma_{\bsb\kappa}^2) = & l(\bsb\eta; \sigma_{\varepsilon}^2, \sigma_{\bsb\theta}^2, \bsb\sigma_{\bsb\alpha}^2, \bsb\sigma_{\bsb\kappa}^2) \nonumber - \frac{m}{2}\ln \sigma_{\bsb\theta}^2 - \frac{1}{2\sigma_{\bsb\theta}^2}\bsb\theta^\top \mb{R}_{\bsb\theta} \bsb\theta \\
 & + \sum_{r=1}^q\Big(-\frac{b}{2}\ln \sigma_{\bsb\alpha, r}^2 - \frac{1}{2\sigma_{\bsb\alpha, r}^2}\bsb\alpha_r^\top \mb{R}_{\bsb\alpha, r} \bsb\alpha_r\Big),
\end{align}
where $l(\bsb\eta; \sigma_{\varepsilon}^2, \sigma_{\bsb\theta}^2, \bsb\sigma_{\bsb\alpha}^2, \bsb\sigma_{\bsb\kappa}^2)$ is the log-likelihood given by (\ref{Equa:full_loglik}). 
 
We estimate the variance components $(\sigma_{\varepsilon}^2, \sigma_{\bsb\theta}^2, \bsb\sigma_{\bsb\alpha}^2, \bsb\sigma_{\bsb\kappa}^2)$ by maximising their marginal log-likelihood, which is obtained by integrating out $\bsb\eta$ from the posterior density function $\exp\{\Phi(\bsb\eta)\}$. However, since this is a high-dimensional integration, its closed-form expression is not available. The Laplace approximation can be adopted to give the following approximate marginal log-likelihood:
%
\begin{align}
    l_m(\sigma_{\varepsilon}^2, \sigma_{\bsb\theta}^2, \bsb\sigma_{\bsb\alpha}^2, \bsb\sigma_{\bsb\kappa}^2) \approx  
    \Phi(\widehat{\bsb\eta}; \sigma_{\varepsilon}^2, \sigma_{\bsb\theta}^2, \bsb\sigma_{\bsb\alpha}^2, \bsb\sigma_{\bsb\kappa}^2) -\frac{1}{2} \ln \big|\widehat{\mb{F}}_{\bsb\eta}\big| \label{marlike}
\end{align}
where 
$\widehat{\mb{F}}_{\bsb\eta} = -\partial^2 \Phi(\widehat{\bsb\eta}; \sigma_{\varepsilon}^2, \sigma_{\bsb\theta}^2, \bsb\sigma_{\bsb\alpha}^2, \bsb\sigma_{\bsb\kappa}^2)/\partial \bsb\eta\bsb\eta^\top$ is the negative Hessian of $\Phi$ with respect to $\bsb\eta$ and evaluated at the MPL estimate $\widehat{\bsb\eta}$ (where variance components are fixed at their current values). Details of all components of the matrix ${\mb{F}}_{\bsb\eta}$ can be found in the Supplementary Material.
For maximisation of this marginal likelihood with respect to the variance components, it will be useful to separate the variance components from $\bsb\eta$ in $\mb{F}_{\bsb\eta}$. To achieve  this, we re-express $\mb{F}_{\bsb\eta}$ as:
\begin{equation}
 \mb{F}_{\bsb\eta} = \mb{H}_{\bsb\eta} +  \mb{Q}_{\bsb\theta}+\sum_{r=1}^q \mb{Q}_{\bsb\alpha, r}+\mb{Q}_{\varepsilon} + \sum_{l=1}^c\sum_{r=1}^q\mb{Q}_{\bsb\kappa, lr}
\end{equation}
where the matrices on the right-hand-side are given by
\begin{align*}
    &\mb{H}_{\bsb\eta}=-\sum_i\partial^2 l_i/\partial \bsb\eta\partial\bsb\eta^\top \\ 
    &\mb{Q}_{\bsb\theta} =\frac{1}{\sigma_{\bsb\theta}^2} \text{diag}\Big(\mb{0}_{p \times p} , \mb{0}_{q\times q},  \mb{R}_{\theta}, \mb{0}_{bq\times bq}, \mb{0}_{ncq\times ncq}\Big) \\  
    &\mb{Q}_{\bsb\alpha, r} =\frac{1}{\sigma_{\bsb\alpha, r}^2} \text{diag}\Big(\mb{0}_{p \times p} , \mb{0}_{q\times q},  \mb{0}_{m\times m}, \mb{0}_{b\times b}, \ldots, \mb{0}_{b\times b}, \mb{R}_{\alpha, r}, \mb{0}_{b\times b}, \ldots, \mb{0}_{b\times b},  \mb{0}_{ncq\times ncq}\Big) \\  
    &\mb{Q}_{\varepsilon} =\frac{1}{\sigma_{\varepsilon}^2} \text{diag}\Big(\mb{0}_{p \times p} , \mb{0}_{q\times q},  \mb{0}_{m\times m}, \sum_{i=1}^n\sum_{a=1}^{n_i} \bsb\phi_1(t_{ia})\bsb\phi_1(t_{ia})^\top, \ldots, \sum_{i=1}^n\sum_{a=1}^{n_i}\bsb\phi_q(t_{ia})\bsb\phi_q(t_{ia})^\top, \\
&\hspace{1.5cm}\sum_{a=1}^{n_1} \bsb\xi_1(t_{1a})\bsb\xi_1(t_{1a})^\top, \ldots, \sum_{a=1}^{n_n} \bsb\xi_1(t_{na})\bsb\xi_1(t_{na})^\top, \ldots, \sum_{a=1}^{n_1} \bsb\xi_q(t_{1a})\bsb\xi_q(t_{1a})^\top, \\
&\hspace{1.5cm} \ldots,  \sum_{a=1}^{n_n} \bsb\xi_q(t_{na})\bsb\xi_q(t_{na})^\top\Big), \\
&\mb{Q}_{\bsb\kappa, lr} = \frac{1}{\sigma_{lr}^2} \text{diag}\Big(\mb{0}_{p \times p} , \mb{0}_{q\times q},  \mb{0}_{m\times m}, \mb{0}_{bq\times bq}, \mb{0}_{nc \times nc}, \ldots, \mb{0}_{nc \times nc}, \mb{D}_{lr}, \mb{0}_{nc \times nc}, \ldots, \mb{0}_{nc \times nc}\Big), 
\end{align*}
where $\mb{D}_{lr}$ is a block diagonal matrix consisting of $n$ identical $\mc{D}_l$ matrices: $\mb{D}_{lr} = \text{diag}(\mc{D}_l, \ldots, \mc{D}_l)$, where each $\mc{D}_l$ is a diagonal matrix with its $l$-th element being 1 and all others being 0. These matrices are used in the updates of $\sigma_{\varepsilon}^2$, $\sigma_{\theta}^2$, $\sigma_{\alpha, r}^2$ and $\sigma_{lr}^2$ as given in the main paper.

\section{Asymptotic results}

In this section, we provide some asymptotic results for the maximum penalised likelihood estimates of the parameters of the joint model, including a large sample distribution result for the estimated parameters.
The results are presented in 
Propositions \ref{prop:consistRandomEffect} and \ref{Prop:normalFixEffect}. Some assumptions and all the proofs for these results are available in subsequent sections of this Supplementary Material. 

Let 
$\bsb\zeta = (\bsb\beta^\top, \bsb\gamma^\top, \bsb\theta^\top, \bsb\alpha^\top)^\top$ be 
the collection of all the 
parameters except $\bsb\kappa$. Throughout this section, we denote by $\bsb\zeta_0$ 
the true value of $\bsb\zeta$. To emphasise its randomness, in this section we use 
$\cK_{ilr}$ to denote the random effect in \eqref{eq:longit} and use 
$\kappa_{ilr}$ a realization of $\cK_{ilr}$. 
Recall $\bkappa_{ir} = (\kappa_{i1r},\dots, \kappa_{icr})^{\top}$,  $\bkappa_{i} = (\bkappa_{i1}^{\top},\dots, \bkappa_{iq}^{\top})^{\top}$ and $\bkappa = (\bkappa_{1}^{\top},\bkappa_{2}^{\top},\dots,\bkappa_n)^{\top}$. Similarly, we can define their corresponding random vectors  $\bcK_{ir}$, $\bcK_{i}$ and $\bcK$. 
Finally, we let all the convergence below be with respect to $n\to \infty$ unless specified otherwise. 

\begin{Assu}\label{Assu:condLi}
Suppose the joint log-likelihood function $l (\bsb\eta, \bsb\kappa) $ in \eqref{Equa:full_loglik}, when the underlying probability is conditional on $\bcK = \bkappa^{*}$,  for almost surely all $\bkappa^{*}$, satisfies \cite[(a) and (c), pp. 197-198]{li2003efficiency} and satisfies \cite[(b), p. 198]{li2003efficiency}  with some $\bbb = \bbb(\bkappa^{*})$.
\end{Assu}
\begin{Rema}
It can be strenuous to check Assumption \ref{Assu:condLi}. By \cite[(i), p. 202]{li2003efficiency}, one middle step is the in-probability convergence of $\hbkappa$ to $\bkappa^{*}$ conditional on $\bcK = \bkappa^{*}$, which intuitively means there is sufficient information about each individual trajectory; see also \cite{WeMa23}. Such in-probability convergence is established in Proposition \ref{prop:consistRandomEffect} below. 
\end{Rema}
\begin{Prop}\label{prop:consistRandomEffect}
Recall $\widehat{\balpha}, \widehat{\bkappa}$ defined in \eqref{Equa:mplcrit}. Under 
\Cref{Assu:simple,Assu:phi,Assu:unifNegligible,Assu:uniqueMax} 
in the Supplementary Materials, for almost surely all $\bkappa^{*}$ and all $\xi>0$ 
\begin{align*}
    &\Prob\left(\|\hat{\balpha} - \balpha^{*}\|_{2}^{2}>\xi | \bcK = \bkappa^{*}\right) \to 0;
 & &\Prob\left(\frac{1}{n}\sum_{i=1}^{n}\|\hat{\bkappa}_{i} - \bkappa^{*}_{i}\|_{2}^{2}>\xi | \bcK = \bkappa^{*}\right) \to 0.
\end{align*}
\end{Prop}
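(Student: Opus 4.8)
The plan is to treat this as an extremum-estimation problem complicated by the fact that the dimension of $\bkappa = (\bkappa_1^\top,\dots,\bkappa_n^\top)^\top$ grows with $n$, an incidental-parameters situation in which classical finite-dimensional consistency theory does not apply directly. The structural feature I would exploit is that in the penalised log-likelihood $\Phi$ of \eqref{Equa:mplcrit} each block $\bkappa_i$ enters only through individual $i$'s contribution: writing $l$ as in \eqref{Equa:full_loglik}, the survival terms, the longitudinal fit terms, and the Gaussian prior terms are all sums over $i$, so that $\Phi = \sum_{i=1}^n \phi_i(\bsb\zeta,\bkappa_i) - \lambda_{\bsb\theta}\bsb\theta^\top\mb{R}_{\bsb\theta}\bsb\theta - \sum_r\lambda_{\bsb\alpha,r}\bsb\alpha_r^\top\mb{R}_{\bsb\alpha,r}\bsb\alpha_r + (\text{const})$, where the shared penalties involve only $\bsb\zeta = (\bsb\beta^\top,\bsb\gamma^\top,\bsb\theta^\top,\balpha^\top)^\top$ and not $\bkappa$. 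Hence for any fixed $\bsb\zeta$ the maximisation over $\bkappa$ separates into $n$ independent low-dimensional problems $\widehat\bkappa_i(\bsb\zeta) = \argmax_{\bkappa_i}\phi_i(\bsb\zeta,\bkappa_i)$, which lets me profile out the random effects and reduce the consistency of $\widehat\balpha$ to a standard argument. Throughout I condition on $\bcK=\bkappa^{*}$, so the summands $\phi_i$ are independent but not identically distributed (each carries its own realised $\bkappa_i^{*}$), i.e.\ a triangular array rather than an i.i.d.\ sequence.

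For the first conclusion I would establish $\widehat\balpha \pto \balpha^{*}$ by the usual route for M-estimators applied to the normalised profiled criterion $M_n(\bsb\zeta) = n^{-1}\sum_{i=1}^n \phi_i(\bsb\zeta,\widehat\bkappa_i(\bsb\zeta))$. Danskin's theorem, together with the strong concavity of each $\phi_i$ in $\bkappa_i$ described below, makes $M_n$ smooth in $\bsb\zeta$, and a uniform law of large numbers for the triangular array, justified by the uniform-negligibility and dominance conditions of \Cref{Assu:unifNegligible} and the basis-function regularity of \Cref{Assu:phi} and \Cref{Assu:simple}, gives $\sup_{\bsb\zeta}\abs{M_n(\bsb\zeta)-M(\bsb\zeta)}\pto 0$ for a non-random limit $M$. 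Under \Cref{Assu:uniqueMax} the limit $M$ is uniquely maximised at the truth $\bsb\zeta_0$, so the standard argmax/Wald consistency theorem yields $\widehat{\bsb\zeta}\pto\bsb\zeta_0$ conditional on $\bcK=\bkappa^{*}$ for almost every $\bkappa^{*}$; reading off the $\balpha$-coordinate gives the first display.

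For the second conclusion the decisive feature is that each $\phi_i(\bsb\zeta,\cdot)$ is strongly concave in $\bkappa_i$: the Gaussian prior term $-\tfrac12\bkappa_{ir}^\top\bSigma_r^{-1}\bkappa_{ir}$ alone forces a negative-definite Hessian, so every $\widehat\bkappa_i$ is well defined, while as the per-individual information accumulates (the ``sufficient information about each individual trajectory'' alluded to in the Remark and encoded in \Cref{Assu:simple}) the curvature grows and the prior's shrinkage becomes asymptotically negligible. Fixing $\bsb\zeta=\widehat{\bsb\zeta}$ and Taylor-expanding the individual score $\partial\phi_i/\partial\bkappa_i$ about $\bkappa_i^{*}$, the first-order optimality of $\widehat\bkappa_i$ bounds $\norm{\widehat\bkappa_i-\bkappa_i^{*}}_2 \le \lambda_{\min,i}^{-1}\norm{\partial\phi_i(\widehat{\bsb\zeta},\bkappa_i^{*})/\partial\bkappa_i}_2$, where $\lambda_{\min,i}$ is the smallest eigenvalue of the uniformly positive-definite, and growing, per-individual negative Hessian. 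The numerator splits into a conditionally centred likelihood score whose magnitude is set by the per-individual information, a deterministic prior gradient of smaller order, and a term proportional to $\norm{\widehat{\bsb\zeta}-\bsb\zeta_0}$ that is $\op{1}$ by the first part; dividing by $\lambda_{\min,i}$ keeps each piece controlled. Squaring, averaging over $i$, and invoking a law of large numbers together with the information accumulation then drives $n^{-1}\sum_{i=1}^n\norm{\widehat\bkappa_i-\bkappa_i^{*}}_2^2\pto 0$, which is the second display.

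The main obstacle is precisely the growing, heterogeneous nature of the incidental parameters $\bkappa$: under the conditioning $\bcK=\bkappa^{*}$ the contributions are independent but not identically distributed, and the parameter dimension diverges. The two devices that overcome this are the separability-plus-profiling that isolates the fixed-dimensional $\bsb\zeta$, and the strongly concave Gaussian prior that regularises each $\bkappa_i$ and keeps the per-individual Hessians uniformly invertible; \Cref{Assu:unifNegligible} is what upgrades the i.i.d.\ laws of large numbers to the triangular-array statements needed in both steps. Additional care is required to propagate the estimation error of the shared $\widehat{\bsb\zeta}$ into the random-effects bound without it accumulating across the $n$ individuals, which is exactly what the uniform lower bound on the Hessians provides.
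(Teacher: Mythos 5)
Your plan diverges from the paper's proof in two places where it actually breaks. First, your opening step claims the standard Wald argument yields $\hbzeta \pto \bzeta_0$ for the \emph{full} fixed-parameter vector, including $(\bsb\beta,\bsb\gamma,\bsb\theta)$. Nothing in the assumption list supports this: \Cref{Assu:uniqueMax} is an identification condition only for the pair $(\balpha,\bkappa)$ under the longitudinal criterion $M_n(\balpha,\bkappa,\bkappa^{*})$, and at the only normalisation at which a nonrandom limit exists — $1/(\cN n)$, not your $1/n$, since \Cref{Assu:simple}(v) forces $n_i\equiv\cN\to\infty$ so that $n^{-1}\sum_i\phi_i$ diverges — \Cref{Assu:unifNegligible} says precisely that the survival terms are uniformly $\op{1}$, i.e.\ the limiting criterion carries \emph{no} information about $(\bsb\beta,\bsb\gamma,\bsb\theta)$ at all. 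The proposition neither needs nor claims their consistency, and the paper never proves it here (the fixed effects are handled separately in \Cref{Prop:normalFixEffect} under the extra \Cref{Assu:condLi}; this proposition is an \emph{ingredient} for that result, not a consequence of it). The paper's actual route avoids profiling entirely: it writes $l/(\cN n)=\hM_n^{(1)}+\hM_n^{(2)}$, shows the Gaussian-prior terms vanish uniformly (\Cref{Lemm:kappaNegligible}), invokes \Cref{Assu:unifNegligible} to kill the survival terms, proves uniform convergence of the least-squares part $\hM_n^{(1)}$ to $M_n$ (\Cref{Lemm:convLongitude}, via \Cref{Assu:phi} and the compact sets in \Cref{Assu:simple}(vi)), and then applies one argmax lemma (\Cref{Lemm:parent}) to the growing-dimension block $(\balpha,\bkappa)$ equipped with the pseudo-metric $d_n\left((\balpha,\bkappa),(\balpha^{*},\bkappa^{*})\right)=\big(\|\balpha-\balpha^{*}\|_2^2+n^{-1}\sum_i\|\bkappa_i-\bkappa_i^{*}\|_2^2\big)^{1/2}$, with \Cref{Assu:uniqueMax} supplying the well-separated maximum in exactly that metric; both displays of the proposition then fall out simultaneously from $d_n\pto 0$.

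Second, your route to the random-effects display via per-individual strong concavity would fail under the stated assumptions. The prior term gives only an $O(1)$ curvature lower bound, $\lambda_{\min}(\bSigma_r^{-1})$, while the conditionally centred longitudinal score at $\bkappa_i^{*}$ is of order $\sqrt{\cN}$, so the bound $\|\hbkappa_i-\bkappa_i^{*}\|_2\le\lambda_{\min,i}^{-1}\|\partial\phi_i/\partial\bkappa_i\|_2$ is vacuous unless $\lambda_{\min,i}$ grows like $\cN$. That growth requires a uniform \emph{lower} eigenvalue bound on the empirical Gram matrices $\cN^{-1}\sum_{a}\bphi_r(t_{ia})\bphi_r(t_{ia})^{\top}$, which is nowhere assumed — \Cref{Assu:phi} is only an upper bound, and the identification the paper actually uses is the population-level separation in \Cref{Assu:uniqueMax}, which is weaker than pointwise per-$i$ curvature. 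Moreover, ``each $\phi_i$ is strongly concave in $\bkappa_i$'' is false in general: the interval-censoring contributions $\ln(S(t_i^L)-S(t_i^R))$ need not be concave in $\bkappa_i$, so neither the global validity of your first-order score bound nor the uniqueness hypothesis needed for Danskin's theorem in the profiling step is available. In short, both devices your proof rests on — full-$\bzeta$ identification after profiling, and growing per-individual curvature — are unsupported, whereas the paper's single pseudo-metric argmax argument requires neither.
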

\noindent
Now we state the result on the fixed parameters. 
\begin{Prop}\label{Prop:normalFixEffect}
Recall that $N=\sum_{i=1}^{n} n_i$. Suppose that \Cref{Assu:penaltyBias,Assu:simple,Assu:condLi} 
are in the Supplementary Materials of this paper) hold. Then for the estimator $\hbzeta$ defined in \eqref{Equa:mplcrit}, we have that for almost surely all $\bkappa^{*}$ and all $\bc$, 
\begin{equation}\label{Equa:asyNormal}
\Prob\left(\sqrt{N}(\hbzeta- \bzeta^{*})\leq \bc | \bcK = \bkappa^{*} \right) - \Prob\Big(\mathcal{N}(\mathdutchcal{\bbb}(\bkappa^{*}), \mathdutchcal{\bV}(\bkappa^{*}))\leq \bc\Big) \to 0,
\end{equation}
where $\mathdutchcal{\bbb}(\bkappa^{*})=\bbb(\bkappa^{*}) \sqrt{n^{2}/N}$, 
where $\bbb$ is the deterministic function given in \Cref{Assu:condLi} and
$ 
\mathdutchcal{\bV}(\bkappa^{*})= \lim_{n\to\infty}\frac{1}{n}\sum_{i=1}^{n}
I_{i}(\bzeta^{*},\bkappa^{*})^{-1}, 
$ 
where $I_{i}(\bzeta,\bkappa)$ 
is the expected negative Hessian matrix of individual $i$'s contribution to the log-likelihood function in \eqref{Equa:full_loglik} 
\end{Prop}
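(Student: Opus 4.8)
The plan is to run the entire argument conditionally on $\bcK = \bkappa^{*}$, so that throughout $\bkappa^{*}$ is a fixed (almost surely) sequence and the $n$ individual terms $l_i$ of the log-likelihood \eqref{Equa:full_loglik} become independent but non-identically distributed. Since $\hbzeta$ and $\hbkappa$ are produced by jointly maximising the penalised log-likelihood $\Phi$, I would first concentrate out the random effects and work with the profile objective $\Phi_p(\bzeta) = \max_{\bkappa}\Phi(\bzeta,\bkappa)$, for which $\hbzeta$ is a maximiser. The structural input is \Cref{Assu:condLi}: conditional on $\bcK = \bkappa^{*}$, the joint likelihood satisfies conditions (a), (b) and (c) of \cite{li2003efficiency} with bias function $\bbb(\bkappa^{*})$. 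As the Remark notes, the one nontrivial ingredient these conditions presuppose is the in-probability convergence of $\hbkappa$ to $\bkappa^{*}$; this is exactly \Cref{prop:consistRandomEffect}, which I would invoke to discharge that requirement so that \Cref{Assu:condLi} applies directly.

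With consistency in hand, the second step is a Taylor expansion of the profile score about the truth. I would write $0 = \nabla\Phi_p(\hbzeta) = \nabla\Phi_p(\bzeta^{*}) + \nabla^2\Phi_p(\bar{\bzeta})(\hbzeta - \bzeta^{*})$ for an intermediate point $\bar{\bzeta}$, giving $\sqrt{N}(\hbzeta - \bzeta^{*}) = [-N^{-1}\nabla^2\Phi_p(\bar{\bzeta})]^{-1} N^{-1/2}\nabla\Phi_p(\bzeta^{*})$. The leading stochastic term $N^{-1/2}\nabla\Phi_p(\bzeta^{*})$ splits into a sum over $i$ of conditionally independent score contributions (each depending on individual $i$'s data and its realised $\bkappa_i^{*}$) plus the deterministic penalty gradient; the former obeys a conditional Lindeberg--Lyapunov central limit theorem for non-identically distributed summands, while the latter, after rescaling, yields the mean shift $\mathdutchcal{\bbb}(\bkappa^{*})$. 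Simultaneously, $-N^{-1}\nabla^2\Phi_p(\bar{\bzeta})$ converges in probability to the limiting (average) information by a conditional law of large numbers combined with the consistency of $\hbzeta$ and $\hbkappa$. These are precisely the conclusions packaged by conditions (a)--(c) of \cite{li2003efficiency}, so rather than reproving them I would verify their hypotheses and quote their limit theorem.

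It then remains to identify the limiting covariance and reconcile the normalising constants. Because the data are conditionally independent across individuals and each $\bkappa_i^{*}$ enters only individual $i$'s contribution, the expected per-subject negative Hessians $I_i(\bzeta^{*},\bkappa^{*})$ are the natural information units, and the covariance delivered by the expansion reduces to $\mathdutchcal{\bV}(\bkappa^{*}) = \lim_{n\to\infty} n^{-1}\sum_{i=1}^{n} I_i(\bzeta^{*},\bkappa^{*})^{-1}$. The discrepancy between the $\sqrt{N}$ scaling of the statement and the $\sqrt{n}$ scaling natural to the $n$ independent summands is absorbed entirely into the mean, giving $\mathdutchcal{\bbb}(\bkappa^{*}) = \bbb(\bkappa^{*})\sqrt{n^{2}/N}$; I would make this bookkeeping explicit and check that the covariance is invariant under the rescaling. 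Since \eqref{Equa:asyNormal} asserts convergence of the conditional distribution function to a continuous Gaussian limit at every $\bc$, this is exactly conditional convergence in distribution and follows once the above mean and covariance are identified.

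The main obstacle is the incidental-parameter structure: the dimension of $\bkappa$ grows linearly with $n$, so the profiling and the associated Hessian inversions are high-dimensional, and one must ensure that estimating the $O(n)$ nuisance components $\hbkappa_i$ does not corrupt the first-order asymptotics of the fixed-dimensional $\bzeta$ beyond the explicit bias $\bbb(\bkappa^{*})$. Controlling this — through the uniform individual-level consistency supplied by \Cref{prop:consistRandomEffect} and the block structure of $\mb{F}_{\bsb\eta}$ — together with establishing the conditional central limit theorem for the non-identically distributed score summands, is where the real work lies.
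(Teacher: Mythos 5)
Your overall route coincides with the paper's: both arguments run conditionally on $\bcK=\bkappa^{*}$, lean on \Cref{prop:consistRandomEffect} to make \Cref{Assu:condLi} operative, and then, rather than reproving a CLT for the growing-dimension problem, quote the rectangular-array limit theorem of \cite{li2003efficiency} (the paper does this by writing out the score components $U_{0i1}$, $V_{1i1}$, $V_{2i1}$ for $(\bzeta,\bkappa_{i})$ exactly as on p.~197 of that reference), and finally reconcile the $\sqrt{N}$ versus $\sqrt{n}$ scalings, which turns $\bbb(\bkappa^{*})$ into $\mathdutchcal{\bbb}(\bkappa^{*})=\bbb(\bkappa^{*})\sqrt{n^{2}/N}$.

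There is, however, one genuine gap: you never invoke \Cref{Assu:penaltyBias}, and relatedly you misattribute the Gaussian mean shift. What the quoted theorem delivers is asymptotic normality of $\sqrt{N}(\hbzeta-\bzeta_{n}^{\star}(\bkappa^{*}))$, where $\bzeta_{n}^{\star}(\bkappa^{*})$ maximises the conditional expectation of the full objective \eqref{Equa:full_loglik} including the random-effects terms $l_{\bkappa_{i}}$; the mean shift $\bbb(\bkappa^{*})$ is the incidental-parameter bias from estimating the $O(n)$ nuisance vectors $\bkappa_{i}$ with $\cN$ observations each (your closing paragraph states this correctly), not the image of ``the deterministic penalty gradient'' --- indeed under \Cref{Assu:simple}(iv) the smoothing penalties are zero, and the contribution of the prior terms must be shown \emph{negligible} rather than retained as the limiting mean. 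Consequently your second-paragraph expansion of the profile score ``about the truth'' $\bzeta^{*}$ is not justified as written: the score at $\bzeta^{*}$ need not be conditionally centered, precisely because $\bzeta_{n}^{\star}(\bkappa^{*})\neq\bzeta^{*}$ in general. The paper closes this gap by first obtaining the CLT centered at $\bzeta_{n}^{\star}(\bkappa^{*})$ and then using \Cref{Assu:penaltyBias}, which guarantees $\bzeta_{n}^{\star}(\bkappa^{*})-\bzeta^{*}=o(1/\sqrt{N})$, together with Slutsky's theorem conditional on $\bcK=\bkappa^{*}$, to transfer the centering to $\bzeta^{*}$. With that recentering step inserted, and the mean shift assigned to the incidental-parameter bias rather than to a penalty term, your argument matches the paper's proof.
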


\begin{Rema}
Intuitively, \eqref{Equa:asyNormal} means
$ 
\sqrt{N}(\hbzeta- \bzeta^{*})  \ \stackrel{d}{\to} \ \mathcal{N}\left(\mathdutchcal{\bbb}(\bkappa^{*}), \mathdutchcal{\bV}(\bkappa^{*})\right).
$ 
Or, equivalently, \eqref{Equa:asyNormal} indicates, intuitively, for some standard normal random vector $\bZ$,
\[
\hbzeta- \bzeta^{*} \approx \frac{1}{\sqrt{N}}\mathdutchcal{\bbb}(\bkappa^{*}) + \frac{1}{\sqrt{N}}(\mathdutchcal{\bV}(\bkappa^{*}))^{1/2}\bZ = \frac{n}{N}\bbb(\bkappa^{*}) + \frac{1}{\sqrt{N}}(\mathdutchcal{\bV}(\bkappa^{*}))^{1/2}\bZ
\]
Particularly, as long as $n\to\infty$ and $n/N\to 0$, we get $\hbzeta$ converges in probability to $\bzeta^{*}$.
\end{Rema}
\begin{Rema}\label{Rema:b}
The expression of $\bbb$ seems similar to that in \cite[Theorem 1]{li2003efficiency}. In general, $\bbb$ is nonzero. On the other hand, in our setting, $\bkappa_{i}^{*}, i=1,2,\dots$ are realizations of a sequence of iid mean zero random variables. Hence, for almost surely all $\bkappa^{*}$, $\frac{1}{n}\sum_{i=1}^{n}\bkappa_{i}^{*} \to 0$. Hence, in our setting, $\bbb$ may eventually turn out to be zero. 
\end{Rema}
\begin{Rema}
While large sample normality for $\widehat{\bsb\eta}$ can be established, care must be taken due to the possibility of active $\bsb\theta\geq 0$ constraints. (Particularly, these active constraints are likely to occur when there are more knots selected than is necessary i.e. when the value of $m$ in \eqref{eq:approxBase} is larger than needed.) Currently, we only allow $\bsb\theta$ to stay within the boundary of the parameter space to ensure \Cref{Assu:condLi} and hence \Cref{Prop:normalFixEffect}; see also \cite[(a), p. 201]{li2003efficiency}. On the other hand, when $\theta_{u}=0$, it will lie on the boundary of the parameter space $\bsb\theta \geq 0$, and as a result, we need to remove these active constraints when considering asymptotic properties; see the large sample variance given below. 
\end{Rema}

\begin{Rema}\label{Rema:bootstrap}
Computationally intensive procedures, such as bootstrapping, can be applied when estimating the theoretical variance matrix $\bV(\bkappa^{*})$ becomes difficult. However, bootstrapping may not be computationally practical, especially when the algorithm is computational intensive. For this reason, we will next present a large sample covariance matrix result for $\widehat{\bsb\zeta}$. 
\end{Rema}

\section{Assumptions and Lemmas}

\begin{Lemm}\label{Lemm:parent}[Consistency of M-estimator with infinite dimension] 
Let $(\cD_{n}, d_{n}(\cdot,\cdot))$, $n=1,2,\dots,$ be a sequence of metric spaces. Suppose that 
$M_{n}:\bbR^{\cd}\times\cD_{n}\to \bbR$ and $\hM_{n}:\bbR^{\cd}\times\cD_{n}\to \bbR$ are sequences of deterministic and random functions, respectively. 
Suppose that $\cS\subset \bbR^{\cd}$ and $\cS_{n}\subset \cD_{n}, n=1,2,\dots$. Define $(\hbzeta,\hbkappa_{n}) = \max_{\bzeta \in \cS,\bkappa_{n}\in \cS_{n}}^{-1} \hM_{n}(\bzeta,\bkappa_{n})$. Suppose that for some $\bkappa_{n}^{*}\in \cS_{n}$, we have that for all $\ep>0$, there exist $\cn$ and $\delta>0$ that do not depend on $n$, such that for all $n>\cn$ and all deterministic $\bkappa_{n}\in \cS_{n}$ with 
$d_{n}(\bkappa_{n},\bkappa_{n}^{*})>\ep,$
we have 
\begin{equation}\label{eq:uniqueParent}
      M_{n}(\hbzeta,\bkappa_{n}^{*}) - M_{n}(\hbzeta,\bkappa_{n}) > \delta.
\end{equation}
Further, suppose that 
\begin{equation}\label{eq:unifParent}
\sup_{\bkappa_{n}\in \cS_{n}} \left|\hM_{n}(\hbzeta,\bkappa_{n})-M_{n}(\hbzeta,\bkappa_{n})\right| \pto 0.   
\end{equation}
Then,
\[
d_{n}(\hbkappa_{n} , \bkappa_{n}^{*})\pto 0. 
\]
\end{Lemm}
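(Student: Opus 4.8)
The plan is to run the classical argmax (M-estimation) consistency argument, in the spirit of van der Vaart's Theorem 5.7, adapted to the triangular-array setting in which both the ambient space $\cD_{n}$ and the target $\bkappa_{n}^{*}$ are allowed to move with $n$, and in which the Euclidean coordinate $\hbzeta$ is held fixed at its realized (random) value throughout. Fix $\ep>0$ and let $\delta>0$ and $\cn$ be the constants furnished by the well-separation hypothesis \eqref{eq:uniqueParent}. I would show that for all $n>\cn$ the event $\{d_{n}(\hbkappa_{n},\bkappa_{n}^{*})>\ep\}$ forces the uniform approximation error to exceed $\delta/2$, so that its probability is dominated by a quantity tending to zero.

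Write $\Delta_{n}:=\sup_{\bkappa_{n}\in\cS_{n}}|\hM_{n}(\hbzeta,\bkappa_{n})-M_{n}(\hbzeta,\bkappa_{n})|$, which satisfies $\Delta_{n}\pto 0$ by \eqref{eq:unifParent}. The first step is a chain of inequalities using only the defining maximizing property of $(\hbzeta,\hbkappa_{n})$ together with two applications of the uniform bound $\Delta_{n}$. Since $(\hbzeta,\bkappa_{n}^{*})$ is feasible (as $\bkappa_{n}^{*}\in\cS_{n}$) and $(\hbzeta,\hbkappa_{n})$ maximizes $\hM_{n}$ over $\cS\times\cS_{n}$, and since both $\hbkappa_{n}$ and $\bkappa_{n}^{*}$ lie in $\cS_{n}$, I would chain
\[
M_{n}(\hbzeta,\hbkappa_{n}) \geq \hM_{n}(\hbzeta,\hbkappa_{n}) - \Delta_{n} \geq \hM_{n}(\hbzeta,\bkappa_{n}^{*}) - \Delta_{n} \geq M_{n}(\hbzeta,\bkappa_{n}^{*}) - 2\Delta_{n},
\]
where the outer inequalities replace $\hM_{n}$ by $M_{n}$ at a cost of $\Delta_{n}$ each and the middle inequality is the maximizer property. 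This gives, on each realization, the bound $M_{n}(\hbzeta,\hbkappa_{n})\geq M_{n}(\hbzeta,\bkappa_{n}^{*})-2\Delta_{n}$.

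The second step invokes the well-separation condition. Because \eqref{eq:uniqueParent} holds with a single $\delta$ uniformly over all deterministic $\bkappa_{n}\in\cS_{n}$ with $d_{n}(\bkappa_{n},\bkappa_{n}^{*})>\ep$, on the event $\{d_{n}(\hbkappa_{n},\bkappa_{n}^{*})>\ep\}$ the realized value of $\hbkappa_{n}$ is one such point, so $M_{n}(\hbzeta,\hbkappa_{n})<M_{n}(\hbzeta,\bkappa_{n}^{*})-\delta$ holds there. Combining this with the chain above yields $M_{n}(\hbzeta,\bkappa_{n}^{*})-2\Delta_{n}<M_{n}(\hbzeta,\bkappa_{n}^{*})-\delta$, i.e. $\Delta_{n}>\delta/2$. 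Hence for $n>\cn$ we obtain the inclusion $\{d_{n}(\hbkappa_{n},\bkappa_{n}^{*})>\ep\}\subseteq\{\Delta_{n}>\delta/2\}$, so that $\Prob(d_{n}(\hbkappa_{n},\bkappa_{n}^{*})>\ep)\leq\Prob(\Delta_{n}>\delta/2)\to 0$. Since $\ep>0$ is arbitrary, $d_{n}(\hbkappa_{n},\bkappa_{n}^{*})\pto 0$.

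The step requiring the most care is the passage of the uniform separation bound \eqref{eq:uniqueParent}, stated for \emph{deterministic} $\bkappa_{n}$, to the \emph{random} maximizer $\hbkappa_{n}$ while $\hbzeta$ is itself random. The resolution is precisely the uniformity: a single pair $(\delta,\cn)$ works for every deterministic competitor in the separated region, so on the event that $\hbkappa_{n}$ lands in that region the inequality applies verbatim to its realized value, and the randomness of $\hbzeta$ is harmless because both \eqref{eq:uniqueParent} and \eqref{eq:unifParent} are already posed at $\hbzeta$. A secondary point worth flagging is that $\delta$ and $\cn$ do not depend on $n$, which is exactly what prevents the moving spaces $\cD_{n}$ and moving targets $\bkappa_{n}^{*}$ from interfering with the limit.
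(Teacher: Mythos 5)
Your proof is correct and is essentially the paper's own argument: the paper compresses your three-inequality chain into the single line $M_{n}(\hbzeta,\hbkappa_{n})-M_{n}(\hbzeta,\bkappa_{n}^{*})=\hM_{n}(\hbzeta,\hbkappa_{n})-\hM_{n}(\hbzeta,\bkappa_{n}^{*})+\op{1}\geq\op{1}$, and then concludes via $\Prob\left(d_{n}(\hbkappa_{n},\bkappa_{n}^{*})>\ep\right)\leq\Prob\left(M_{n}(\hbzeta,\hbkappa_{n})-M_{n}(\hbzeta,\bkappa_{n}^{*})<-\delta\right)\leq\Prob(\op{1}<-\delta)\to 0$, which is exactly your event inclusion with the explicit $\delta/2$ threshold spelled out. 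Your added care about applying the separation condition, stated for deterministic competitors, to the realized value of the random maximizer via its uniformity in $\bkappa_{n}$ is a sound (and welcome) elaboration of what the paper leaves implicit.
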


\begin{Assu}\label{Assu:simple}
Recall    
$\bkappa_{ir}^{*} = (\kappa_{i1r},\dots, \kappa_{icr})^{\top}$,  $\bkappa_{i}^{*} = (\bkappa_{i1}^{\top},\dots, \bkappa_{iq}^{\top})^{\top}$. Assume for simplicity
\begin{enumerate}[label=(\roman*)]
    \item In the generation of $z_{ir}(t)$ in \eqref{eq:longit}, assume $b=c$ and $\phi_{1r}=\xi_{1r},\dots,\phi_{br}=\xi_{br}$. Also assume $b$ and $\phi_{1r},\dots,\phi_{br}$ are known.  
    \item In the approximation of $h_{0}$, assume that $m$ and $\psi_{1},\dots, \psi_{m}$ are known. 
    \item The variances $\sigma_{\ep}^{2}$ and $\sigma_{lr}^{2}$ are known.
    \item It is known that the smooth parameters $\lambda_{\theta} = \lambda_{\balpha,r} = 0$.
      \item $n_{i} \equiv \cN $ for all $i$, $\cN = \cN_{n}$ is a function of $n$ and $\cN\to \infty$ as $n\to \infty$. 
        \item \label{item:interval} There exist priori known closed sets $\cS_{\balpha^{*}},  \cS_{\bkappa_{i}^{*}}, i=1,2,\dots$, 
    such that $\balpha^{*}\in \cS_{\balpha^{*}}$, $\bkappa_{i}^{*} \in \cS_{\bkappa_{i}^{*}}$, and the radii of $\cS_{\bkappa_{i}^{*}}$ is uniformly upper bounded with respect to all $i=1,2,\dots$ and all $n$. Further, $\hat{\balpha}, \hat{\bkappa}$ defined in \eqref{Equa:mplcrit} satisfy $\hat{\balpha}\in \cS_{\balpha^{*}}$ and $\hat{\bkappa}_{i} \in \cS_{\bkappa_{i}^{*}}$, $i=1,2,\dots$.
\end{enumerate}
\end{Assu}

\begin{Assu}\label{Assu:phi}
Assume for each $l=1\dots,b$ and $r=1,\dots, q$
\[
\frac{1}{\cN n}\sum_{i=1}^{n}\sum_{j=1}^{\cN }(\phi_{lr}(t_{ia}))^{2} = O(1)
\]    
\end{Assu}

\begin{Defi}\label{Defi:targetFunc}
Recall
\begin{align*}
 l_i(\bbbeta, \bgamma,  \btheta, \balpha \,|\, \bkappa)=  &\delta_i \big(\ln h_0(t_i) + \bx_i^{\top} \bbbeta + \bz_i(t_i)^{\top} \bgamma - H(t_i)\big) - \delta_i^R H(t_i^{L})  \\
 &+\delta_i^L \ln(1 - S(t_i^{R}))
     + \delta_i^I \ln (S(t_i^L) - S(t_i^R)). 
\end{align*}
Define 
\begin{align*}
    l_{\bkappa_{i}} &= - \frac{1}{2}\sum_{r=1}^q \sum_{l=1}^c \ln \sigma_{lr}^2 - \frac{1}{2}\sum_{r=1}^q \bkappa_{ir}^\top\bSigma_r^{-1}\bkappa_{ir},\\
    l_{\tilde{{\bz}}_{i}}(\balpha \,|\, \bkappa)&= -\frac{1}{2\sigma_{\varepsilon}^2} \sum_{a=1}^{n_i}\|\tilde{\bz}_{i}(t_{ia}) - \bz_i(t_{ia})\|^2- \frac{n_i}{2} \ln \sigma_{\varepsilon}^2.  
\end{align*}
Also, denote $
z_{ir}(t_{ia}) = \bphi_{r}(t_{ia})^\top\,(\balpha_r+\bkappa_{ir}).$
By \eqref{Equa:full_loglik},
\[
\frac{1}{\cN n}l (\bzeta, \bkappa) 
       = \hM_{n}^{(1)}(\balpha,  \bkappa) + \hM_{n}^{(2)}(\bzeta, \bkappa),
\]
where
\begin{align*}
\hM_{n}^{(1)}(\balpha,  \bkappa) &= 
\frac{1}{\cN n}\sum_{i = 1}^n  l_{\tbz_i}(\balpha \,|\, \bkappa)  \\
\hM_{n}^{(2)}(\bzeta, \bkappa)&= \frac{1}{\cN n}\sum_{i = 1}^n  \left[l_{\bkappa_{i}} + l_{i}(\bzeta \,|\, \bkappa)\right]
\end{align*}
Notice that 
\[
\hM_{n}^{(1)}(\balpha,  \bkappa) = - \frac{1}{2} \ln \sigma_{\varepsilon}^2 - \frac{1}{2\sigma_{\varepsilon}^2 \cN n} \sum_{i=1}^{n} 
 \sum_{a=1}^{\cN }\sum_{r=1}^{q}(\tilde{z}_{ir}(t_{ia}) - z_{ir}(t_{ia}))^2
\]
We define 
\[
M_{n} (\balpha, \bkappa, \bkappa^{*}) = - \frac{1}{2} \ln \sigma_{\varepsilon}^2 -  \Exp\left[\frac{1}{2\sigma_{\varepsilon}^2 \cN n} \sum_{i=1}^{n} \sum_{a=1}^{\cN }\sum_{r=1}^{q}(\tilde{z}_{ir}(t_{ia}) - z_{ir}(t_{ia}))^2 \vert \bcK = \bkappa^{*}\right].
\]
\end{Defi}

\begin{Lemm}\label{Lemm:kappaNegligible} Under \Cref{Assu:simple},
\[
\sup_{\bkappa_{i}\in \cS_{\bkappa_{i}^{*}}}
\left|\frac{1}{\cN n }\sum_{i=1}^{n}
l_{\bkappa_{i}} \right|  \to 0.
\]
\end{Lemm}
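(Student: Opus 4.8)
The plan is to exploit the fact that $l_{\bkappa_i}$ carries no information that grows with the per-subject sample size $\cN$, so that after normalising by $\cN n$ it must vanish. First I would split $l_{\bkappa_i}$ into the constant piece $C_0 := -\tfrac{1}{2}\sum_{r=1}^q\sum_{l=1}^c \ln\sigma_{lr}^2$, which is a fixed finite number because all $\sigma_{lr}^2$ are known under \Cref{Assu:simple}, and the quadratic piece $-\tfrac{1}{2}\sum_{r=1}^q \bkappa_{ir}^\top\bSigma_r^{-1}\bkappa_{ir}$. Summing over $i$ and dividing by $\cN n$ gives
\[
\frac{1}{\cN n}\sum_{i=1}^n l_{\bkappa_i} = \frac{C_0}{\cN} - \frac{1}{2\cN n}\sum_{i=1}^n\sum_{r=1}^q \bkappa_{ir}^\top\bSigma_r^{-1}\bkappa_{ir}.
\]
The first term does not depend on $\bkappa$ and tends to $0$ since $\cN = \cN_n \to\infty$ by \Cref{Assu:simple}(v), so only the quadratic term requires uniform control in $\bkappa$.

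For the quadratic term, since each $\bSigma_r = \mathrm{diag}(\sigma_{1r}^2,\dots,\sigma_{cr}^2)$ is a fixed positive definite matrix, I let $\Lambda$ be the largest eigenvalue of $\bSigma_r^{-1}$ over $r=1,\dots,q$, giving $\sum_{r=1}^q \bkappa_{ir}^\top\bSigma_r^{-1}\bkappa_{ir} \le \Lambda\|\bkappa_i\|_2^2$. The supremum over $\bkappa_i\in\cS_{\bkappa_i^*}$ decouples across $i$, so it suffices to bound $\|\bkappa_i\|_2$ subject by subject. By \Cref{Assu:simple}(vi) the sets $\cS_{\bkappa_i^*}$ have radii uniformly bounded by some $\rho$ and contain $\bkappa_i^*$, so $\|\bkappa_i\|_2 \le \|\bkappa_i^*\|_2 + 2\rho$ for every admissible $\bkappa_i$. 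Using $(a+b)^2\le 2a^2+2b^2$ and absorbing $\Lambda,\rho$ into a constant $C$, this yields
\[
\sup_{\bkappa_i\in\cS_{\bkappa_i^*}}\Big|\frac{1}{\cN n}\sum_{i=1}^n l_{\bkappa_i}\Big| \le \frac{|C_0|}{\cN} + \frac{C}{\cN}\Big(\frac{1}{n}\sum_{i=1}^n \|\bkappa_i^*\|_2^2 + 1\Big).
\]

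It then remains to show that the average in parentheses is $O(1)$. Here I would invoke that the $\bkappa_i^*$ are realisations of i.i.d.\ mean-zero Gaussian vectors $\bcK_i$ with $\Exp\|\bcK_i\|_2^2 = \sum_{r=1}^q\sum_{l=1}^c\sigma_{lr}^2 < \infty$, so by the strong law of large numbers $\frac{1}{n}\sum_{i=1}^n\|\bkappa_i^*\|_2^2 \to \Exp\|\bcK_1\|_2^2$ almost surely; in particular, for almost surely all realisations $\bkappa^*$ this average is eventually bounded. Multiplying a bounded quantity by $C/\cN \to 0$ drives the entire right-hand side to $0$, completing the argument. I do not expect a genuine obstacle in this lemma; the only two points needing care are that the decoupling of the supremum across $i$ (so the worst-case $\bkappa_i$ may be chosen one subject at a time) keeps the uniform bound elementary and avoids any empirical-process machinery, and that the claimed deterministic limit is really a statement holding for almost surely all $\bkappa^*$, which is precisely the regime in which the strong law supplies the needed boundedness of $\tfrac{1}{n}\sum_i\|\bkappa_i^*\|_2^2$.
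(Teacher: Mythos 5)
Your proof is correct, but it takes a genuinely different route from the paper's. The paper reads \Cref{Assu:simple}\ref{item:interval} as implying a single deterministic constant $C>0$ with $\sum_{r=1}^q \bkappa_{ir}^\top\bSigma_r^{-1}\bkappa_{ir} \leq C$ for \emph{all} $i$ and all $\bkappa_i\in\cS_{\bkappa_i^*}$ (i.e.\ the sets are uniformly bounded as subsets of Euclidean space), after which the whole lemma is two lines: the constant term is $O(1/\cN)$ and the quadratic term is at most $C/(2\cN)$, both vanishing since $\cN\to\infty$; the conclusion is fully deterministic, matching the unconditional statement of the lemma. You instead interpret the bounded radius as a bound on the half-diameter around the (possibly unbounded) centres $\bkappa_i^*$, write $\|\bkappa_i\|_2 \leq \|\bkappa_i^*\|_2 + 2\rho$, and then control $\frac{1}{n}\sum_{i=1}^n\|\bkappa_i^*\|_2^2$ by the strong law of large numbers applied to the i.i.d.\ Gaussian effects $\bcK_i$. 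What each approach buys: the paper's argument is shorter and yields the deterministic limit as literally stated; yours is robust to the weaker reading of ``radius'' — under which the paper's uniform constant $C$ need not exist, since $\max_{i\leq n}\|\bkappa_i^*\|_2$ grows without bound for Gaussian realisations — at the cost of only proving the convergence for almost surely all $\bkappa^*$. That weakening is harmless in context, because the lemma is invoked inside \Cref{prop:consistRandomEffect}, which is itself a statement conditional on $\bcK=\bkappa^*$ for almost surely all $\bkappa^*$, and your SLLN step is in the same spirit as the paper's own \Cref{Rema:b}. One stylistic nit: your observation that the supremum decouples across $i$ is correct but does no real work in either argument, since both bounds are obtained termwise anyway.
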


\begin{Assu}\label{Assu:unifNegligible}
Assume $\cN\to \infty$ fast enough such that for almost surely all $\bkappa^{*}$ and all $\xi>0$, 
\[
\Prob\left(\sup_{\bzeta, \bkappa_{i}\in \cS_{\bkappa_{i}^{*}}} \left|\frac{1}{\cN n}\sum_{i = 1}^n l_{i}(\bzeta|\bkappa)\right| > \xi | \bcK = \bkappa^{*} \right) \to 0.  
\]
\end{Assu}
\begin{Rema}
Intuitively, \Cref{Assu:unifNegligible} says that, when the longitudinal sample size $\cN$ is large, the contribution of the likelihood from the event time, compared with the contribution of the likelihood from the longitudinal data, is $\op{1}$ conditional on observed random effect $\bcK= \bkappa^{*}$. Intuitively, this may make sense, because for each $\bzeta, \bkappa, \bkappa^{*}$, under some conditions 
\begin{align*}
\Exp \left[\left( \frac{1}{\cN n}\sum_{i = 1}^n l_{i}(\bzeta, \bkappa) \right)^{2}\Big| \bcK = \bkappa^{*}  \right] &= \frac{1}{\cN^{2} n} \Var \left( l_{1}(\bzeta, \bkappa)| \bcK = \bkappa^{*}  \right) + \frac{1}{\cN^{2}} \left( \Exp \left[l_{1}(\bzeta, \bkappa)| \bcK = \bkappa^{*}  \right]\right)^{2} \\
&= O\left(\frac{1}{\cN^{2} }\right).
\end{align*}
\end{Rema}

\begin{Assu}\label{Assu:uniqueMax}
Recall $M_{n}$ defined in \Cref{Defi:targetFunc} and $\cS_{\balpha^{*}}, \cS_{\bkappa^{*}}$ defined in \Cref{item:interval} in \Cref{Assu:simple}. For almost surely all $\bkappa^{*}$ and all $\xi>0$, there exist $\cn$ and $\upsilon>0$ which do not depend on $n$ such that for all $n>\cn$ and all $\balpha\in \cS_{\balpha^{*}}, \bkappa_{i}\in \cS_{\bkappa_{i}^{*}}$, $i=1,2,\dots$, satisfying 
\begin{equation}\label{Equa:uniqueMax}
    \|\balpha- \balpha^{*}\|_{2}^{2}+\frac{1}{n}\sum_{i=1}^{n}\|\bkappa_{i}-\bkappa_{i}^{*}\|_{2}^{2} > \xi,
\end{equation}
we have that $M_{n}(\balpha^{*}, \bkappa^{*}, \bkappa^{*}) - M_{n}(\balpha, \bkappa, \bkappa^{*}) > \upsilon$.
\end{Assu}
\begin{Rema}
Intuitively, Assumption \ref{Assu:uniqueMax} says that, when the sample size $n$ is large, if we permute the longitudinal parameter by $\xi>0$ from the true parameter $\bkappa^{*}$ and $\bkappa^{*}$, then, no matter what dimension $\bkappa^{*}$ has, in the population level our likelihood will always decrease by some $\upsilon>0$ from the maximum likelihood. Given specific $\phi_{lr}$, we may be able to verify Assumption \ref{Assu:uniqueMax} by first replacing the `` $>$'' in \eqref{Equa:uniqueMax} by `` $=$'' and then applying the Lagrange Multiplier method.

\end{Rema}

\begin{Lemm}\label{Lemm:convLongitude} Recall $\hM_{n}^{(1)}$ and $M_{n}$ defined in \Cref{Defi:targetFunc}. Then under \Cref{Assu:simple,Assu:phi}, for almost surely all $\bkappa^{*}$ and all $\xi>0$, 
\[
\Prob\left(\sup_{\balpha\in \cS_{\balpha^{*}}, \bkappa_{i}\in \cS_{\bkappa_{i}^{*}}}|\hM_{n}^{(1)}(\balpha,  \bkappa) - M_{n}(\balpha, \bkappa, \bkappa^{*})|>\xi | \bcK= \bkappa^{*}\right) \to 0. 
\]
\end{Lemm}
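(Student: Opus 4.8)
The plan is to compute the centred quantity $\hM_{n}^{(1)}-M_{n}$ in closed form and then control its supremum by exploiting the fact that it is \emph{linear} in the parameters $(\balpha,\bkappa)$. Conditioning on $\bcK=\bkappa^{*}$, write $\tilde z_{ir}(t_{ia}) = z_{ir}^{*}(t_{ia}) + \varepsilon_{ir}(t_{ia})$, where $z_{ir}^{*}(t_{ia}) = \bphi_{r}(t_{ia})^{\top}(\balpha_{r}^{*}+\bkappa_{ir}^{*})$ is deterministic given $\bkappa^{*}$ and the $\varepsilon_{ir}(t_{ia})$ are the independent, mean-zero measurement errors of variance $\sigma_{\varepsilon}^{2}$. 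Setting $d_{ir}(t_{ia}) = z_{ir}^{*}(t_{ia}) - z_{ir}(t_{ia}) = \bphi_{r}(t_{ia})^{\top}\big((\balpha_{r}^{*}-\balpha_{r})+(\bkappa_{ir}^{*}-\bkappa_{ir})\big)$ and expanding the square, the deterministic term $d_{ir}^{2}$ and the conditional mean $\sigma_{\varepsilon}^{2}$ of $\varepsilon_{ir}^{2}$ cancel against their own expectations inside $M_{n}$, leaving
\begin{align*}
\hM_{n}^{(1)}(\balpha,\bkappa) - M_{n}(\balpha,\bkappa,\bkappa^{*})
&= -\frac{1}{\cN n\,\sigma_{\varepsilon}^{2}}\sum_{i=1}^{n}\sum_{a=1}^{\cN}\sum_{r=1}^{q} d_{ir}(t_{ia})\,\varepsilon_{ir}(t_{ia}) \\
&\quad - \frac{1}{2\cN n\,\sigma_{\varepsilon}^{2}}\sum_{i=1}^{n}\sum_{a=1}^{\cN}\sum_{r=1}^{q}\big(\varepsilon_{ir}(t_{ia})^{2}-\sigma_{\varepsilon}^{2}\big).
\end{align*}
The second term does not depend on $(\balpha,\bkappa)$, so it may be pulled outside the supremum; being an average of $\cN n q$ independent centred summands it is $\op{1}$ by the weak law of large numbers, with no uniformity issue.

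For the first term I would substitute the expression for $d_{ir}$ and use its linearity in the parameter deviations to obtain
\[
\sum_{i,a,r} d_{ir}(t_{ia})\,\varepsilon_{ir}(t_{ia})
= \sum_{r=1}^{q}(\balpha_{r}^{*}-\balpha_{r})^{\top}\bg_{r} + \sum_{i=1}^{n}\sum_{r=1}^{q}(\bkappa_{ir}^{*}-\bkappa_{ir})^{\top}\bg_{ir},
\]
where $\bg_{r}=\sum_{i,a}\bphi_{r}(t_{ia})\varepsilon_{ir}(t_{ia})$ and $\bg_{ir}=\sum_{a}\bphi_{r}(t_{ia})\varepsilon_{ir}(t_{ia})$. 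Applying Cauchy--Schwarz and the uniform bound on the radii of $\cS_{\balpha^{*}}$ and $\cS_{\bkappa_{i}^{*}}$ from \Cref{Assu:simple} (so that $\|\balpha_{r}^{*}-\balpha_{r}\|$ and $\|\bkappa_{ir}^{*}-\bkappa_{ir}\|$ are bounded by a constant $C$ uniformly over the feasible sets and over $i$), the supremum over $(\balpha,\bkappa)$ is bounded by $\tfrac{C}{\cN n \sigma_{\varepsilon}^{2}}\big(\sum_{r}\|\bg_{r}\| + \sum_{i,r}\|\bg_{ir}\|\big)$, which crucially no longer involves the parameters. It then remains to show each normalised aggregate tends to zero in probability.

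These aggregate bounds follow from second-moment computations using the independence of the errors and \Cref{Assu:phi}. For $\bg_{r}$, conditional independence gives $\Exp[\|\bg_{r}\|^{2}\mid\bkappa^{*}]=\sigma_{\varepsilon}^{2}\sum_{i,a}\|\bphi_{r}(t_{ia})\|^{2}=O(\cN n)$ by \Cref{Assu:phi}, so $\tfrac{1}{\cN n}\|\bg_{r}\|=\Op{(\cN n)^{-1/2}}=\op{1}$. For the random-effect aggregates, $\sum_{i}\Exp[\|\bg_{ir}\|^{2}\mid\bkappa^{*}]=\sigma_{\varepsilon}^{2}\sum_{i,a}\|\bphi_{r}(t_{ia})\|^{2}=O(\cN n)$, and a further Cauchy--Schwarz over the index $i$ yields $\sum_{i}\Exp\|\bg_{ir}\|\le\sqrt{n\sum_{i}\Exp\|\bg_{ir}\|^{2}}=O(n\sqrt{\cN})$, whence $\tfrac{1}{\cN n}\sum_{i,r}\Exp\|\bg_{ir}\|=O(\cN^{-1/2})\to0$ and Markov's inequality delivers $\op{1}$. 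Combining the three pieces by the triangle inequality gives the claim. The main obstacle is the uniformity over $\bkappa$, whose dimension $ncq$ grows with the sample size; the structural feature that rescues the argument is the exact linearity of $d_{ir}$ in $\bkappa_{ir}$, which collapses the sup over a growing-dimensional set into a bound on the error norms $\|\bg_{ir}\|$. The cost is that each individual's $\bkappa_{i}$ is informed only by its own $\cN$ observations, so the per-individual fluctuation is $O(\sqrt{\cN})$; dividing by $\cN$ and aggregating produces the $O(\cN^{-1/2})$ rate, which is precisely why the assumption $\cN\to\infty$ in \Cref{Assu:simple} is indispensable.
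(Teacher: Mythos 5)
Your proof is correct and matches the paper's argument essentially step for step: the same decomposition of $\hM_{n}^{(1)}-M_{n}$ into a parameter-linear cross term and a parameter-free centred quadratic term, the law of large numbers for the latter, and Cauchy--Schwarz plus the uniform radius bound from \Cref{Assu:simple} and a second-moment/Markov computation for the former, with the decisive $O(\cN^{-1/2})$ rate coming from $\cN\to\infty$ exactly as in the paper's display \eqref{Equa:convLongitude3}. Your only deviation---splitting the cross term into the aggregates $\bg_{r}$ and $\bg_{ir}$ and applying vector Cauchy--Schwarz per individual rather than the paper's single chained inequality---is a mild reorganization (arguably cleaner bookkeeping) of the same bound.
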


\begin{Assu}\label{Assu:penaltyBias}
Recall the joint log-likelihood function $l (\bzeta, \bkappa) $ in \eqref{Equa:full_loglik}. Consider the setting in \Cref{Assu:simple}. Let
\[
(\bzeta_{n}^{\star}(\bkappa^{*}) , \bkappa_{n}^{\star}(\bkappa^{*}))= \argmax_{\bzeta, \bkappa}\Exp\left[l (\bzeta, \bkappa) | \bcK = \bkappa^{*} \right] 
\]
Assume that for almost surely all $\bkappa^{*}$,
\[
\bzeta_{n}^{\star}(\bkappa^{*}) - \bzeta^{*} = o\left(\frac{1}{\sqrt{N}}\right) 
\]
\end{Assu}
\begin{Rema}
\Cref{Assu:penaltyBias} intuitively assumes that the $l_{\bkappa_{i}}$ in the joint likelihood will cause a negligible effect at the population level conditional on $\bcK$.  
\end{Rema}

\section{Proofs}

\begin{proof}[{Proof of Proposition \ref{prop:consistRandomEffect}}]
By \Cref{Assu:simple,Assu:phi,Assu:unifNegligible,Assu:uniqueMax} and \Cref{Lemm:kappaNegligible,Lemm:convLongitude}, for almost surely all $\bkappa^{*}$, the proposition
follows after an application of Lemma \ref{Lemm:parent} to $\hM_{n}^{(1)}+\hM_{n}^{(2)}$ and $M_{n}$ in \Cref{Defi:targetFunc} with $\cd=m+p+q$, $\cS=\bbR^{\cd}$, $\cD_{n}=\bbR^{qb+qbn}$, $\cS_{n} = \cS_{\balpha^{*}} \times \cS_{\bkappa_{1}^{*}} \times \cdots \cS_{\bkappa_{n}^{*}}$, and $d_{n}$ below, which can be shown to be a sequence of metrics:
\[
d_{n}\left((\balpha,\bkappa),(\balpha^{*}, \bkappa^{*})\right) =\sqrt{ \|\balpha- \balpha^{*}\|_{2}^{2} + \frac{1}{n}\sum_{i=1}^{n}\|\bkappa_{i}-\bkappa_{i}^{*}\|_{2}^{2}}.
\]
\end{proof}

\begin{proof}[Proof of Lemma \ref{Lemm:parent}] By \eqref{eq:unifParent} and the definition of $\hbkappa_{n}$,
\[
M_{n}(\hbzeta,\hbkappa_{n}) -M_{n}(\hbzeta,\bkappa_{n}^{*}) = \hM_{n}(\hbzeta,\hbkappa_{n}) -\hM_{n}(\hbzeta,\bkappa_{n}^{*}) +\op{1} \geq \op{1}.
\]
Hence, by \eqref{eq:uniqueParent}, for all $\ep>0$, 
\[
\Prob\left(d_{n}(\hbkappa_{n}, \bkappa_{n}^{*})>\ep\right) \leq  \Prob(M_{n}(\hbzeta,\hbkappa_{n}) -M_{n}(\hbzeta,\bkappa_{n}^{*}) < -\delta) \leq \Prob(\op{1}<-\delta) \to 0. 
\]
\end{proof}

\begin{proof}[Proof of \Cref{Lemm:kappaNegligible}]
By \Cref{Assu:simple}, there exists constant $C>0$ such that for all $i=1,2,\dots,$ and all $\bkappa_{i}\in \cS_{\bkappa_{i}^{*}}$,
\[
\sum_{r=1}^q \bkappa_{ir}^\top\bSigma_r^{-1}\bkappa_{ir} \leq C.
\]
Hence, by \Cref{Defi:targetFunc},
\begin{align*} 
    \sup_{\bkappa_{i}\in \cS_{\bkappa_{i}^{*}}}
\left|\frac{1}{\cN n }\sum_{i=1}^{n}
l_{\bkappa_{i}} \right|  &\leq  \left|\frac{1}{2\cN}\sum_{r=1}^q \sum_{l=1}^c \ln \sigma_{lr}^2 \right| +     \sup_{\bkappa_{i}\in \cS_{\bkappa_{i}^{*}}}\frac{1}{2\cN n}\sum_{i=1}^n\sum_{r=1}^q \bkappa_{ir}^\top\bSigma_r^{-1}\bkappa_{ir}\\
&\leq \left|\frac{1}{2\cN}\sum_{r=1}^q \sum_{l=1}^c \ln \sigma_{lr}^2 \right|  + \frac{C}{2\cN} \to 0.
\end{align*}    
\end{proof}

\begin{proof}[{Proof of Lemma \ref{Lemm:convLongitude}}]
Recall, when $\bcK= \bkappa^{*}$,
\begin{align*}
    \tilde{z}_{ir}(t_{ia}) &= z_{ir}^{*}(t_{ia}) + \varepsilon_{ir}(t_{ia}), &
    z_{ir}^{*}(t_{ia}) &= \bphi_{r}(t_{ia})^\top\,(\balpha_r^{*}+\bkappa_{ir}^{*}).
\end{align*}
Also, recall in \Cref{Defi:targetFunc}, we define
\[
z_{ir}(t_{ia}) = \bphi_{r}(t_{ia})^\top\,(\balpha_r+\bkappa_{ir}).
\]
Then for almost surely all $\bkappa^{*}$
\begin{equation}\label{Equa:convLongitude}    
\begin{aligned}
&\mathph{=} M_{n}(\balpha, \bkappa, \bkappa^{*}) - \hM_{n}^{(1)}(\balpha,  \bkappa) \\
    &= \frac{1}{2\sigma_{\varepsilon}^2 \cN n} \sum_{i=1}^{n} 
 \sum_{a=1}^{\cN }\sum_{r=1}^{q}\left\{(\tilde{z}_{ir}(t_{ia}) - z_{ir}(t_{ia}))^2 - \Exp\left[(\tilde{z}_{ir}(t_{ia}) -z_{ir}(t_{ia}))^2 | \bcK= \bkappa^{*}\right]\right\}
  \\
    &= \frac{1}{2\sigma_{\varepsilon}^2 \cN n}  \sum_{i=1}^{n} 
 \sum_{a=1}^{\cN }\sum_{r=1}^{q}\left\{\left( z_{ir}^{*}(t_{ia})-z_{ir}(t_{ia}) + \ep_{ir}(t_{ia}) \right)^2- \Exp \left[\left(z_{ir}^{*}(t_{ia})-z_{ir}(t_{ia}) + \ep_{ir}(t_{ia})\right)^2 | \bcK= \bkappa^{*}\right]\right\}\\
&= \frac{1}{\sigma_{\varepsilon}^2 \cN n} \sum_{i=1}^{n} 
 \sum_{a=1}^{\cN }\sum_{r=1}^{q}\left\{z_{ir}^{*}(t_{ia})-z_{ir}(t_{ia})\right\}\ep_{ir}(t_{ia}) +A_{n}^{(2)} \\
&= A_{n}^{(1)} + A_{n}^{(2)},  
\end{aligned}
\end{equation}
where
\begin{align*}
    A_{n}^{(1)} &= \frac{1}{\sigma_{\varepsilon}^2 \cN n}\sum_{i=1}^{n} 
 \sum_{a=1}^{\cN }\sum_{r=1}^{q}\bphi_{r}(t_{ia})^{\top}\left[(\balpha_{r}^{*}-\balpha_{r})+(\bkappa_{ir}^{*}-\bkappa_{ir})\right]\ep_{ir}(t_{ia})\\
    A_{n}^{(2)} &= \frac{1}{2\sigma_{\varepsilon}^2 \cN n}\sum_{i=1}^{n} 
 \sum_{a=1}^{\cN }\sum_{r=1}^{q}\left((\ep_{ir}(t_{ia}))^{2}-\Exp\left[(\ep_{ir}(t_{ia}))^{2}\right]\right) 
\end{align*}
Since $\ep_{ir}(t_{ia})$, $i=1,2,\dots$, $a=1,2,\dots$, $r=1,\dots,q$, are all iid, by the Law of Large Numbers,
\begin{equation}\label{Equa:convLongitude1} 
\sup_{\balpha\in \cS_{\balpha^{*}}, \bkappa_{i}\in \cS_{\bkappa_{i}^{*}}}|A_{n}^{(2)}| = |A_{n}^{(2)}| = \op{1}.
\end{equation}
For $A_{n}^{(1)}$, notice that by \cref{item:interval} in \Cref{Assu:simple}, there exists constant $C>0$, such that 
\[
\sup_{\balpha\in \cS_{\balpha^{*}}, \bkappa_{i}\in \cS_{\bkappa_{i}^{*}}} \sqrt{\sum_{i=1}^{n}\|(\balpha_{r}^{*}-\balpha_{r}) + (\bkappa_{ir}^{*}-\bkappa_{ir})\|_{2}^{2}} \leq \sqrt{n}C
\]
Hence, by multiple applications of the Cauchy-Schwartz Inequality,
\begin{equation}\label{Equa:convLongitude2} 
\begin{aligned}
 &\mathph{=}  \sup_{\balpha\in \cS_{\balpha^{*}}, \bkappa_{i}\in \cS_{\bkappa_{i}^{*}}} |A_{n}^{(1)}| \\
 &\leq 
\sup_{\balpha\in \cS_{\balpha^{*}}, \bkappa_{i}\in \cS_{\bkappa_{i}^{*}}} \left|\frac{1}{\sigma_{\varepsilon}^2 \cN n}\sum_{r=1}^{q}\sum_{i=1}^{n} 
 \sum_{a=1}^{\cN }\|\bphi_{r}(t_{ia})\|_{2}\left\|(\balpha_{r}^{*}-\balpha_{r})+(\bkappa_{ir}^{*}-\bkappa_{ir})\right\|_{2}\ep_{ir}(t_{ia}) \right|\\
 &=  
\sup_{\balpha\in \cS_{\balpha^{*}}, \bkappa_{i}\in \cS_{\bkappa_{i}^{*}}} \left|\frac{1}{\sigma_{\varepsilon}^2 \cN n}\sum_{r=1}^{q}\sum_{i=1}^{n} \left\|(\balpha_{r}^{*}-\balpha_{r})+(\bkappa_{ir}^{*}-\bkappa_{ir})\right\|_{2}
 \left(\sum_{a=1}^{\cN }\|\bphi_{r}(t_{ia})\|_{2}\ep_{ir}(t_{ia})\right)\right|\\
&\leq \sup_{\balpha\in \cS_{\balpha^{*}}, \bkappa_{i}\in \cS_{\bkappa_{i}^{*}}} \frac{1}{\sigma_{\varepsilon}^2 \cN n}\sum_{r=1}^{q}\sqrt{\sum_{i=1}^{n}\left\|(\balpha_{r}^{*}-\balpha_{r})+(\bkappa_{ir}^{*}-\bkappa_{ir})\right\|_{2}^{2}}\sqrt{\sum_{i=1}^{n}\left(\sum_{a=1}^{\cN }\|\bphi_{r}(t_{ia})\|_{2}\ep_{ir}(t_{ia})\right)^{2}}\\
&\leq \frac{C}{\sigma_{\varepsilon}^2 \cN \sqrt{n}}\sum_{r=1}^{q}\sqrt{\sum_{i=1}^{n}\left(\sum_{a=1}^{\cN }\|\bphi_{r}(t_{ia})\|_{2}\ep_{ir}(t_{ia})\right)^{2}}.
\end{aligned}
\end{equation}
Notice that for each $r=1,\dots,  q$, by Assumption \ref{Assu:phi} and Jensen's Inequality,
\begin{equation}\label{Equa:convLongitude3} 
\begin{aligned}
\Exp\left[ \frac{1}{\cN \sqrt{n}}\sqrt{\sum_{i=1}^{n}\left(\sum_{a=1}^{\cN }\|\bphi_{r}(t_{ia})\|_{2}\ep_{ir}(t_{ia})\right)^{2}}\right]
&\leq \frac{1}{\cN \sqrt{n}}\sqrt{\sum_{i=1}^{n}\Var\left(\sum_{a=1}^{\cN }\|\bphi_{r}(t_{ia})\|_{2}\ep_{ir}(t_{ia})\right)}\\
&\leq \frac{\sigma_{\varepsilon}}{\sqrt{\cN}}\sqrt{\frac{1}{\cN n}\sum_{i=1}^{n}\sum_{a=1}^{\cN }\|\bphi_{r}(t_{ia})\|_{2}^{2}}\\
&\to 0.
\end{aligned}
\end{equation}
The lemma follows from \eqref{Equa:convLongitude}, \eqref{Equa:convLongitude1}, \eqref{Equa:convLongitude2}, and \eqref{Equa:convLongitude3} and Markov's Inequality.
\end{proof}

\begin{proof}[Proof of \Cref{Prop:normalFixEffect}]
Recall $\bzeta_{n}^{\star}$ defined in Assumption \ref{Assu:penaltyBias}. Recall \Cref{Defi:targetFunc}. Under \Cref{Assu:condLi} and \Cref{Assu:simple}, similar derivation as in \cite[p. 197]{li2003efficiency} with
\begin{align*}
    U_{0i1}(\bzeta,\bkappa_{i})(\mathbf{t}_i,\tilde{\bz}_{i}) &= \frac{\partial}{\partial \bzeta} \left[l_i(\bzeta \,|\, \bkappa) +  l_{\tilde{{\bz}}_{i}}(\balpha \,|\, \bkappa) +  l_{\bkappa_{i}}\right],\\
    V_{1i1}(\bzeta,\bkappa_{i})(\mathbf{t}_i,\tilde{\bz}_{i}) &= \frac{\partial}{\partial \bkappa_{i}} \left[l_i(\bzeta \,|\, \bkappa) +   l_{\tilde{{\bz}}_{i}}(\balpha \,|\, \bkappa) + l_{\bkappa_{i}}\right]\\
    V_{2i1}(\bzeta,\bkappa_{i})(\mathbf{t}_i,\tilde{\bz}_{i}) &= \left(V_{1i1}(\bzeta,\bkappa_{i})(\mathbf{t}_i,\tilde{\bz}_{i})\right)^{2}+ \frac{\partial^{2}}{\partial \bkappa_{i}^{2}} \left[l_i(\bzeta \,|\, \bkappa) +   l_{\tilde{{\bz}}_{i}}(\balpha \,|\, \bkappa) + l_{\bkappa_{i}}\right]
\end{align*}
gives that for almost surely all $\bkappa^{*}$ and all $\bc$,
\[
\Prob\left(\sqrt{N}(\hbzeta- \bzeta_{n}^{\star}(\bkappa^{*}))\leq \bc | \bcK = \bkappa^{*} \right) -\Prob\left(\pazocal{N}(\mathcal{\bb}(\bkappa^{*}), \mathcal{\bV}(\bkappa^{*}))\leq \bc\right) \to 0,
\]
The proposition follows from Assumption \ref{Assu:penaltyBias} and an application of the Slutsky Theorem conditional on $\bcK=\bkappa^{*}$.
\end{proof}

\section{Details of data generation for the additional simulation study}

In Study 2c, we again investigated the performance of our proposed MPL method but now generated a longitudinal covariate with an exponential decay-like shape to emulate the those observed in our real data example (see Section \ref{sec:app}), so that we had
\begin{equation}
    \tilde{z}_i(t) = 1 - 0.75e^{4t}/(1 + e^{4t}) + \kappa_{0i} + \kappa_{1i}t + \varepsilon_i(t)
\end{equation}
where the random slope was $\kappa_{0i} \sim \mathcal{N}(0, 0.1^2)$, the random intercept was $\kappa_{1i} \sim \mathcal{N}(0, 0.05^2)$, and the measurement error was $\varepsilon_i(t) \sim \mathcal{N}(0, 0.05^2)$. To investigate spline estimation of the longitudinal trajectory in our MPL method, we used a combination of a cubic smoothing spline model for the the mean trajectory and a linear model for the random effects, with the form
\begin{equation}
    \widehat{z}_i(t) = \sum_{r = 0}^4 (\phi_r(t) \widehat{\alpha}_r ) + \widehat{\kappa}_{i0} + \widehat{\kappa}_{i1}t
\end{equation}
where the cubic smoothing splines $\boldsymbol{\phi}(t)$ were defined with an interior knot at $t = 0.5$. For Study 2c we generated true event times from a log-normal baseline hazard with mean $0.3$ and variance $0.5^2$, and the true Cox regression parameters were $\beta_1 = 0.2$, $\beta_2 = -0.5$ and $\gamma = 1$, with time-fixed covariates in the Cox model drawn from $x_{i1} \sim \mathcal{N}(0, 1)$ and $x_{i2} \sim \text{Bern}(0.5)$. The observed censoring times $t_i^L$ and $t_i^R$ and the observation times for the longitudinal covariate (the $t_{ia}$) were simulated as per the description for Study 2a.

\section{Additional simulation results}

\subsection{Study 1 (right censoring)}

Table \ref{right_cens2} shows the bias, standard error estimations and coverage probabilities for Study 1 when the measurement error variance is increased to $\sigma^2_{\varepsilon} = 0.2^2$. Compared to the JM estimates, the MPL estimates appear to be more sensitive to larger measurement error values, with the bias in the MPL estimates in Table \ref{right_cens1} being smaller than those in Table \ref{right_cens2}. However, this difference is minimised in scenarios with larger sample sizes, larger numbers of longitudinal observations, and less censoring. Table \ref{tab:var_comp_stdy1} shows that in general the bias in the estimates of the variance components (the variance of the measurement error and random effects distributions) is in general smaller for the MPL method. However, the bias in the MPL variance component estimates increases with increasing $\sigma_{\varepsilon}^2$, which may explain the somewhat increased bias in the MPL regression coefficients.

\begin{table}[h!]
    \centering
    \scriptsize
    \begin{tabular}{ll | cccccc | cccccc}
         \hline
         && \multicolumn{6}{c}{$n = 200$, $\Bar{n}_i = 5$, $\pi^E = 0.7$} & \multicolumn{6}{c}{$n = 200$, $\Bar{n}_i = 20$, $\pi^E = 0.7$}\\
         \hline
         && $\beta$ & $\gamma$ & $\alpha_0$ & $\alpha_1$ & $\alpha_2$ & $\alpha_3$ & $\beta$ & $\gamma$ & $\alpha_0$ & $\alpha_1$ & $\alpha_2$ & $\alpha_3$ \\
         \hline
         Bias & MPL & -0.032 & 0.008 & -0.001 & -0.012 & 0.039 & -0.006 & -0.005 & 0.008 & -0.002 & -0.001 & -0.004 & 0.003 \\
         & JM & 0.006 & 0.005 & 0.001 & 0.007 & 0.001 & 0.008 & 0.012 & 0.001 & -0.015 & 0.015 & 0.003 & -0.002 \\
         SE & MPL & 0.143 & 0.093 & 0.036 & 0.123 & 0.242 & 0.125 & 0.144 & 0.086 & 0.036 & 0.080 & 0.107 & 0.050 \\
         & & (0.149) & (0.108) & (0.037) & (0.131) & (0.257) & (0.139)& (0.139) & (0.099) & (0.036) & (0.075) & (0.092) & (0.045) \\
         & JM & 0.148 & 0.106 & 0.028 & 0.126 & 0.236 & 0.114 & 0.148 & 0.099 & 0.016 & 0.074 & 0.115 & 0.050 \\
         & & (0.164) & (0.114) & (0.070) & (0.152) & (0.306) & (0.171) & (0.148) & (0.111) & (0.113) & (0.162) & (0.227) & (0.120) \\
         CP & MPL & 0.92 & 0.91 & 0.94 & 0.95 & 0.94 & 0.94 & 0.96 & 0.91 & 0.95 & 0.95 & 0.98 & 0.97 \\
         & & (0.95) & (0.94) & (0.95) & (0.97) & (0.94) & (0.94)& (0.95) & (0.96) & (0.94) & (0.93) & (0.96) & (0.95) \\
         & JM & 0.93 & 0.94 & 0.58 & 0.90 & 0.90 & 0.86 & 0.95 & 0.93 & 0.25 & 0.63 & 0.66 & 0.60 \\
         & & (0.94) & (0.94) & (0.94) & (0.94) & (0.93) & (0.93) & (0.95) & (0.95) & (0.97) & (0.95) & (0.96) & (0.96) \\
         \hline
         && \multicolumn{6}{c}{$n = 1000$, $\Bar{n}_i = 5$, $\pi^E = 0.7$} & \multicolumn{6}{c}{$n = 1000$, $\Bar{n}_i = 20$, $\pi^E = 0.7$}\\
         \hline
         && $\beta$ & $\gamma$ & $\alpha_0$ & $\alpha_1$ & $\alpha_2$ & $\alpha_3$ & $\beta$ & $\gamma$ & $\alpha_0$ & $\alpha_1$ & $\alpha_2$ & $\alpha_3$ \\
         \hline
         Bias & MPL & 0.013 & -0.033 & 0.003 & -0.009 & 0.020 & 0.006 & 0.012 & -0.001 & -0.005 & -0.001 & 0.001 & 0.001 \\
         & JM & 0.002 & 0.011 & 0.007 & 0.009 & 0.004 & 0.009 & 0.003 & 0.017 & 0.006 & 0.008 & 0.014 & -0.004 \\
         SE & MPL & 0.065 & 0.040 & 0.016 & 0.053 & 0.100 & 0.050 & 0.064 & 0.038 & 0.016 & 0.035 & 0.046 & 0.021 \\
         & & (0.065) & (0.045) & (0.017) & (0.059) & (0.103) & (0.053) & (0.064) & (0.045) & (0.017) & (0.036) & (0.047) & (0.022) \\
         & JM & 0.065 & 0.045 & 0.015 & 0.058 & 0.105 & 0.048 & 0.065 & 0.043 & 0.009 & 0.033 & 0.050 & 0.021 \\
         & & (0.064) & (0.043) & (0.053) & (0.081) & (0.163) & (0.089) & (0.064) & (0.046) & (0.067) & (0.112) & (0.121) & (0.063) \\
         CP & MPL & 0.95 & 0.81 & 0.93 & 0.93 & 0.95 & 0.94 & 0.94 & 0.89 & 0.94 & 0.97 & 0.98 & 0.95 \\
         & & (0.94) & (0.90) & (0.94) & (0.94) & (0.97) & (0.95) & (0.94) & (0.96) & (0.94) & (0.98) & (0.97) & (0.94) \\
         & JM & 0.96 & 0.95 & 0.45 & 0.82 & 0.78 & 0.73 & 0.96 & 0.90 & 0.25 & 0.40 & 0.56 & 0.44 \\
         & & (0.95) & (0.95) & (0.94) & (0.95) & (0.95) & (0.94) & (0.95) & (0.93) & (0.94) & (0.96) & (0.95) & (0.96) \\
         \hline
         && \multicolumn{6}{c}{$n = 200$, $\Bar{n}_i = 5$, $\pi^E = 0.3$} & \multicolumn{6}{c}{$n = 200$, $\Bar{n}_i = 20$, $\pi^E = 0.3$}\\
         \hline
         && $\beta$ & $\gamma$ & $\alpha_0$ & $\alpha_1$ & $\alpha_2$ & $\alpha_3$ & $\beta$ & $\gamma$ & $\alpha_0$ & $\alpha_1$ & $\alpha_2$ & $\alpha_3$ \\
         \hline
         Bias & MPL & 0.018 & -0.044 & 0.001 & -0.011 & 0.010 & 0.034 & 0.047 & 0.002 & -0.003 & -0.003 & 0.025 & -0.021 \\
         & JM & 0.027 & -0.042 & 0.007 & -0.005 & 0.025 & -0.023 & 0.057 & -0.010 & -0.008 & -0.012 & 0.033 & -0.026 \\
         SE & MPL & 0.238 & 0.203 & 0.037 & 0.174 & 0.519 & 0.396 & 0.239 & 0.180 & 0.036 & 0.102 & 0.237 & 0.167 \\
         & & (0.246) & (0.224) & (0.033) & (0.182) & (0.548) & (0.239) & (0.233) & (0.210) & (0.035) & (0.101) & (0.227) & (0.158) \\
         & JM & 0.238 & 0.200 & 0.029 & 0.177 & 0.511 & 0.376 & 0.238 & 0.188 & 0.015 & 0.104 & 0.246 & 0.164 \\
         & & (0.243) & (0.197) & (0.069) & (0.185) & (0.601) & (0.489) & (0.245) & (0.211) & (0.093) & (0.215) & (0.371) & (0.299) \\
         CP & MPL & 0.96 & 0.95 & 0.97 & 0.93 & 0.92 & 0.93 & 0.96 & 0.93 & 0.95 & 0.96 & 0.96 & 0.97 \\
         & & (0.96) & (0.94) & (0.98) & (0.96) & (0.94) & (0.96) & (0.94) & (0.96) & (0.95) & (0.95) & (0.94) & (0.95) \\
         & JM & 0.95 & 0.96 & 0.54 & 0.94 & 0.89 & 0.88 & 0.95 & 0.91 & 0.24 & 0.70 & 0.78 & 0.71 \\
         & & (0.94) & (0.95) & (0.94) & (0.95) & (0.94) & (0.95) & (0.95) & (0.96) & (0.94) & (0.97) & (0.96) & (0.96) \\
         \hline
         && \multicolumn{6}{c}{$n = 1000$, $\Bar{n}_i = 5$, $\pi^E = 0.3$} & \multicolumn{6}{c}{$n = 1000$, $\Bar{n}_i = 20$, $\pi^E = 0.3$}\\
         \hline
         && $\beta$ & $\gamma$ & $\alpha_0$ & $\alpha_1$ & $\alpha_2$ & $\alpha_3$ & $\beta$ & $\gamma$ & $\alpha_0$ & $\alpha_1$ & $\alpha_2$ & $\alpha_3$ \\
         \hline
         Bias & MPL & -0.019 & -0.034 & -0.001 & 0.001 & -0.028 & 0.059 & -0.002 & 0.001 & 0.002 & 0.001 & -0.010 & 0.005 \\
         & JM & -0.018 & 0.001 & -0.004 & 0.008 & -0.019 & 0.022 & 0.009 & 0.015 & 0.013 & -0.007 & -0.023 & 0.015 \\
         SE & MPL & 0.102 & 0.083 & 0.016 & 0.076 & 0.220 & 0.168 & 0.103 & 0.076 & 0.016 & 0.045 & 0.104 & 0.073\\
         & & (0.095) & (0.086) & (0.016) & (0.074) & (0.207) & (0.154) & (0.103) & (0.079) & (0.016) & (0.046) & (0.112) & (0.080) \\
         & JM & 0.103 & 0.084 & 0.015 & 0.082 & 0.230 & 0.166 & 0.103 & 0.076 & 0.016 & 0.045 & 0.104 & 0.073 \\
         & & (0.100) & (0.083) & (0.044) & (0.095) & (0.335) & (0.271) & (0.105) & (0.078) & (0.075) & (0.130) & (0.234) & (0.190) \\
         CP & MPL & 0.96 & 0.93 & 0.95 & 0.95 & 0.99 & 0.95 & 0.98 & 0.97 & 0.94 & 0.95 & 0.94 & 0.94\\
         & & (0.95) & (0.93) & (0.95) & (0.94) & (0.95) & (0.92) & (0.97) & (0.97) & (0.94) & (0.97) & (0.99) & (0.95) \\
         & JM & 0.95 & 0.94 & 0.55 & 0.91 & 0.81 & 0.73 & 0.97 & 0.95 & 0.17 & 0.55 & 0.69 & 0.63 \\
         & & (0.93) & (0.95) & (0.95) & (0.97) & (0.96) & (0.95) & (0.99) & (0.94) & (0.95) & (0.94) & (0.94) & (0.96) \\
         \hline
    \end{tabular}
    \caption{Study 1 (right censoring) regression parameter simulation results; $n$ refers to the total sample size, $\bar{n}_i$ refers to the average number of longitudinal observations, $\pi^E$ refers to the proportion of non-censored observations. For all scenarios summarised in this table, the true value of $\sigma_{\varepsilon}^2 = 0.2^2$.}
    \label{right_cens2}
\end{table}

Plots showing the estimates of the baseline hazard function from the MPL method in Study 1 are shown in Figures \ref{fig:right_cens_h0t}, \ref{fig:right_cens_h0t_2}, \ref{fig:right_cens_h0t_3}, and \ref{fig:right_cens_h0t_4}. Additional results detailing the bias in the estimates of the variance for the measurement error distribution and the random effects distributions for both the MPL and \texttt{JM} methods are shown in Table \ref{tab:var_comp_stdy1}. This table also gives the mean integrated square error (MISE) of the baseline hazard function estimation from both the MPL and \texttt{JM} methods.

\begin{figure}
    \centering
    \includegraphics[width=0.40\textwidth]{right_cens_h0t/MPL_n200_event70_sigma005_nobs20.jpeg}
    \includegraphics[width=0.40\textwidth]{right_cens_h0t/MPL_n200_event30_sigma005_nobs20.jpeg}
    \includegraphics[width=0.40\textwidth]{right_cens_h0t/MPL_n1000_event70_sigma005_nobs20.jpeg}
    \includegraphics[width=0.40\textwidth]{right_cens_h0t/MPL_n1000_event30_sigma005_nobs20.jpeg}
    \caption{Estimates of the baseline hazard function for $\sigma_{\varepsilon} = 0.05$ and $\bar{n}_i = 20$. The solid line is the true baseline hazard function, the dashed line is the mean estimate and the grey area represents the asymptotic $95\%$ coverage probability.}
    \label{fig:right_cens_h0t}
\end{figure}

\begin{figure}
    \centering
    \includegraphics[width=0.49\textwidth]{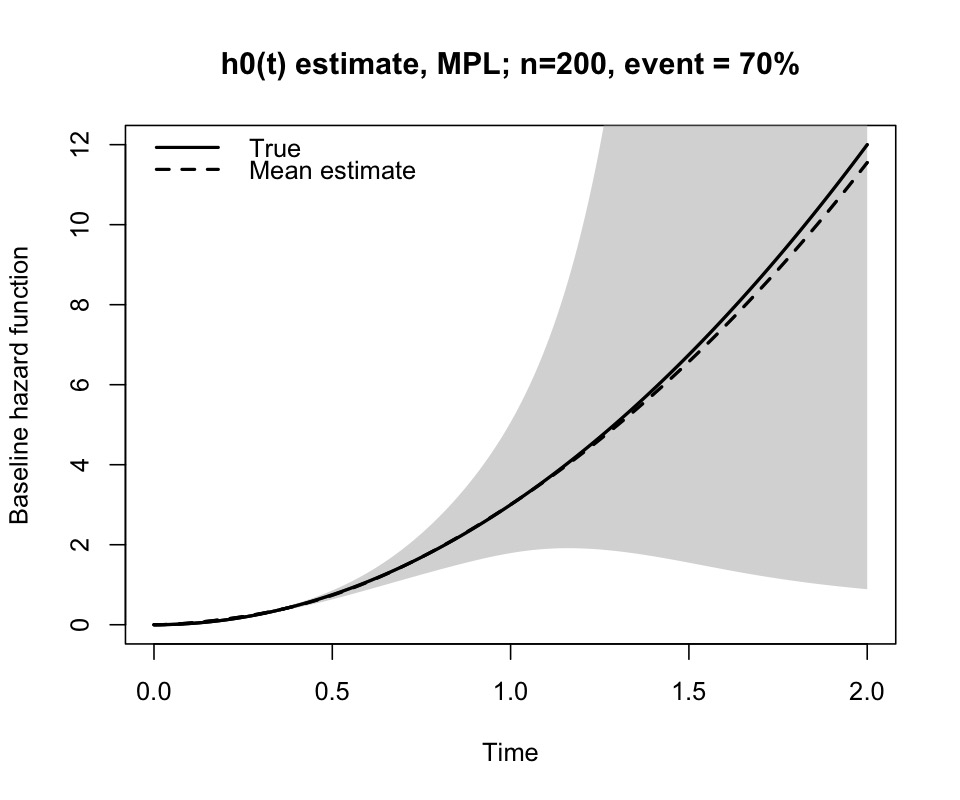}
    \includegraphics[width=0.49\textwidth]{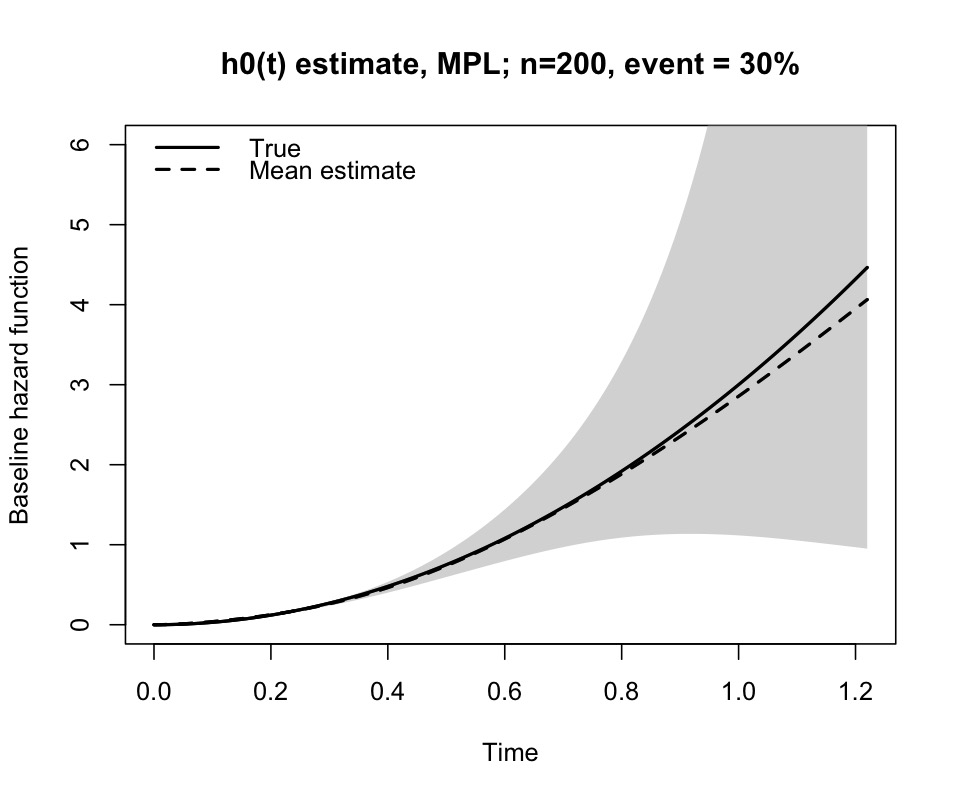}
    \includegraphics[width=0.49\textwidth]{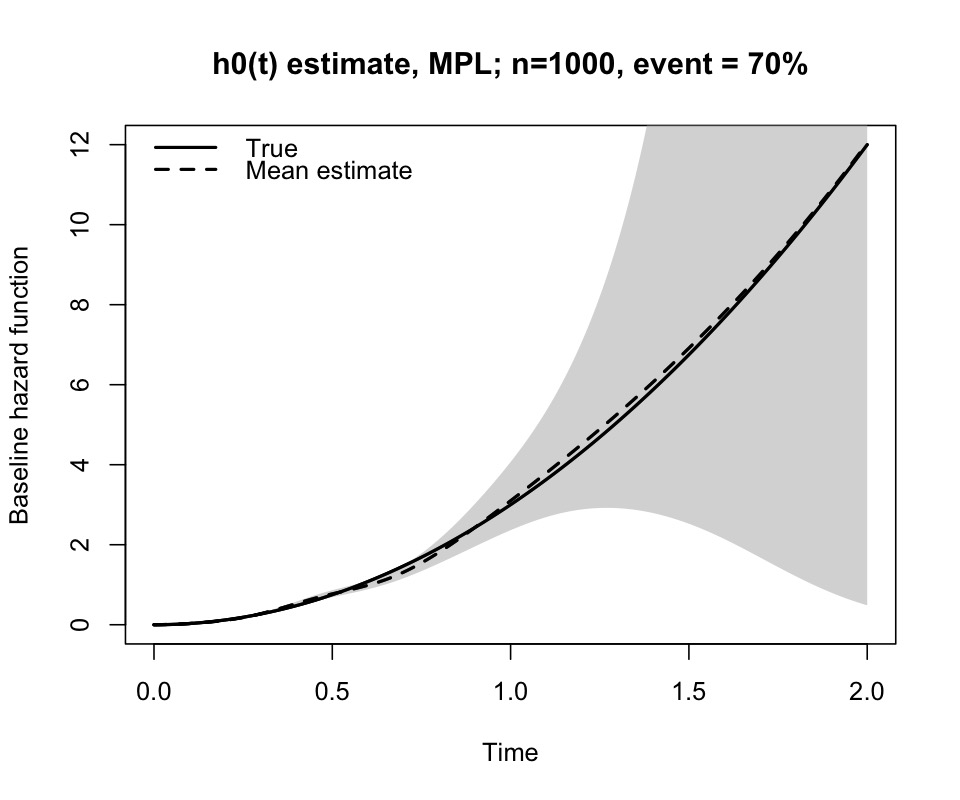}
    \includegraphics[width=0.49\textwidth]{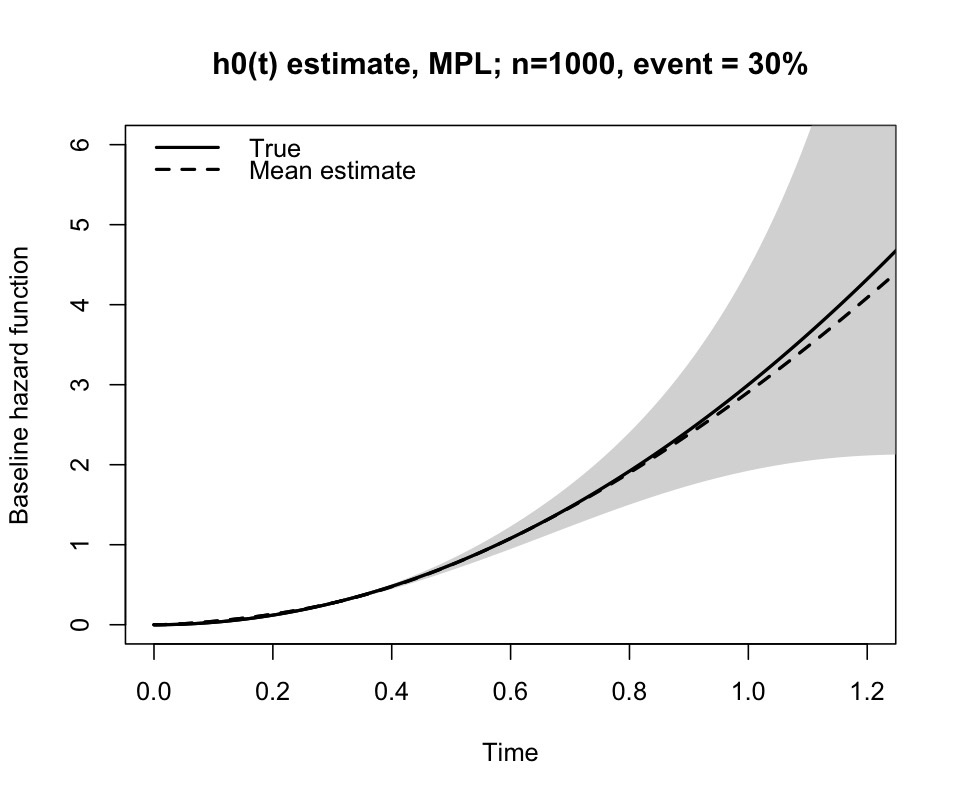}
    \caption{Estimates of the baseline hazard function for $\sigma_{\varepsilon} = 0.05$ and $\bar{n}_i = 5$. The solid line is the true baseline hazard function, the dashed line is the mean estimate and the grey area represents the asymptotic $95\%$ coverage probability.}
    \label{fig:right_cens_h0t_2}
\end{figure}

\begin{figure}
    \centering
    \includegraphics[width=0.49\textwidth]{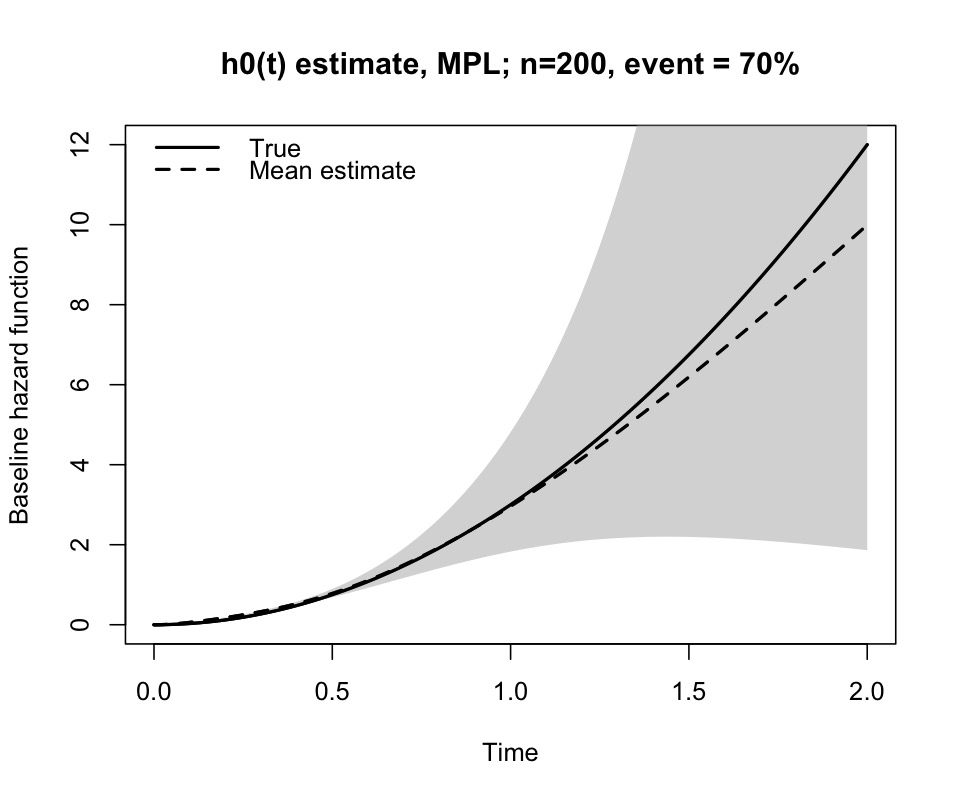}
    \includegraphics[width=0.49\textwidth]{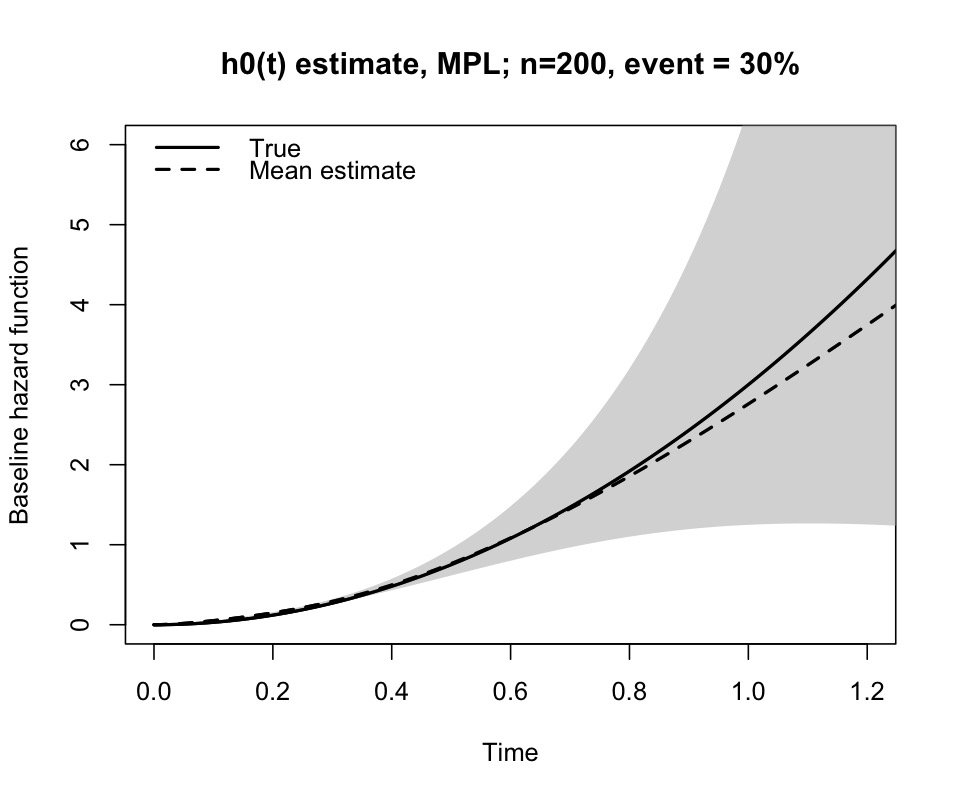}
    \includegraphics[width=0.49\textwidth]{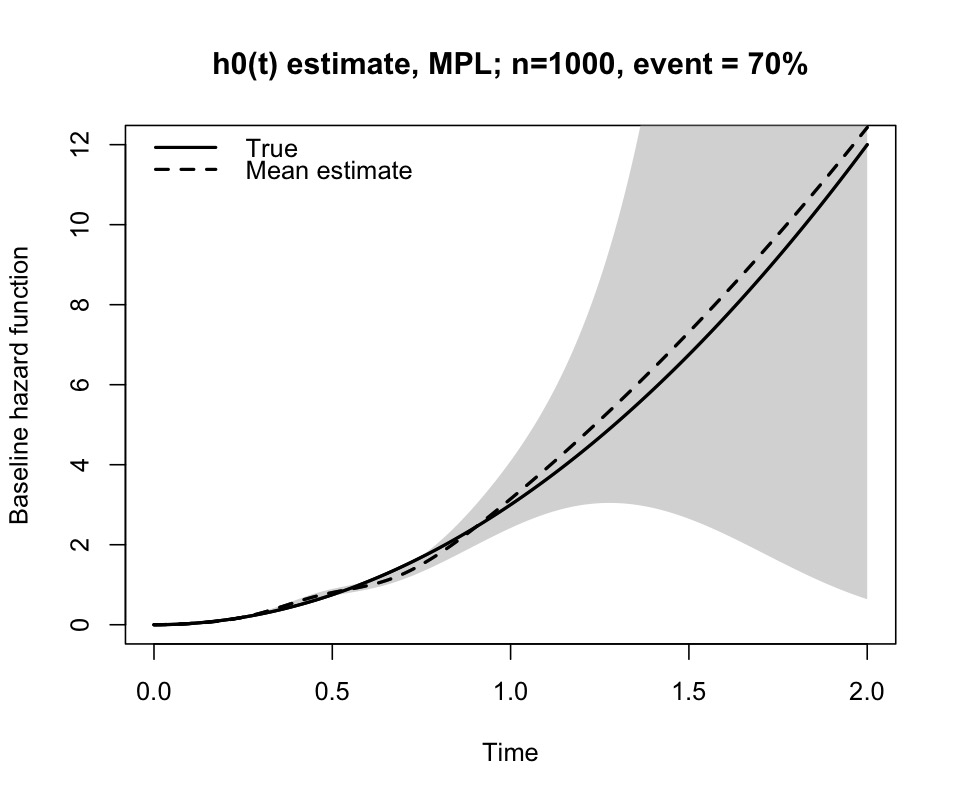}
    \includegraphics[width=0.49\textwidth]{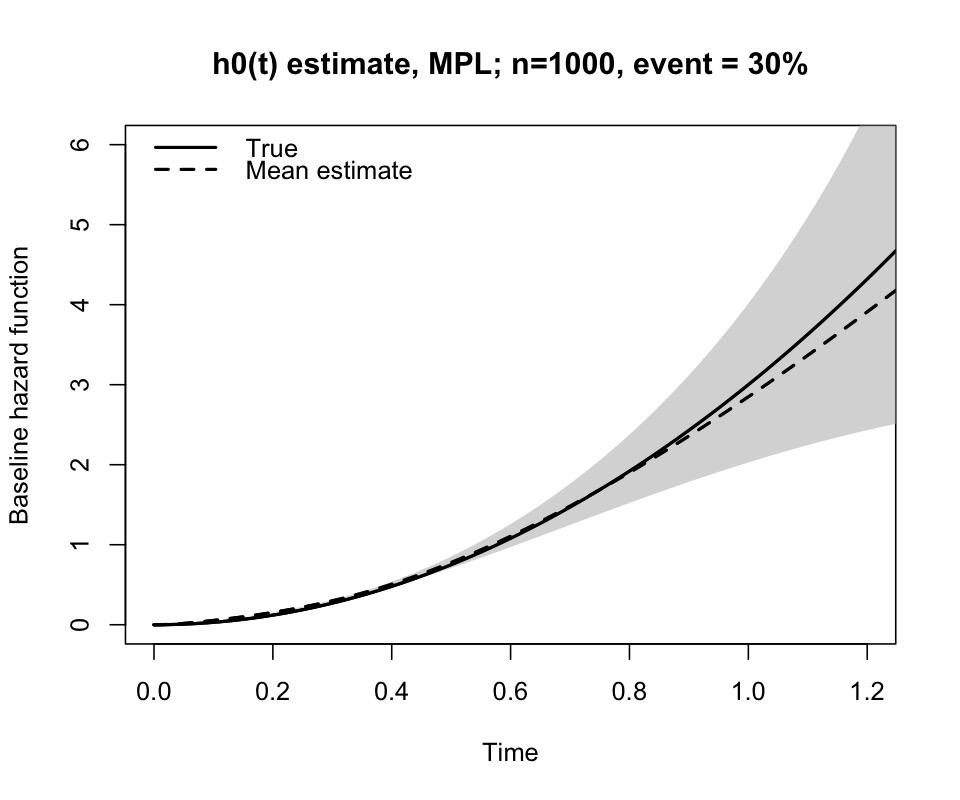}
    \caption{Estimates of the baseline hazard function for $\sigma_{\varepsilon} = 0.2$ and $\bar{n}_i = 5$. The solid line is the true baseline hazard function, the dashed line is the mean estimate and the grey area represents the asymptotic $95\%$ coverage probability.}
    \label{fig:right_cens_h0t_3}
\end{figure}

\begin{figure}
    \centering
    \includegraphics[width=0.49\textwidth]{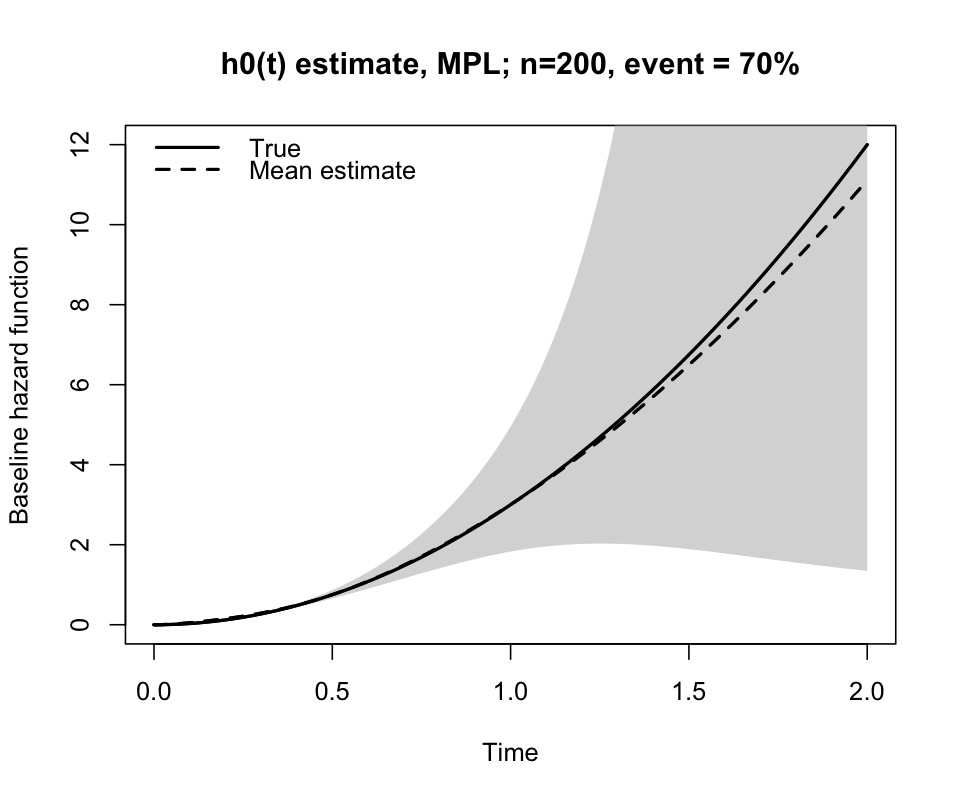}
    \includegraphics[width=0.49\textwidth]{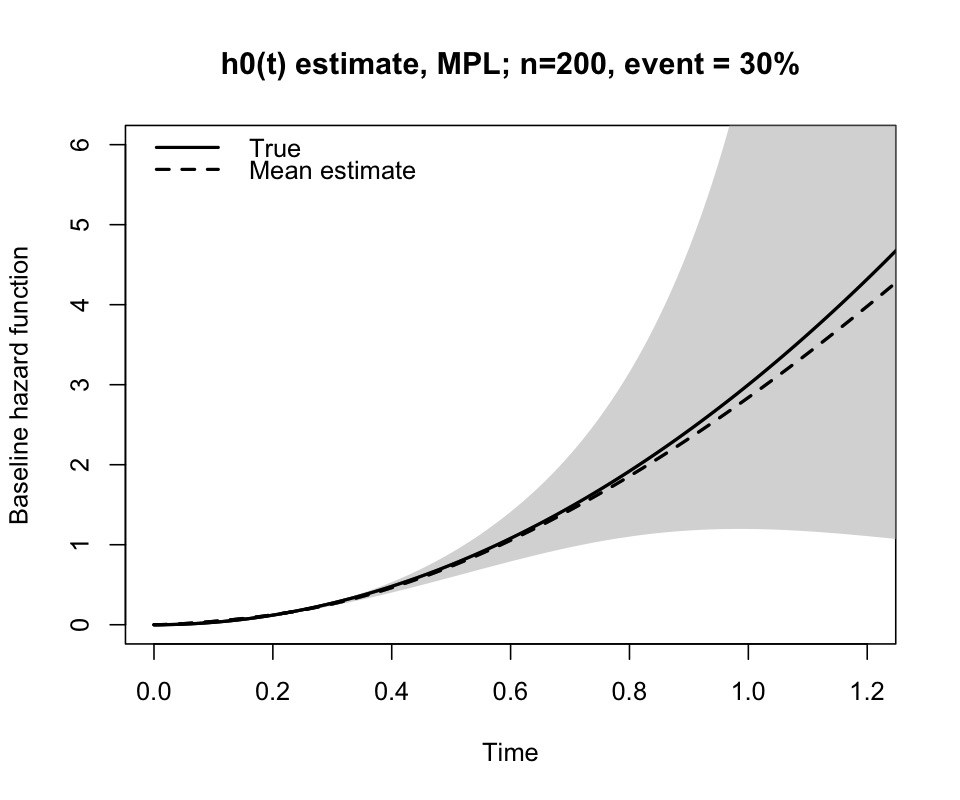}
    \includegraphics[width=0.49\textwidth]{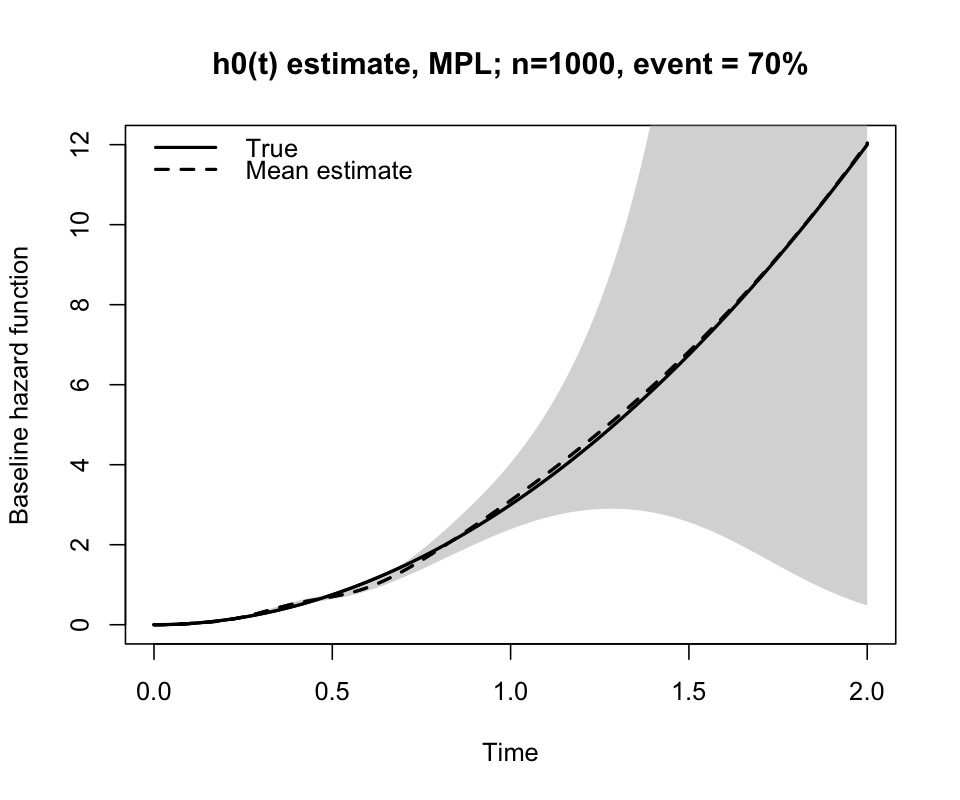}
    \includegraphics[width=0.49\textwidth]{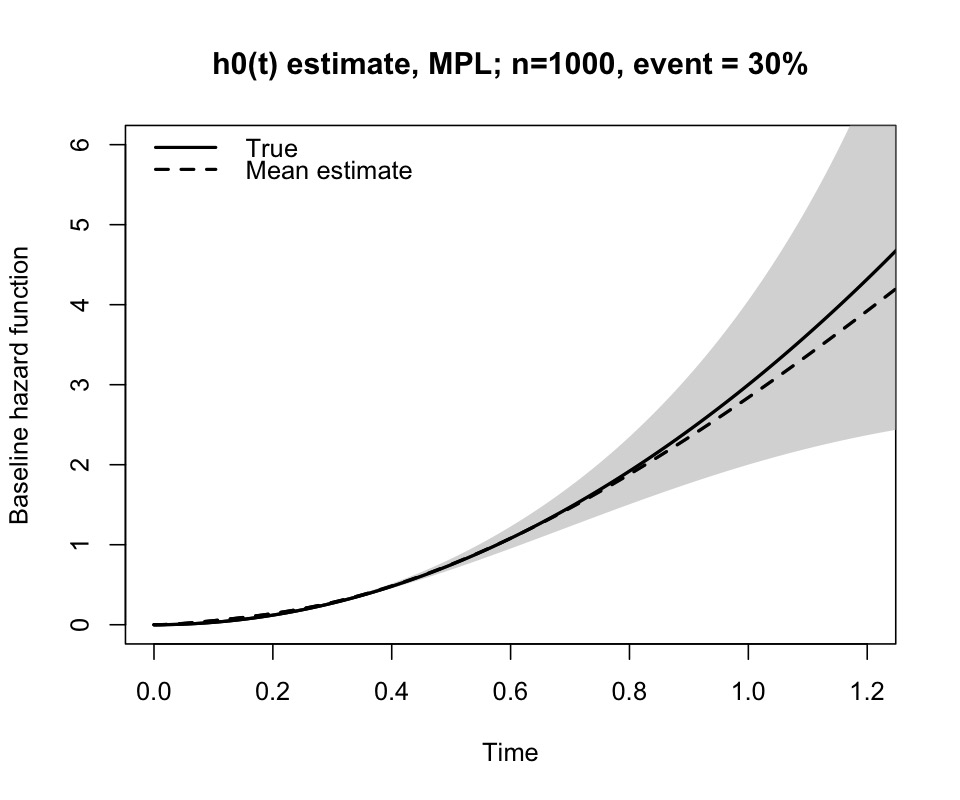}
    \caption{Estimates of the baseline hazard function for $\sigma_{\varepsilon} = 0.2$ and $\bar{n}_i = 20$. The solid line is the true baseline hazard function, the dashed line is the mean estimate and the grey area represents the asymptotic $95\%$ coverage probability.}
    \label{fig:right_cens_h0t_4}
\end{figure}

\begin{table}[]
    \scriptsize
    \centering
    \begin{tabular}{cc|cc|cc|cc|cc}
         \hline
         & & \multicolumn{4}{c}{True $\sigma_{\varepsilon} = 0.05$; event proportion $70\%$} & \multicolumn{4}{c}{True $\sigma_{\varepsilon} = 0.05$; event proportion $30\%$}\\
         \hline
         & $(n, \bar{n}_i)$ & $(200, 5)$ & $(200, 20)$ & $(1000, 5)$ & $(1000, 20)$ & $(200, 5)$ & $(200, 20)$ & $(1000, 5)$ & $(1000, 20)$\\
         \hline
         Bias($\sigma_{\varepsilon}$) & MPL & 0.001 & 0.001 & 0.001 & 0.001 & 0.001 & 0.001 & 0.001 & 0.001\\
         & JM & 0.067 & 0.097 & 0.102 & 0.097 & 0.055 & 0.076 & 0.083 & 0.076\\
         Bias($\sigma_{\kappa_1}$) & MPL & -0.005 & -0.003 & -0.002 & 0.001 & -0.005 & -0.003 & -0.003 & -0.002\\
         & JM & 0.006 & 0.027 & 0.012 & 0.029 & 0.008 & 0.023 & 0.012 & 0.024\\
         Bias($\sigma_{\kappa_2}$) & MPL & -0.018 & -0.005 & -0.020 & -0.006 & -0.030 & -0.006 & -0.037 & -0.007\\
         & JM & 0.011 & 0.079 & 0.023 & 0.089 & 0.025 & 0.134 & 0.062 & 0.158\\
         MISE($h_0(t)$) & MPL & 0.280 & 0.275 & 0.266 & 0.408 & 0.353 & 1.213 & 0.400 & 0.406 \\
         & JM & 0.335 & 0.333 & 0.252 & 0.396 & 1.721 & 6.829 & 2.722 & 3.039 \\
         \hline
         & & \multicolumn{4}{c}{True $\sigma_{\varepsilon} = 0.2$; event proportion $70\%$} & \multicolumn{4}{c}{True $\sigma_{\varepsilon} = 0.2$; event proportion $30\%$}\\
         \hline
         & $(n, \bar{n}_i)$ & $(200, 5)$ & $(200, 20)$ & $(1000, 5)$ & $(1000, 20)$ & $(200, 5)$ & $(200, 20)$ & $(1000, 5)$ & $(1000, 20)$\\
         \hline
         Bias($\sigma_{\varepsilon}$) & MPL & 0.004 & 0.001 & 0.003 & 0.001 & 0.013 & 0.001 & 0.011 & 0.001 \\
         & JM & 0.009 & 0.035 & 0.029 & 0.037 & 0.006 & 0.024 & 0.019 & 0.025\\
         Bias($\sigma_{\kappa_1}$) & MPL & -0.021 & -0.011 & -0.021 & -0.009 & -0.014 & -0.009 & -0.012 & -0.007\\
         & JM & 0.004 & 0.026 & 0.007 & 0.021 & 0.005 & 0.022 & 0.011 & 0.022 \\
         Bias($\sigma_{\kappa_2}$) & MPL & -0.193 & -0.044 & -0.181 & -0.040 & -0.391 & -0.081 & -0.368 & -0.073\\
         & JM & -0.001 & 0.067 & 0.015 & 0.072 & 0.009 & 0.109 & 0.051 & 0.125\\
         MISE($h_0(t)$) & MPL & 0.189 & 0.224 & 0.181 & 0.089 & 0.180 & 0.281 & 0.296 & 0.310 \\
         & JM & 0.885 & 0.769 & 0.121 & 0.123 & 1.378 & 1.208 & 2.780 & 2.468 \\
         \hline
    \end{tabular}
    \caption{Additional simulation results, for variance components (standard error of random effects and measurement error distributions) and baseline hazard function from MPL and JM methods for right censored data (Study 1).}
    \label{tab:var_comp_stdy1}
\end{table}

\subsection{Study 2a (interval censoring)}

Like the results from Study 1, the MPL estimates of the standard deviation 
of the measurement error distribution, $\sigma_{\varepsilon}$, consistently have very small bias regardless of sample size, number of longitudinal observations, and censoring proportions, and this bias is consistently smaller for the MPL method than for the midpoint imputation method (Table \ref{tab:var_comp_stdy2}). The bias in the estimates of the variance of the random effects distributions for the MPL method is also small and decreases with larger numbers of observations of the longitudinal trajectory.

\begin{table}[]
    \scriptsize
    \centering
    \begin{tabular}{cc|cc|cc}
         \hline
         & & \multicolumn{4}{c}{True $\sigma_{\varepsilon} = 0.1$; right proportion $30\%$} \\
         \hline
         & $(n, \bar{n}_i)$ & $(200, 5)$ & $(200, 20)$ & $(1000, 5)$ & $(1000, 20)$ \\
         \hline
         Bias($\sigma_{\varepsilon}$) & MPL & 0.001 & 0.001 & 0.001 & -0.001\\
         & JM-mid & 0.004 & 0.018 & 0.016 & 0.020  \\
         Bias($\sigma_{\kappa_1}$) & MPL & -0.020 & -0.006 & -0.017 & -0.005  \\
         & JM-mid & 0.001 & 0.021 & 0.004 & 0.018 \\
         Bias($\sigma_{\kappa_2}$) & MPL & -0.035 & -0.011 & -0.031 & -0.007  \\
         & JM-mid & -0.001 & 0.025 & 0.002 & 0.020 \\
         MISE($h_0(t)$) & MPL & 0.019 & 0.065 & 0.030 & 0.002\\
         & JM-mid & 0.152 & 0.101 & 0.124 & 0.129  \\
         \hline
         \hline
    \end{tabular}
    \caption{Additional simulation results, for variance components (standard error of random effects and measurement error distributions) and baseline hazard function from MPL and JM-midpoint imputation methods for interval censored data (Study 2a).}
    \label{tab:var_comp_stdy2}
\end{table}

\subsection{Study 2c (interval censoring)}

\begin{table}[h!]
    \centering
    \begin{tabular}{l | ccc | l | cc  }
         \hline
         & \multicolumn{6}{c}{$n = 1000$, $\Bar{n}_i = 5$, $\pi^I = 0.40$, $\pi^R = 0.25$} \\
         \hline
         & \multicolumn{2}{c}{Cox Regression parameters} & & & \multicolumn{2}{c}{Longitudinal trajectories} \\
         \hline
         & $\beta_1$ & $\beta_2$ & $\gamma$ &   & $\mu(t)$ & $z_i(t)$ \\
         \hline
         Bias & -0.004 & -0.033 & 0.088 & MISE & 0.00002 & 0.02070\\
         SE &  0.043 & 0.085 & 0.390 & & \\
         & (0.045) & (0.092) & (0.563)  & & \\
         CP & 0.92 &  0.89 & 0.75  & & \\
         & (0.92) & (0.93) & (0.88)  & & \\
         \hline
         & \multicolumn{6}{c}{$n = 1000$, $\Bar{n}_i = 20$, $\pi^I = 0.40$, $\pi^R = 0.25$} \\
         \hline
         & \multicolumn{2}{c}{Cox Regression parameters} & & & \multicolumn{2}{c}{Longitudinal trajectories} \\
         \hline
         & $\beta_1$ & $\beta_2$ & $\gamma$ &   & $\mu(t)$ & $z_i(t)$ \\
         \hline
         Bias & -0.002 & -0.038 & 0.029 & MISE & 0.00002 & 0.00073  \\
         SE & 0.043 & 0.085 & 0.386  &  & \\
         & (0.047) & (0.089) & (0.555) &  & \\
         CP & 0.95 & 0.92 & 0.76 &  &  \\
         & (0.96) & (0.93) & (0.90) &  & \\
         \hline
    \end{tabular}
    \caption{Regression parameter simulation results for Study 2c; $n$ refers to the total sample size, $\bar{n}_i$ refers to the average number of longitudinal observations, $\pi^I$ refers to the proportion of interval-censored observations and $\pi^R$ refers to the proportion of right censored observations.}
    \label{study2b_results}
\end{table}

\end{document}